    \newtheorem{theorem}{Theorem}[section]
    \newtheorem{lemma}[theorem]{Lemma}
    \newtheorem{definition}[theorem]{Definition}
    \newtheorem{corollary}[theorem]{Corollary}
\newcommand{\diag}{\mathop{\mathrm{diag}}}
\title{Duals of Orphan-Free Anisotropic Voronoi Diagrams are Triangulations}
\author{Guillermo D. Canas$^{\dagger}$,   Steven J. Gortler$^{\ddagger}$ \\
\small  $\dagger$ - CBCL, McGovern Institute,  
Massachusetts Institute of Technology\\
%\small  $\sharp$ -  Istituto Italiano di Tecnologia\\
\small  $\ddagger$ - School of Engineering and Applied Sciences, 
Harvard University\\
\small Email:  \tt guilledc@mit.edu, sjg@seas.harvard.edu}
\date{}
\begin{document}

\maketitle

\begin{abstract}
Given an anisotropic Voronoi diagram, we address the fundamental question of when its dual is embedded. 
We show that, by requiring only that the primal be orphan-free (have connected Voronoi regions), 
its dual is always guaranteed to be an embedded triangulation. 
%We show that duals of anisotropic Voronoi diagrams that are orphan-free (have connected Voronoi regions) %appropriately defined anisotropic Voronoi diagrams 
%are always embedded triangulations. %, under mild conditions on the underlying metric.
Further, the primal  diagram and its dual have properties that parallel those of ordinary Voronoi diagrams:
the primal's vertices, edges, and faces are connected, and the dual triangulation has a simple, closed boundary. 
Additionally, if the underlying metric has bounded anisotropy (ratio of eigenvalues), %certain, mild conditions on the underlying metric, we show that 
the dual is guaranteed to triangulate the convex hull of the sites.   
These results apply to the duals of anisotropic Voronoi diagrams of \emph{any}
set of sites, so long as their Voronoi diagram is orphan-free. 
By combining this general result with existing conditions for obtaining orphan-free anisotropic Voronoi diagrams, 
a simple and natural condition for a set of sites to form an embedded anisotropic Delaunay triangulation follows. 
\end{abstract}

\newpage

\section{Introduction}

Voronoi diagrams and their dual Delaunay triangulations are fundamental constructions with
numerous associated guarantees, and extensive application in
practice~\cite{Delaunay1934,Edelsbrunner}. 
At their heart is the use of a distance between points, which in the original
version is taken to be Euclidean. 
This suggests that, by considering %strict 
generalizations of the Euclidean
distance, it may be possible to obtain variants which can be well-suited to
a wider range of applications.  

Attempts in this direction have been met with some success. 
Power diagrams~\cite{power} generalize Euclidean distance by associating %introducing a
a {bias-term} %associated 
to each %vertex, or 
site. The duals of these diagrams
are %always 
guaranteed to be embedded triangulations, in any number of
dimensions. 
Although this is a strict generalization of Euclidean distance, it is a somewhat 
limited one. The effect of the bias term is to locally enlarge or shrink the
region associated to each site, loosely-speaking ``equally in every
direction". It allows some freedom in choosing local scale, with no
preference for specific directions. % is encoded in the distance. 

Another way to generalize Voronoi diagrams is to endow Euclidean space with a
continuously varying Riemannian metric, and use its associated (geodesic) distance 
%consider the geodesic distance under this metric as distance 
in the definition of the Voronoi diagram. 
This generalization carries a significant
amount of freedom, allowing for local scale and ``directionality" to be freely
specified at each point.

Despite being of potentially great interest, this latter approach has faced several
obstacles. 
The dual of such an %orphan-free 
anisotropic Voronoi diagram is, in general, 
\emph{not} an embedded triangulation, even if the primal %original diagram 
is orphan-free %(a Voronoi diagram is orphan-free if 
(the Voronoi regions are connected), 
and may produce element inversions and edge crossings. 
An additional obstacle is that the geodesic distance % with respect to a Riemannian metric
is extremely expensive to compute in practice, rendering traditional methods
for constructing Voronoi diagrams impractical. 

Two notable attempts to resolve these problems have been proposed. 
Independently, Labelle and Shewchuk~\cite{LS}, and Du and Wang~\cite{DW} propose
two efficient approximations of geodesic distance from a site to any point in the
domain. 
Although their associated Voronoi diagrams are, in general, no longer orphan-free, 
Labelle and Shewchuk show that a set of sites exists %that results in 
with an orphan-free diagram, whose dual is embedded, in two dimensions. % with straight edges. 
%(without edge crossing or element inversions). 
They accomplish this by proposing an iterative site-insertion algorithm
that, for any given metric, constructs one such set of sites. 
Some recent work making use of the above definitions are~\cite{Boissonnat:2008:LUA:1377676.1377724,BOISSONNAT-2007-488446}.

In this paper, we show that if a set of sites produces an orphan-free 
anisotropic Voronoi diagram, using the definition of~\cite{DW}, 
then its dual is always an embedded triangulation (or a polygonal mesh with convex faces in general), % triangulation, %simplicial complex, 
in two dimensions (Thm.~\ref{th:final}). 
This effectively states that, regardless of the sites' positions, if the primal %diagram 
is well-behaved, then the
dual is as well. 
Further, in a way that parallels the ordinary Delaunay case, the dual has no
degenerate elements (Lem.~\ref{lem:degen}), its elements (vertices, edges, faces) are unique (Cor.~\ref{uniqueVD}), and, under mild assumptions on the %underlying
metric, is guaranteed to triangulate the convex hull of the sites (Thm.~\ref{thm:boundary}). 
Note that, while~\cite{LS} prove a property of the output of an % proposed 
algorithm, the results in this paper are fundamental properties of the diagrams themselves 
(independent of how the orphan-freedom of the primal was obtained).

In practice, we may combine our results with those of~\cite{avd}  
%(which show that, if the sites form an (asymmetric) $\epsilon$-net, for sufficiently small $\epsilon$, 
%then their % resulting 
%anisotropic Voronoi diagram is orphan-free), 
to conclude that duals of anisotropic Voronoi diagrams of appropriate $\epsilon$-nets 
are always embedded triangulations (Cor.~\ref{cor:enet}). 
This may be particularly useful in (asymptotically-optimal) function approximation applications, 
where we are often interested in 
constructing the anisotropic Delaunay triangulations of appropriate $\epsilon$-nets~\cite{enets,GruberOQ}, and was the initial motivation for the current work. 
Finally, we note that, as discussed in Sec.~\ref{sec:implementation}, in practice, 
algorithms for constructing anisotropic diagrams of the type discussed here may have a more numerical flavor  
(i.e.\  front-propagation, fast marching methods), in contrast with the more combinatorial nature of other Voronoi diagrams. % (see Sec.~\ref{sec:implementation}). 

\section{Setup}\label{sec:setup}

Given a finite set $V\subset\mathbb{R}^2$ of %points, or 
sites on the plane, 
a Voronoi % Du/Wang %anisotropic Voronoi 
diagram %(cite), or DW diagram for short, is a cell decomposition of
decomposes Euclidean space into regions, each distinguished by the site
its points are closest to. 
%The fact that $V$ is a set is important, because it means that there are no coincident sites. 
The notion of closeness is defined as follows:
%This definition relies on a specific notion of `closeness'. 
%To clarify this part of the definition, 
consider a continuous metric (in coordinates: $Q:\mathbb{R}^2\rightarrow\mathbb{R}^{2\times 2}$, 
symmetric, positive definite) 
over two-dimensional Euclidean space and,
following~\cite{DW}, %the notation of (cite), 
define the (asymmetric) ``distance" between a site
($v\in V$) and point ($p\in\mathbb{R}^2$) as
\[ D(v,p) = \left[(p-v)^t Q_p (p-v)\right]^{1/2} \]

The Voronoi region of site $v$ is the set of points no further from $v$ than
from any other site:
\[ R(v) = \{p\in\mathbb{R}^2 : D(v,p) \le D(w,p), \forall w\in V\} \]
Note that, because $V$ is a set, and not a multiset, there are no coincident sites.

This definition of anisotropic Voronoi diagram %follows (cite:DW), and 
results in diagrams that take the usual shape, but
whose regions may have curved boundaries, and tend to be elongated along
certain directions, depending on the metric $Q$. % (Fig 1). 

Following the notation of~\cite{LS}, we say that a diagram is
\emph{orphan-free} if its regions $R(v)$ are connected, for all $v\in V$
(equivalently: each \emph{cell}, or connected component of a Voronoi region, 
{contains} its generating site).
%Alternatively: 
%An alternative definition is: in an orphan-free Voronoi diagram, each \emph{cell}, or connected component of a Voronoi region, 
%{contains} its generating site. % of the region. 

An important distinction in this construction is that  the ``interfaces" 
in-between Voronoi regions do not, in general, have null measure 
(and points equidistant to three or more sites are not always isolated), 
unlike %those of  ordinary diagrams~\cite{Delaunay1934}, 
geodesic-distance diagrams~\cite{LL2000} 
(see~\cite{enets} Lemma 5.2), and those of~\cite{LS}. 

Note that, given one such degenerate diagram, any interior point $p$ of a Voronoi edge
$R(v)\cap R(w)$
%(equidistant to sites $v,w$) with 
of non-zero measure 
%equidistant to sites $v_i,v_j$, and with non-zero measure, 
%of points closest to sites $v_i,v_j$, 
has an %(arbitrarily small) 
open neighborhood $N_p\subset R(v)\cap R(w)$. 
%$N_p\ni p$, $N_p\subset \tilde{E}_{w_i,w_j}$. 
By adding to $Q$ an arbitrarily
small perturbation supported on $N_p$, an orphan is created.
That is, with respect to the metric, any degenerate diagram is
arbitrarily close to a diagram that has orphans, and therefore the
non-degeneracy requirement is only slightly more restrictive than the
orphan-freedom one.

We begin with three simple lemmas, proved in Appendix A, %which are used later on, and 
which provide an introduction to some of the techniques used in the sequel. 
The first is a simple extension of Lemma 2.1 of~\cite{DW}, and the second follows directly from the continuity of $Q$ and the fact that every site is {strictly} closer to itself than to all other sites. 

\begin{lemma}\label{lem:midpoint}
	Given two sites $v,w\in V$, $v\ne w$, the only point equidistant to $v,w$ in their supporting
line is the midpoint $(v+w)/2$. 
\end{lemma}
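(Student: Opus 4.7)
The plan is to parameterize the supporting line of $v$ and $w$ and exploit the fact that both distances $D(v,p)$ and $D(w,p)$ are measured using the \emph{same} metric tensor $Q_p$ evaluated at the point $p$. This is the key structural feature of the definition taken from~\cite{DW}: the metric depends on $p$, not on the site, so along the line the anisotropy contributes a common factor that cancels.

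Concretely, I would write any point on the supporting line as $p = v + t(w-v)$ for $t \in \mathbb{R}$, so that $p-v = t(w-v)$ and $p-w = (t-1)(w-v)$. Plugging into the definition of $D$, both $D(v,p)$ and $D(w,p)$ factor as $|t|\,[(w-v)^t Q_p (w-v)]^{1/2}$ and $|t-1|\,[(w-v)^t Q_p (w-v)]^{1/2}$ respectively. Since $Q_p$ is symmetric positive definite and $v\neq w$, the quadratic form $(w-v)^t Q_p (w-v)$ is strictly positive, so it can be divided out. The equidistance condition $D(v,p)=D(w,p)$ then reduces to the purely scalar equation $|t|=|t-1|$, whose unique solution is $t=1/2$, giving $p=(v+w)/2$.

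There is essentially no obstacle here; the whole content of the lemma is the observation that the $p$-dependent metric factor is shared between the two distances along the line, so the problem collapses to the Euclidean case on a single coordinate. I would close by noting this explicitly, since the same cancellation trick will presumably be reused later (e.g., in further line-based arguments in the paper).
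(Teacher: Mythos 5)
Your proposal is correct and is essentially the paper's own proof: both parameterize the supporting line, observe that the common factor $(v-w)^t Q_p (v-w)>0$ cancels from the equidistance condition, and reduce to the scalar equation with unique solution $1/2$. No gaps.
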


\begin{lemma}\label{lem:interior}
    Every site is an interior point of its corresponding Voronoi region. 
\end{lemma}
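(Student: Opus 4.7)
The plan is to show that at $p=v$ the site is strictly closer to itself than to any other site, then propagate this strict inequality to a full open neighborhood of $v$ by continuity.

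First I would evaluate $D(v,v) = [(v-v)^t Q_v (v-v)]^{1/2} = 0$. For any other site $w \in V$ with $w \ne v$, I would compute $D(w,v) = [(v-w)^t Q_v (v-w)]^{1/2}$, which is strictly positive because $Q_v$ is (symmetric) positive definite and $v-w \ne 0$. Hence $D(v,v) < D(w,v)$ for every $w \ne v$.

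Next I would invoke continuity. Since $Q$ is continuous as a function on $\mathbb{R}^2$ and the quadratic form $(p-u)^t Q_p (p-u)$ depends continuously on $p$ for any fixed $u$, the function $p \mapsto D(w,p) - D(v,p)$ is continuous on a neighborhood of $v$ (note $D(v,\cdot)$ is also continuous at $v$ even though the square root is only H\"older there, because it is continuous everywhere as a composition). Since this difference is strictly positive at $p=v$, there is an open neighborhood $N_w$ of $v$ on which $D(v,p) < D(w,p)$.

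Finally, because $V$ is finite, the intersection $N = \bigcap_{w \in V \setminus \{v\}} N_w$ is a finite intersection of open neighborhoods of $v$, hence itself an open neighborhood of $v$. On $N$ we have $D(v,p) \le D(w,p)$ for every $w \in V$, so $N \subset R(v)$, which exhibits $v$ as an interior point of $R(v)$. I do not anticipate any real obstacle here; the only subtlety worth flagging is the use of finiteness of $V$ to conclude that the intersection remains open, and the reliance on strict positive definiteness of $Q_v$ (rather than merely positive semidefiniteness) to get the strict inequality at the base point.
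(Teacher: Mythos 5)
Your proof is correct and follows exactly the argument the paper intends: the paper does not write out a proof of this lemma in Appendix A, stating only that it ``follows directly from the continuity of $Q$ and the fact that every site is strictly closer to itself than to all other sites,'' which is precisely the strict-inequality-plus-continuity-plus-finiteness argument you spell out. No issues.
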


\begin{lemma}\label{lem:sc}
Every Voronoi region %$R\subset\mathbb{R}^2$ of 
of an orphan-free anisotropic Voronoi %DW 
diagram in 
$\mathbb{R}^2$ %two dimensions 
is simply connected. 
\end{lemma}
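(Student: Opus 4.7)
My plan is to proceed by contradiction. Suppose some Voronoi region $R(v)$ is not simply connected. Since $R(v)$ is a closed, connected subset of $\mathbb{R}^2$, its complement must have at least one bounded connected component $U$ (a \emph{hole} of $R(v)$); let $U_\infty$ denote the unbounded component. Two quick preliminaries. First, no site $w\ne v$ can lie in $R(v)$: otherwise $D(v,w)\le D(w,w)=0$, which by positive definiteness of $Q$ forces $w=v$, a contradiction. Thus every other site lies in one of the components of $\mathbb{R}^2\setminus R(v)$. Second, by Lemma~\ref{lem:interior}, every site is interior to its own region, so any site lying in $U$ has a whole neighborhood inside $U$ that belongs to its region.

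The core of the proof is to show that the existence of $U$ is incompatible with the orphan-freedom of all the regions. Pick any $p\in U$ and the site $w$ with $p\in R(w)$; this $w$ lies either in $U$ or in $U_\infty$. In either case, I would consider the Euclidean segment $\overline{vw}$. By Lemma~\ref{lem:midpoint}, its midpoint $m=(v+w)/2$ is the unique point of $\overline{vw}$ with $D(v,\cdot)=D(w,\cdot)$; hence $D(v,\cdot)<D(w,\cdot)$ strictly on $\overline{vm}\setminus\{m\}$, and $D(w,\cdot)<D(v,\cdot)$ strictly on $\overline{mw}\setminus\{m\}$. The segment starts at $v\in R(v)$ and must enter the hole $U$, either because $w\in U$, or because the connected set $R(w)$ has to travel from $w\in U_\infty$ to $p\in U$ across $\partial R(v)$. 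Tracking along $\overline{vw}$ which Voronoi region contains each point, and combining the strict inequalities above with the orphan-free (i.e.\ connected) property of every $R(u)$, one is forced to conclude that some $R(u)$ must split into two disjoint pieces, contradicting orphan-freedom.

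The main obstacle I anticipate is the bookkeeping in the last step. Lemma~\ref{lem:midpoint} constrains only the two sites $v$ and $w$, yet along $\overline{vw}$ the closest site may well be a third site $u$, so one must understand how these intermediate regions interleave. Organizing the possible sequences of Voronoi regions visited along the segment so that the strict inequalities above, together with the connectedness of every $R(u)$, genuinely forbid the hole $U$ is the delicate part. The remaining ingredients---the planar topology of the complement of a closed connected set, positive definiteness of $Q$, and Lemma~\ref{lem:interior}---are routine.
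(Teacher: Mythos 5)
Your setup (contradiction, a bounded complementary component $U$ of $R(v)$, Lemma~\ref{lem:interior} to place sites in the interiors of their regions, Lemma~\ref{lem:midpoint} as the eventual source of the contradiction) matches the paper's, but the decisive step is missing, and you say so yourself: you do not explain how the strict inequalities along $\overline{vw}$, which compare only $v$ and $w$, rule anything out when third sites' regions interleave along the segment, nor how connectedness of the other regions is actually violated. As sketched, the argument does not close: the segment $\overline{vw}$ can pass through many regions, a connected $R(w)$ can legitimately meet both $U$ and the unbounded component by passing through the closed bisector set $R(v)\cap R(w)$, and Lemma~\ref{lem:midpoint} gives no information about where a third region $R(u)$ may sit on the segment. ``Some $R(u)$ must split into two pieces'' is the conclusion you want, not something the stated inequalities deliver.

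The paper closes exactly this gap with a site-removal trick that you should be able to reconstruct: the hole is a union of cells, each containing its generating site by orphan-freedom, so write $I=\bigcup_{i=1}^m R(w_i)$ and delete the sites $w_2,\dots,w_m$ from $V$. In the new diagram the region of $w_1$ satisfies $R'(w_1)\subseteq I$ (no point gains a strictly closer site by deletions among $\{w_i\}$) and is nonempty by Lemma~\ref{lem:interior}, so $R'(v)$ still surrounds a bounded hole consisting of the \emph{single} region $R'(w_1)$, whose entire boundary now consists of points equidistant to $v$ and $w_1$ only. The line through $v$ and $w_1$ then must cross that boundary at two distinct points (since $w_1$ is interior to the bounded set $R'(w_1)$ and $v$ is interior to $R'(v)$), giving two distinct points on the supporting line of $v,w_1$ equidistant to both, contradicting Lemma~\ref{lem:midpoint}. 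This reduction to a two-region configuration is precisely what makes the midpoint lemma applicable; without it, your ``bookkeeping'' obstacle is a genuine one, not a routine verification.
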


\noindent{\bf Primal diagram}. 
The Voronoi diagram is a collection of Voronoi regions $R(v)$, one for every site in $V$, as well as a structure 
induced by the sets of points closest and equidistant to two or more sites. 
Let $\tilde{P}$ to be the primal Voronoi diagram embedded on the sphere
$\mathbb{S}^2$ by stereographically projecting the plane onto the 
punctured sphere and
completing it with a ``point at infinity" $p_\infty$.
$\tilde{P}$ includes not just regions, but also embedded edges $\tilde{E}_{v w}$,
which are \emph{connected} sets of points closest and equidistant to two sites $v,w$, as well
as vertices (points closest and equidistant to three or more sites). 
Since we are assuming that sets of  equidistant points to two sites have
null measure, and that
points equidistant to three or more sites are isolated, 
we can consider the abstract mesh  $P=(V_p,E_p,F_p)$ with structure
derived from 
$\tilde{P}$, where $V_p,E_p,F_p$ are the vertices, edges, and faces of $P$.

A simple induction argument reveals that every edge in $E_p$ connects two
vertices in $V_p$. 
To see this, start with an orphan-free diagram with two sites. The only edge
of $\tilde{E}$ is the unbounded set of points equidistant to the sites, passing through
their midpoint (Lem.~\ref{lem:midpoint}), and with endpoints at $p_\infty$. 
The insertion of an additional site can only split existing edges of
$\tilde{E}$ at points
where two edges cross (a vertex by definition), and thus the
original edge is split into pieces whose endpoints are now either $p_\infty$ or 
an edge crossing (a vertex). 
%point where edges cross (a vertex).

Clearly, $\tilde{P}$ is a planar embedded diagram since two edges %in $\tilde{E}$ 
cannot meet except at vertices:  
any point $p$ where edge $D(u,p)=D(v,p)$ crosses 
edge $D(v,p)=D(w,p)$ is equidistant to three sites, $u,v,w$, and thus a vertex. 
%satisfies $D(v_i,p)=D(v_j,p)=D(v_k,p)$, and is thus, by definition, a vertex (equidistant to three sites $v_i,v_j,v_k$). 
Therefore $P$ is planar. \\

%For the sake of conciseness in the proofs, we define: %an open ellise centered at $p$ and passing through $m$ as:
%\begin{definition}
%	The open ellipse centered at $p$ with $m$ in its boundary is %\in\partial\theta_p(m)$ is
%	\[\theta_p(m) = \{x\in\mathbb{R}^2 : D(x,p) < D(m,p)\} =
%\{x\in\mathbb{R}^2 : (x-p)^t Q_p (x-p) < (m-p)^t Q_p (m-p)\}\]
%\end{definition}

%\subsection{Dual of an orphan-free Voronoi diagram}\label{sec:dual}
\noindent{\bf Dual of an orphan-free Voronoi diagram}.
%Given the Voronoi diagram $\tilde{P}$ and its abstract structure  $P$, 
The dual of the primal  mesh structure $P$ is another abstract 
%The dual of the mesh structure  $P$ of the Voronoi diagram $\tilde{P}$ is a
mesh ${G}=(V,{E},F)$, with one vertex per site, 
and edges connecting adjacent regions of $\tilde{P}$. % of the diagram. 
Its face structure is derived from that of $P$ by duality   % and, 
and, in particular, $(V,E)$ is planar, since $(V_p,E_p)$ is planar. 
%in particular, is planar 
%Because $P$ is planar then, by duality, $G$ is a planar. 
Using the embedded primal $\tilde{P}$, we can define an embedding $\tilde{G}$ of $G$ 
%and therefore can be embedded with curved edges 
on the plane, with vertices coinciding with sites (and curved edges). 
%. We call this
%embedding $\tilde{G}$ (where the vertices are the sites).  
Although $\tilde{P}$ was defined on the sphere, we embed $\tilde{G}$ on
the plane instead, and therefore we exclude (from both $G$ and $\tilde{G}$) the unbounded face dual to the point at infinity $p_\infty$. % from $\tilde{G}$ (and from $G$.)  

Since $(V,E)$ is planar, 
% by virtue
%of being the dual of a planar graph, and % As constructed, this graph 
$G$ derives the following properties from the primal:
\begin{itemize}
\item[(i)]%When discarding the face containing the point at infinity, 
%Because every primal edge in $P$ is incident to two primal vertices, 
Every edge in $E$ is incident to two dual faces, 
except for \emph{boundary edges} (whose corresponding primal edge in
$\tilde{P}$ connects unbounded regions), which are only incident to one dual face. 
\item[(ii)]To every dual face in $F$, connecting vertices $v_1,\dots,v_m$, corresponds a
primal vertex $c\in R(v_1)\cap\dots\cap R(v_m)$ of $\tilde{P}$, 
which is equidistant to the corresponding sites $v_1,\dots,v_m$. 
\end{itemize}
The property (ii) requires that the Voronoi diagram be orphan-free, as noted in the remark after Thm.~\ref{th:ece}.

Notice that either $\tilde{G}$ or $\tilde{P}$ could be multigraphs (e.g.\ $\tilde{P}$ would be a multigraph if an edge $\tilde{E}_p$ of $\tilde{P}$ has multiple connected components). 
However, the following lemma ensures that both are simple graphs.

\begin{lemma}\label{lem:simple}
%	There are no repeated edges in ${E}$ (${E}$ is a set, rather than a multiset).
	%If not all sites are colinear then 
$\tilde{G}$ and $\tilde{P}$ are simple (have no multi-edges or self-loops). %, planar graphs. 
\end{lemma}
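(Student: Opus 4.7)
The plan is to reduce all four simplicity conditions (multi-edges and self-loops in both $\tilde{G}$ and $\tilde{P}$) to a single core claim: for every pair of sites $v,w\in V$, the intersection $R(v)\cap R(w)$ is connected. Dual self-loops are immediate because $V$ has no repeated sites, so dual edges cannot have matching endpoints. The primal-self-loop and primal-multi-edge cases with mismatched site-pairs reduce to a local argument at the offending primal vertex $c$: such a configuration would force some $R(u)$ to meet $c$ in two disjoint angular sectors, which either contradicts the simple connectedness of $R(u)$ (Lemma~\ref{lem:sc}) via a Jordan-curve argument inside $R(u)$, or collapses to the main claim.

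For the main claim, assume for contradiction that $R(v)\cap R(w) = E_1\sqcup E_2\sqcup\cdots$ has at least two components. Since $R(v),R(w)$ are simply connected (Lemma~\ref{lem:sc}), a Mayer--Vietoris computation gives $H_1(R(v)\cup R(w))\cong\mathbb{Z}^{k-1}$, so $A := R(v)\cup R(w)\subset\mathbb{S}^2$ is topologically a sphere with $k$ open disks removed, and its complement splits into $k$ open disks $C_1,\dots,C_k$. Under the non-degeneracy assumption, each $\mathrm{int}(R(u))$ is a connected open set disjoint from $A$, so a measure argument puts at least one site into every $C_i$.

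Next I would build a Jordan curve $\gamma := E_1\cup\sigma\cup E_2\cup\tau$ on the sphere, where $\sigma\subset\mathrm{int}(R(v))$ joins an endpoint of $E_1$ to an endpoint of $E_2$, $\tau\subset\mathrm{int}(R(w))$ closes the loop on the other side, and both $\sigma,\tau$ are chosen disjoint from the segment $[v,w]$. A short sub-disk analysis --- partitioning each of $R(v),R(w)$ into two pieces by the chord $\sigma$ or $\tau$ and matching these pieces to the complementary disks $C_1,C_2$ --- shows that $\gamma$ separates $v$ from $w$ on $\mathbb{S}^2$. By Lemma~\ref{lem:midpoint}, the supporting line $L$ of $v,w$ meets $R(v)\cap R(w)$ in at most the midpoint $m=(v+w)/2$, so $[v,w]\cap\gamma\subseteq\{m\}$. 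If $m\notin R(v)\cap R(w)$, then $[v,w]\cap\gamma=\emptyset$, contradicting the fact that any path between points on opposite sides of a Jordan curve must cross it. If $m\in R(v)\cap R(w)$, say $m\in E_1$, one crossing is consistent with separation; to close the contradiction I would instead use a second Jordan curve built from $E_2$ alone, closed by simple arcs in $R(v),R(w)$ that (by Lemma~\ref{lem:midpoint}) can be forced to stay on the same open half-plane of $L$ as $E_2$, so that this second curve again separates $v$ from $w$ but is disjoint from $[v,w]$.

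The main obstacle is the last case, $m\in R(v)\cap R(w)$: the second Jordan-curve construction requires exhibiting simple arcs inside $R(v)$ and $R(w)$ connecting the endpoints of $E_2$ while staying in the open half-plane of $L$ containing $E_2$. Existence of such arcs amounts to a half-plane restriction of Lemma~\ref{lem:sc}, plus some bookkeeping around the cyclic order of sites on $\partial R(v)$ and $\partial R(w)$, and is where the bulk of the case analysis will live.
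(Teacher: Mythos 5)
Your reduction to the single claim ``$R(v)\cap R(w)$ is connected'' is the right target (it is exactly the paper's Corollary~\ref{cor:Pij}, and the Jordan-curve setup is shared with the paper's proof), but the way you propose to derive the contradiction has a genuine gap: the separation claim is not only unproven but false in the configuration that actually arises. Consider the natural ``annulus'' picture: a third site $z$ whose region sits between the two components $E_1,E_2$ of $R(v)\cap R(w)$, with $R(v)$ wrapping over the top of $R(z)$ and $R(w)$ under the bottom. A chord $\sigma\subset\mathrm{int}(R(v))$ from $E_1$ to $E_2$ can pass just above $R(z)$, leaving $v$ outside the loop, and likewise $\tau$ can leave $w$ outside; then $\gamma$ encircles only $z$, and $v,w$ lie on the \emph{same} side. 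Nothing in the ``sub-disk analysis'' controls which of the two pieces of $R(v)$ cut off by $\sigma$ contains $v$, so the midpoint-based contradiction (via Lemma~\ref{lem:midpoint}) never gets off the ground, and the admittedly deferred case $m\in R(v)\cap R(w)$ is not the only problem. The paper avoids this entirely: its Jordan curve passes \emph{through} $v$ and $w$ (via paths $v\to p$, $w\to p$, $v\to q$, $w\to q$ inside the respective regions), so orphan-freedom forces a site $z$ into the enclosed region $I$, and the contradiction comes from showing $I$ cannot contain a site --- every point of $I$ lies either on a segment $\overline{vr}$ ($r\in R(v)$) or $\overline{wr}$ ($r\in R(w)$), where a site $z$ would satisfy $D(z,r)<D(v,r)$ contradicting $r\in R(v)$, or inside one of the triangles $vwp$, $vwq$, which sit inside the empty ellipse $\theta_p(v)$ resp.\ $\theta_q(v)$. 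That argument needs no separation statement and no midpoint lemma.

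Two further gaps: (i) you treat $E_1,E_2$ as arcs with endpoints and invoke Mayer--Vietoris to describe $R(v)\cup R(w)$ as a sphere with $k$ disks removed; the components of $R(v)\cap R(w)$ are merely connected sets, and the regions are only known to be simply connected, so this surface-with-boundary structure is not available without substantial extra work. (ii) Your reduction does not cover self-loops of $\tilde{P}$ at $p_\infty$: a \emph{connected} edge $\tilde{E}_{v,w}$ can still be unbounded in two directions and close up into a loop at $p_\infty$ on the sphere (as the bisector does for two sites in the Euclidean case), so connectedness of $R(v)\cap R(w)$ does not rule this out. The paper excludes it separately via Lemma~\ref{lem:halfspace}: unboundedness of $\tilde{E}_{v,w}$ in both open half-planes of the supporting line of $v,w$ would empty both half-planes of sites, forcing all sites to be colinear, contrary to the standing assumption.
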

\begin{proof}
See Appendix D. 
\end{proof}

% the following corollary. 
%Since there are no multi-edges in $G$, and $G$ is dual of $P$, then
%there are no multi-edges in $P$. 
By its definition, an edge of $\tilde{P}$ is a connected set of points equidistant
and closest to two sites $v,w$. Since $\tilde{P}$ has no multi-edges, this implies the following:

\begin{corollary}\label{cor:Pij}
%	${G}$ (and therefore $G$ and $P$) are simple, planar graphs. 
%	Given two sites $v_i,v_j\in V$ of an orphan-free diagram, 
The set %$\tilde{E}_{w_i,w_j}$
of points closest and equidistant to two sites of an orphan-free diagram is connected. 
\end{corollary}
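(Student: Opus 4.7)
The plan is to deduce Cor.~\ref{cor:Pij} directly from Lem.~\ref{lem:simple}, by translating the ``$\tilde{G}$ has no multi-edges'' statement into a connectivity statement about the primal locus $R(v)\cap R(w)$.

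Fix two sites $v,w\in V$ and write $S := R(v)\cap R(w)$ for the set of points closest and equidistant to $v$ and $w$. By the definition of $\tilde{P}$, every point of $S$ lies either in the interior of an edge $\tilde{E}_{vw}$ of $\tilde{P}$ (a maximal connected subset of $S$ avoiding Voronoi vertices) or at a Voronoi vertex equidistant to $v,w$ and at least one other site, which is an endpoint of one or more such edges. Hence every connected component of $S$ is a union of one or more edges $\tilde{E}_{vw}$ (possibly glued together at shared Voronoi vertices); in particular, if $S$ had two or more connected components, then $\tilde{P}$ would contain at least two distinct edges joining the regions $R(v)$ and $R(w)$.

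By the construction of the dual, each primal edge between $R(v)$ and $R(w)$ corresponds to exactly one edge of $\tilde{G}$ joining the sites $v$ and $w$. Hence two distinct primal edges between these regions would give rise to a multi-edge in $\tilde{G}$, contradicting Lem.~\ref{lem:simple}. Therefore $\tilde{P}$ contains at most one edge $\tilde{E}_{vw}$, so $S$ has at most one connected component, i.e.\ it is connected.

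The main obstacle here is essentially bookkeeping rather than any genuine geometric difficulty: the corollary is really just Lem.~\ref{lem:simple} rephrased on the primal side. The one place that deserves a moment of care is the first step, verifying that every connected component of $S$ decomposes into primal edges (and their shared vertex endpoints) of $\tilde{P}$; but this is built into the definition of $\tilde{P}$, whose edges are by construction the maximal connected subsets of $S$ that do not pass through a Voronoi vertex, the only points at which $S$ can ``branch''.
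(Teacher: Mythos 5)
Your proof is correct and follows essentially the same route as the paper, which derives the corollary in one line from Lem.~\ref{lem:simple} together with the definition of a primal edge $\tilde{E}_{vw}$ as a connected set of points closest and equidistant to $v,w$ (so that a disconnected locus would force a multi-edge in $\tilde{P}$ and hence in $\tilde{G}$). Your version merely spells out the bookkeeping that the paper leaves implicit.
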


%The fact that $G$ preserves properties (i) and (ii) above is
%crucial for our purposes, since they are both recurrently used in the
%sequel. 
%By noting that a point $c_{1,dots,m}$ that is equidistant to $v_1,\dots,v_m$
%may not be closer 
Property (ii) above can be restated in a more useful way %way that it becomes more useful, and
that parallels an equivalent property of duals of ordinary Voronoi diagrams. 
%as follows:

\begin{definition}[{\bf Empty circum-ellipse property}]\label{def:ece}
	A face of $G$ incident to vertices $v_1,\dots,v_m$ satisfies the empty circum-ellipse (ECE) property if there is
\emph{some} ellipse that circumscribes the sites corresponding to $v_1,\dots,v_m$,
and contains no site  in its interior. 
\end{definition}

Note that \emph{any} empty circum-ellipse serves as witness to the ECE property. 
We show that the ellipse centered at a Voronoi vertex $c$, with axis given by $Q_c$, is always an empty circum-ellipse of its dual polygon.

\begin{theorem}\label{th:ece}
	Every face of $G$ satisfies the empty circum-ellipse %empty circum-ellipse 
	property. 
%	To every face of $G$ connecting sites $v_1,\dots,v_m$
%corresponds an open ellipse that has %passes through
%$v_1,\dots,v_m$ in its boundary and contains no sites. %inside of which there are no sites. % inside. % which does not contain any site. 
\end{theorem}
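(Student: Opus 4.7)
The plan is to make the claim following the theorem statement precise: take $c$ to be the primal vertex dual to the face, and verify directly that the $Q_c$-ellipse centered at $c$ through the sites $v_1, \dots, v_m$ is an empty circum-ellipse. Everything should fall out of the defining formula for $D$ and the definition of a Voronoi region, once one exploits the crucial structural feature of the metric $D(v,p) = [(p-v)^t Q_p (p-v)]^{1/2}$: the quadratic form used in comparisons $D(\cdot, c)$ is always $Q_c$, \emph{independent} of the first argument.

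First I would invoke property (ii) above to obtain a primal vertex $c \in R(v_1) \cap \dots \cap R(v_m)$ of $\tilde{P}$ which is equidistant to the sites $v_1, \dots, v_m$; call this common distance $r := D(v_i, c)$. Because $D(v_i, c)^2 = (c - v_i)^t Q_c (c - v_i)$, each $v_i$ lies on the ellipse
\[
\mathcal{E}_c \;=\; \{\, p \in \mathbb{R}^2 : (p - c)^t Q_c (p - c) = r^2 \,\},
\]
which is centered at $c$ and has axes determined by the eigendecomposition of $Q_c$. This gives the circumscribing part of the ECE property.

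Next I would establish emptiness. For any other site $w \in V \setminus \{v_1, \dots, v_m\}$, the condition $c \in R(v_1)$ from property (ii) yields $D(v_1, c) \le D(w, c)$, i.e.\ $r \le D(w, c)$. Since $D(w, c)^2 = (c - w)^t Q_c (c - w)$ uses precisely the same quadratic form $Q_c$ appearing in the definition of $\mathcal{E}_c$, this inequality is exactly the statement that $w$ lies outside the open interior of $\mathcal{E}_c$. Hence no site of $V$ lies strictly inside $\mathcal{E}_c$, and $\mathcal{E}_c$ witnesses the ECE property.

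There is no real obstacle here, provided one is careful about two small points: (a) the asymmetry of $D$ — it is precisely the fact that $Q_c$ is evaluated at the \emph{second} argument that makes the sublevel set of $D(\cdot, c)$ an ellipse (rather than some distorted shape depending on $w$); and (b) the existence of the primal vertex $c$ needed at the start, which is given for free by property (ii) and, as noted in the preceding paragraph, depends on orphan-freedom of the primal diagram.
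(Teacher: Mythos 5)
Your proof is correct and follows essentially the same route as the paper's: both obtain the primal vertex $c \in R(v_1)\cap\dots\cap R(v_m)$ from property (ii), take the $Q_c$-ellipse centered at $c$ through the sites as the witness, and deduce emptiness from the fact that $c$ lies in the intersection of the Voronoi regions. The paper phrases the emptiness step as a contradiction while you phrase it as a direct inequality $r \le D(w,c)$, but the content is identical.
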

\begin{proof}
The proof is constructive. 
To every face in $G$ connecting vertices with corresponding sites $v_1,\dots,v_m$, 
%, we construct  an open ellipse that has %passes through
%$v_1,\dots,v_m$ in its boundary and contains no sites. %inside of which there are no sites. % inside. % which does not contain any site. 
	by property (ii), %to each such face 
	corresponds a %(primal) 
	Voronoi vertex
$c\in R(v_1)\cap\dots\cap R(v_m)$ of $\tilde{P}$ that is equidistant to $v_1,\dots,v_m$. 
If $\mu=D(v_1,c)=\dots=D(v_m,c)$,
then the open ellipse 
\[\theta=\{p\in\mathbb{R}^2 : D(p,c) < \mu\} = \{p\in\mathbb{R}^2 :
(p-c)^t Q_{c} (p-c) < \mu^2 \} \]
circumscribes the $v_1,\dots,v_m$ and does not contain any site (otherwise, $c$ would be closer to that site than to $v_1,\dots,v_m$, 
and therefore not in $R(v_1)\cap\dots\cap R(v_m)$, a contradiction). 
\end{proof}

\begin{figure}[ht]
\centering
\subfigure[]{
\includegraphics[height=3.5cm]{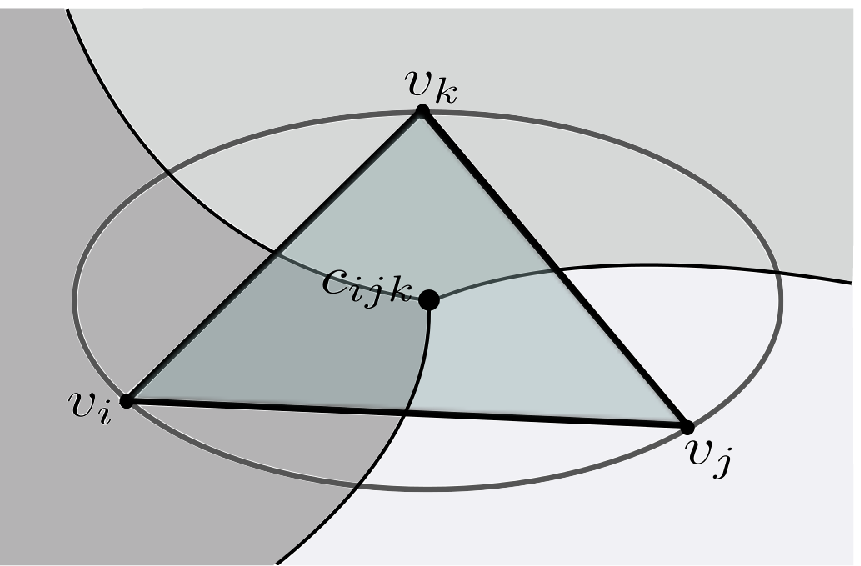}
\label{fig:f4a}
}
\subfigure[]{
\includegraphics[height=3.5cm]{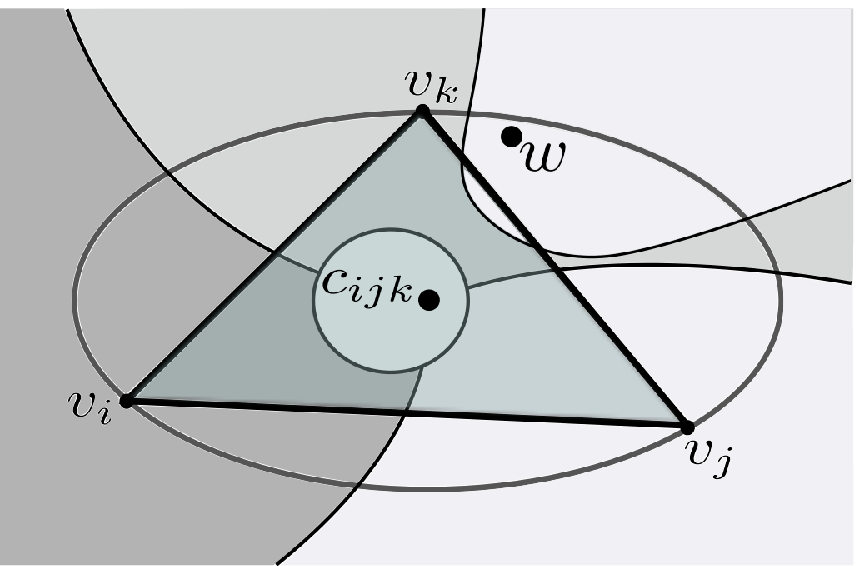}
\label{fig:f4b}
}
\label{fig:f4}
\caption{In an orphan-free diagram (a), every face  satisfies the
empty-circum ellipse (ECE) property. If the diagram is not orphan-free, then
some faces may not satisfy the ECE property.}
\end{figure}

Although this may not at first seem apparent,
the above property uses the fact that the primal diagram is orphan free. 
The critical fact in the proof of Thm.~\ref{th:ece} is not that
there is a Voronoi vertex $c$ equidistant to $v_1,\dots,v_m$, but rather that 
it is $c\in R(v_1)\cap\dots\cap R(v_m)$. 
Consider Figs.~\ref{fig:f4a} and~\ref{fig:f4b}, in which sites $v_i,v_j,v_k$ form a face of $G$. 
In~\ref{fig:f4a}, the diagram is orphan-free, and the fact that $c_{ijk}\in
R(v_i)\cap\ R(v_j)\cap R(v_k)$ implies Thm.~\ref{th:ece}. 
Figure~\ref{fig:f4b} shows part of a Voronoi diagram that has an orphan cell (corresponding
to site $w$) ``covering" $c_{ijk}$. 
If the dual is constructed such that the face $v_i,v_j,v_k$ is not in the dual  
%Depending on how the dual is constructed, if the face $v_i,v_j,v_k$ is not in the dual, 
then a ``hole" in the triangulation may occur. 
If the face $v_i,v_j,v_k$ is in the dual, 
%still exists since the orphan cell does not contain its generating site $w$.  However, 
then, 
despite the fact that $c_{ijk}$ is equidistant
to $v_i,v_j,v_k$, this $c_{ijk}$ is no longer in %$c_{ijk}\in
$R(v_i)\cap R(v_j)\cap R(v_k)$, but in the Voronoi region
of $w$. 
Since $c_{ijk}$ is closer to $w$ than to $v_i,v_j,v_k$, 
any open ellipse centered at $c_{ijk}$, and having
$v_i,v_j,v_k$ on its boundary, contains $w$.
Therefore, in Fig.~\ref{fig:f4b}, the ECE %empty circum-ellipsoid 
property does not hold for the dual
face of $c_{ijk}$.  
Ensuring that Thm.~\ref{th:ece} holds is one of the main reasons for requiring orphan-freedom. 
% in constructing the dual. 

%%%From now on, we assume that the faces of 
%%The next two sections use property (i), and Thm.~\ref{th:ece}, to show that
%%%, under mild assumptions on the metric, 
%%the %triangulated 
%%dual of an orphan-free diagram %${G}$ is not just embedded but can also be 
%%is embedded with straigh edges (Sec.~\ref{sec:interior}), 
%%and its faces %of this straight-edge embedding %, excluding the unbounded face, 
%%cover the convex hull of the sites (Sec.~\ref{sec:boundary}). 

\section{Summary of Results and Outline}

The two main results in this paper assume that we are given an orphan-free anisotropic Voronoi diagram (in two dimensions) and 
prove that: 1) \emph{the dual is an embedded triangulation} (Thm.~\ref{th:final}), 
and 2) \emph{if the metric has bounded anisotropy (ratio of eigenvalues), the dual triangulates the convex hull of the sites} (Thm.~\ref{gamma}).

We prove a number of additional results. Possibly the most important shows that the elements of an orphan-free diagram (vertices, edges, faces), 
are connected sets, and therefore, in some sense, unique (Cor.~\ref{uniqueVD}). 
This is important, since this is a natural well-behave-ness condition on the primal, which, together 
with the above results, indicates that orphan-freedom is sufficient to guarantee well-behave-ness of both primal and dual. 

A smaller result of interest is: the dual faces satisfy an \emph{empty circum-ellipse} property (Thm.~\ref{th:ece}), which parallels the empty-circumcircle property of ordinary Voronoi diagrams, 
and could have further practical implications from the ones described here. 

The order of the proofs is, however, different from the one just stated. %that of the results just stated. 
In effect, we begin by proving the most restrictive case: that the dual of an orphan-free diagram with metric of bounded anisotropy is embedded (Sec.~\ref{sec:interior}), 
and that its boundary is the boundary of the convex hull of the sites (Sec.~\ref{sec:boundary}). % (the proofs in Sec.~\ref{sec:interior} require the results of Sec.~\ref{sec:boundary}). 
We then relax the bounded anisotropy condition (Sec.~\ref{sec:gen}), and show that, in the general case, we loose the convex hull property, but the dual remains embedded. 
Finally, once these results are established, we prove that vertices of the primal are unique (uniqueness of edges was proven in Sec.~\ref{sec:setup}, and uniqueness of primal faces is the same as orphan-freedom). 

We assume in the remainder of the paper that %from now on that 
there are more than two sites, not all of which are colinear, and relegate the (considerably simpler) colinear case to Appendix B. % (including the case that $V$ has two sites). 

%
%Given a primal orphan-free diagram, and its dual, the following results are proven in two dimensions:
%\begin{itemize}
%	\item The elements (vertices, edges, faces) of the primal are connected (unique). 
%	\item The faces of the dual satisfy the empty circum-ellipse property. 
%	\item The dual has no degenerate faces. 
%	\item The dual is embedded. 
%	\item If the anisotropy (ratio of eigenvalues) of the metric is bounded, the dual covers the convex hull of the sites. 
%\end{itemize}
%Finally, by combining... AVD+embed. 
%
%Actualy order
%
%%In this paper, we prove the following:
%\begin{itemize}
%\item An orphan-free anisotropic Voronoi diagram is composed of connected faces, edges, and vertices (Th.~\ref{}). 
%Alternatively, the connected components of its constituent elements are unique. Note that this is obvious for the vertices ($V$), and is true by orphan-freedom for the regions

%assume OF. 
% primal "unique"
% boundary of dual simple,closed
% dual embedded
% eigenvalue-bound => boundary of dual if boundary of CH. 

% proof:
% eigenv.-bound => boundary of dual if boundary of CH.
% simple,closed boundary => embedded dual
% embedded dual => primal "unique"
% boundary of dual simple, closed

\section{Dual of Orphan-free Diagram (Part I: boundary)}% covers the sites' convex hull}
\label{sec:boundary}

In this section, we assume that the metric has bounded anisotropy, and conclude that the boundary of the dual of an orphan-free diagram is the same as the boundary of the convex hull of the sites (and in particular is simple and closed). 
If $R_p\Lambda_p R_p^t$ is the eigendecomposition of $Q_p$ at $p\in\mathbb{R}^2$, with $\Lambda_p = diag[\lambda_2(p),\lambda_1(p)]$, $\lambda_2(p)\ge\lambda_1(p)>0$, and $R_p$ orthonormal, then we assume that there is some bound $\gamma$ on the anisotropy of $Q$, such that: $1\le \lambda_2(p)/\lambda_1(p) < \gamma^2$ for $p\in\mathbb{R}^2$. 
Note that this condition may commonly hold in practice, if the metric is sampled on a compact domain (and possibly extended to the plane by reusing sampled values only). 
%The constant $\gamma$ may be arbitrary, so long as it exists. 

%\textcolor{Red}{
We begin by defining $\bar{G}$ to be the straight-edge drawing of %the graph
$\tilde{G}$ with vertices at the sites. % (like $\tilde{G}$). %, where faces have been triangulated. 
%That is, $\bar{G}$ has the structure of $\tilde{G}$, % triangulated,  and is drawn on the plane with vertices coincident to sites (as is $\tilde{G}$), and with straight edges. 
For the moment, we assume that every Voronoi vertex is equidistant to no more than three sites, and therefore that all faces in $\bar{G}$ are triangles 
%Later, in Appendix G, 
(we extend the results to the general case in Appendix G). %, where Voronoi vertices may be equidistant to more than three sites.
%The triangle mesh $\bar{G}$ inherits its structure from 
%Although we do not at this point know if $\bar{G}$ is embedded, 
% but, because it is a triangle
%mesh, 
%we can still define a mapping from the vertices, edges, and faces of
%When we say that the triangulation 
We associate to $\bar{G}$ %is drawn with straight edges
%we mean thaht it 
 a mapping (from the vertices, edges, and faces
of $\bar{G}$ to $\mathbb{R}^2$) for which, because all faces of $\bar{G}$ are triangles, it is well defined whether a point in
$\mathbb{R}^2$ belongs to any given face of $\bar{G}$, a fact that will be
used in the proofs of Sec.~\ref{sec:interior}. 
This mapping will be shown in Sec.~\ref{sec:interior} to be an embedding. 
In the sequel, it is assumed that $\bar{G}$ encompasses both the mesh
structure, and the mapping. 
%The goal of
%Sec.~\ref{sec:interior} will be to show that this mapping is injective. 

%as for $\tilde{G}$, the vertices of $\bar{G}$ coincide with the
%sites. 

%The goal of this section and the following is to show that $\bar{G}$ is an
%embedding (its straight edges don't cross). 
%Note that, because $G$ is a simple graph, properties (i) and (ii), and thus,
%as argued there, also Thm.~\ref{th:ece}, from Sec.~\ref{sec:dual} are preserved in
%$\bar{G}$.

%embedding the graph $G=(V,E)$, 
%dual to an orphan-free diagram, %as in Sec.~\ref{sec:dual}, 
%with straight edges. %, and prove several properties of this graph. 
%Note that, because we have shown that $G$ is a simple, planar graph, then
%no two edges in the
%original graph are collapsed into a single one in the straight-edge version, 
%ensuring that there is a one-to-one correspondence between vertices (edges) in the two versions of the graph.
%This ensures that properties (i) and (ii) from Sec.~\ref{sec:dual} are
%preserved. 
% when we consider embedding it with 
%in the straight edge version. 
%In particular, the fact that $G$ is simple ensures that there is a
%one-to-one 
%In particular, because there are no multi-edges, no two edges in the
%original graph are collapsed into a single one in the straight-edge version,
%ensuring that there is a one-to-one correspondence between vertices and
%edges in the two versions of the graph. 

The boundary vertices of $G$ are those whose corresponding primal regions in $\tilde{P}$ are unbounded, 
while boundary edges of $G$ connect boundary vertices. 
Note that this is a \emph{topological} property of $G$, rather than a geometric one (boundary elements of $G$ may, in principle, not lie in the boundary of the convex hull of the sites). 
For convenience, we call $B\subseteq E$ the set of boundary edges of $G$. 

%%%We consider the boundary structure of $G$, which is a \emph{topological} property of $G$, rather than a geometric one 
%%%(the boundary vertices of $G$ are those whose corresponding primal regions in $\tilde{P}$ are unbounded, 
%%%while boundary edges in $G$ connect boundary vertices, regardless of whether these boundary elements 
%%%(e.g.\ boundary edges of $G$ are 
%%%The boundary vertices of $G$  %$W\subset V$ of $G$ 
%%%%\textcolor{Red}{
%%%%incident to the unbounded face,
%%%%}
%%%are those whose corresponding primal regions in $\tilde{P}$ 
%%%are unbounded. % (and therefore incident to the point at infinity $p_\infty$ in the primal $P$). 
%%%%\textcolor{Red}{
%%%For convenience, we call $B\subset E$ the set of boundary
%%%edges, or edges of $G$ that connect boundary vertices.
%%%%%% (whose primal regions are unbounded). %incident to the unbounded face
%%%%%%Note that, for now, this ``boundary" is simply a topological property, not a geometric one (
%%%%}
%%%%The set $B\subset E$ of boundary edges of $G$ is composed of edges $(v_i,v_j)\in E$ that connect two boundary sites %$w_i,w_j\in W$ 
%%%%(i.e.\ edges $(v_i,v_j)$ whose corresponding primal edge in $P$ is unbounded).
%%%%set). 
%%%%$\tilde{E}_{w_i,w_j}$).  %points closest to $v_i,v_j$). 

The convex hull $\mathcal{CH}(V)$ of $V$ is the minimal (w.r.t.\ set
containment) convex set that contains $V$. 
For convenience, we name $W=\{w_i\in V : i=1,\dots,m\}$ the sites that are part of the boundary of
the convex hull $\mathcal{CH}(V)$, and order them %linearly (if they are
%colinear, fig. 4a), or 
in clock-wise order around $\mathcal{CH}(V)$. 
%(if they are not colinear, fig. 4b). 
The boundary $\mathcal{B}$ of the convex hull is a simple circular chain 
%$\mathcal{B} = \{(w_i, w_{i + 1}) : i=1,\dots,m-1\}$ (colinear case), or    
$\mathcal{B} = \{(w_i, w_{i\oplus 1}) : i=1,\dots,m\}$. % (not colinear case).
We prove that %, in both cases, 
it is $B=\mathcal{B}$ 
(loosely speaking: the topological boundary of $G$, and the geometric boundary of its straight-edge embedding $\bar{G}$, are the same), 
which implies that $\bar{G}$ covers
the convex hull of the sites, and its boundary edges form 
%either a simple polygonal chain ($V$ colinear), or 
a simple, closed polygonal chain. % ($V$ not colinear). 
All the proofs of this section are in Appendix E. %in this section are relegated to Appendix E. 

\begin{lemma}\label{boundary_easy}
	%If not all sites in $V$ are colinear,  then 
 To every boundary edge $(v,w)$ of $G$ corresponds a segment in the boundary of $\mathcal{CH}(V)$. \emph{$[B\subseteq\mathcal{B}]$}
%($B\subset\mathcal{B}$). 
%of a DW diagram $P$ are not colinear, then every boundary edge of
%its dual $G$ is a segment in the boundary of $\mathcal{CH}(V)$ (is of the form $\overline{w_i w_{i\oplus 1}}\in\mathcal{B}$). 
\end{lemma}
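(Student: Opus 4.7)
The strategy is to consider a sequence of points $p_n$ on the primal edge $\tilde{E}_{vw}$ with $\|p_n\|\to\infty$ (such a sequence exists because $(v,w)\in B$ forces both $R(v)$ and $R(w)$ to be unbounded and the primal edge between them to reach $p_\infty$), and then use bounded anisotropy to reduce to a Euclidean-type asymptotic argument that places $v$ and $w$ on a common supporting line of $\mathcal{CH}(V)$.

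Each such $p_n$ satisfies $D(v,p_n)=D(w,p_n)\le D(u,p_n)$ for every $u\in V$. Expanding the quadratic forms, the inequality $(p_n-v)^t Q_{p_n}(p_n-v)\le(p_n-u)^t Q_{p_n}(p_n-u)$ rearranges into
\[
2\,p_n^t Q_{p_n}(u-v)\;\le\;u^t Q_{p_n}u\;-\;v^t Q_{p_n}v,
\]
and an analogous equality holds for $u=w$. Setting $L_n:=Q_{p_n}p_n$, I would use the bounded anisotropy together with continuity and positive-definiteness of $Q$ to argue that $\|L_n\|\to\infty$, then pass to a subsequence along which $L_n/\|L_n\|\to e$ for some unit vector $e\in\mathbb{R}^2$. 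Dividing through by $\|L_n\|$ and sending $n\to\infty$ yields
\[
e\cdot(u-v)\le 0\text{ for every }u\in V, \qquad e\cdot(w-v)=0,
\]
so $v$ and $w$ both attain the maximum of the linear functional $e\cdot(\,\cdot\,)$ over the sites. Hence they lie on a common supporting line of $\mathcal{CH}(V)$, placing the segment $vw$ inside $\partial\mathcal{CH}(V)$.

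What remains is to show that $v$ and $w$ are in fact adjacent in the cyclic order of boundary sites on $\partial\mathcal{CH}(V)$, i.e.\ no site $u'\in V$ lies strictly between them on the segment $vw$. Such an intermediate $u'$ would, by Lemma~\ref{lem:interior}, own a neighborhood contained in $R(u')$; combined with the midpoint Lemma~\ref{lem:midpoint}, orphan-freedom (Lem.~\ref{lem:sc}), and the connectedness of equidistant edge sets (Cor.~\ref{cor:Pij}), this would obstruct the connected set of points equidistant and closest to $v$ and $w$ from reaching infinity, contradicting $(v,w)\in B$.

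The main obstacle in this plan is the asymptotic limit step: since the bounded anisotropy hypothesis controls only the ratio of eigenvalues of $Q$ and not their absolute size, one must argue carefully that $\|Q_{p_n}p_n\|\to\infty$ so that the normalization $L_n/\|L_n\|$ admits a convergent subsequence producing a well-defined extreme direction $e$. A secondary technical point is handling the case in which several sites are collinear on a hull edge, which should be absorbed into the ``adjacency'' step above via the midpoint lemma.
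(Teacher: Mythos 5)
Your main argument is correct but takes a genuinely different route from the paper's. The paper deduces $\overline{vw}\subseteq\partial\mathcal{CH}(V)$ from Lem.~\ref{lem:halfspace}: since $\tilde{E}_{vw}$ meets the supporting line $l_{vw}$ in at most the midpoint (Lem.~\ref{lem:midpoint}), an unbounded primal edge must be unbounded in one open half-space $H^{+}$ of $l_{vw}$, and an explicit tangent-ellipse radius estimate (Lem.~\ref{lem:tech}, where $\gamma$ enters) shows that far-away points of $\tilde{E}_{vw}\cap H^{+}$ would be strictly closer to any site in $H^{+}$ than to $v,w$, forcing $H^{+}\cap V=\emptyset$. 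Your normalized-asymptotics argument reaches the same conclusion by extracting a supporting direction $e=\lim L_n/\|L_n\|$. The worry you flag about $\|L_n\|\to\infty$ dissolves: you do not need it, only that the right-hand side vanishes after normalization, and indeed $|u^tQ_{p_n}u-v^tQ_{p_n}v|/\|Q_{p_n}p_n\|\le \lambda_2(p_n)\left(\|u\|^2+\|v\|^2\right)/\left(\lambda_1(p_n)\|p_n\|\right)\le\gamma^2\left(\|u\|^2+\|v\|^2\right)/\|p_n\|\to 0$, which is exactly where bounded anisotropy is used. Since not all sites are colinear, $\{x:e^tx=e^tv\}$ is a proper supporting line containing $v$ and $w$, so $\overline{vw}\subseteq\partial\mathcal{CH}(V)$. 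Your route is arguably cleaner for this lemma, though the paper's Lem.~\ref{lem:halfspace} is a reusable tool that also serves elsewhere (e.g.\ in the proof of Lem.~\ref{lem:simple}).

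The adjacency step, however, has a genuine gap. The mechanism you propose --- that a site $u'$ strictly interior to $\overline{vw}$ would, via Lemmas~\ref{lem:interior},~\ref{lem:midpoint},~\ref{lem:sc} and Cor.~\ref{cor:Pij}, ``obstruct'' the connected set $\tilde{E}_{vw}$ from reaching infinity --- is not developed and is not the right tool: a small neighborhood of $u'$ contained in $R(u')$ does not by itself separate the plane or block an unbounded connected set, and turning that intuition into a proof would require a nontrivial topological argument. The correct step is much simpler and makes the obstruction idea unnecessary: if $p$ is \emph{any} point equidistant and closest to $v$ and $w$, then $v,w$ lie on the boundary of the ellipse $\theta_p(v)=\{x:(x-p)^tQ_p(x-p)<(v-p)^tQ_p(v-p)\}$, and any $u'$ in the open segment between them lies on an open chord of that ellipse, hence strictly inside it; so $D(u',p)<D(v,p)=D(w,p)$, contradicting that $p$ is closest to $v,w$. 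Thus $\tilde{E}_{vw}=\emptyset$ and the dual edge could not exist at all. This convexity-of-$D(\cdot,p)$ argument is exactly what the paper uses. With that substitution your proof is complete.
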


We now turn to the converse claim: that to every segment 
$(w_i,w_j)\in\mathcal{B}$ corresponds a boundary edge
$(w_i,w_j)\in B$ in $G$. %The proof is by contradiction, and uses the following basic lemmas. 
%Before we stablish the main argument of the proof, 
Since $B$ is the set of boundary edges of $G$, whose primal edges in
$\tilde{P}$ are
unbounded, the claim is equivalent to proving that, to every segment
$(w_i,w_j)$ in the boundary of the convex hull corresponds an edge of the
dual $G$, whose primal edge $\tilde{E}_{w_i,w_j}$ is unbounded. 
%To show this we prove %a slightly stronger claim: not just that $\tilde{E}_{w_i,w_j}$ is unbounded, but rather 
%that for \emph{every} sufficiently large $\sigma$, 
%the circle $C(\sigma)=\{x\in\mathbb{R}^2 : \|x\|=\sigma\}$ contains a point
%$p\in \tilde{E}_{w_i,w_j}\cap C(\sigma)$.

The proof proceeds as follows. First, assume w.l.o.g.\  that the origin is in the interior of $\mathcal{CH}(V)$. 
Let  $C(\sigma)=\{x\in\mathbb{R}^2 : \|x\|=\sigma\}$ be a sufficiently large origin-centered circle. 
We define two  functions: 
\begin{eqnarray*}
	\pi : C(\sigma)\rightarrow\partial\mathcal{CH}(V),   \text{  }\text{  }\text{  }   \pi(p) = \underset{r\in\mathcal{CH}(V)}{\operatorname{\mathbf{argmin\ }}} D(r,p) \\
	\nu : \partial\mathcal{CH}(V)\rightarrow C(\sigma), \text{  }\text{  }\text{  }\text{  }\text{  }\text{  }    \nu(r) =  \sigma \cdot r / \|r\|
\end{eqnarray*}
$\pi$ simply projects every point in $C$ to its closest in $\mathcal{CH}(V)$ (with respect to the distance $D$), 
and $\nu$ projects a point back to $C$.

\begin{lemma}\label{lem:piC0}
%	Given %$\pi:\mathbb{R}^2\setminus\mathcal{CH}(V)\rightarrow\partial\mathcal{CH}(V)$ with 
%$\pi(p) = \underset{x\in\mathcal{CH}(V)}{\operatorname{argmin}}D(x,p)$,
%then 
$\pi(p) = \underset{x\in\mathcal{CH}(V)}{\operatorname{argmin}}D(x,p)$ is a continuous function in $\mathbb{R}^2$. 
% in $\mathbb{R}^2\setminus\mathcal{CH}(V)$. 
\end{lemma}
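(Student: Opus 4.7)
The plan is to prove continuity in two stages: first establish that $\pi(p)$ is well-defined (the argmin is unique), and then derive continuity from a standard compactness/subsequence argument combined with the continuity of $Q$.

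For uniqueness, fix $p\in\mathbb{R}^2$. Since $Q_p$ is symmetric positive definite, the function $x\mapsto D(x,p)^2 = (p-x)^t Q_p(p-x)$ is a strictly convex quadratic in $x$, and $D(x,p)$ itself is a strictly convex norm-like function on $\mathbb{R}^2$. The set $\mathcal{CH}(V)$ is convex and compact (being the convex hull of a finite set of points), so the minimization of $D(\cdot,p)$ over $\mathcal{CH}(V)$ has a unique minimizer, namely $\pi(p)$.

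For continuity, I would use the standard argument based on joint continuity of $D$ and compactness. Note that $(x,p)\mapsto D(x,p)$ is jointly continuous on $\mathbb{R}^2\times\mathbb{R}^2$, because $Q$ is continuous in $p$ (by assumption in Sec.~\ref{sec:setup}) and the quadratic form depends continuously on both its matrix and its arguments. Suppose $p_n\to p$ in $\mathbb{R}^2$. The sequence $\{\pi(p_n)\}\subset\mathcal{CH}(V)$ lies in a compact set, so every subsequence has a further convergent subsequence; say $\pi(p_{n_k})\to x^\ast\in\mathcal{CH}(V)$. For any fixed $y\in\mathcal{CH}(V)$, the defining inequality $D(\pi(p_{n_k}),p_{n_k})\le D(y,p_{n_k})$ passes to the limit, giving $D(x^\ast,p)\le D(y,p)$. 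Hence $x^\ast$ minimizes $D(\cdot,p)$ on $\mathcal{CH}(V)$, and by uniqueness $x^\ast=\pi(p)$. Since every subsequential limit equals $\pi(p)$, it follows that $\pi(p_n)\to\pi(p)$, proving continuity.

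The only subtle point is that the quadratic form here depends on $p$ through $Q_p$, so I must invoke the continuity of the metric $Q$ rather than treating $D$ as if it were an ordinary Euclidean distance; this is what makes the joint continuity of $D$ the key ingredient. Everything else is a textbook application of the maximum theorem in the convex/unique-minimizer regime, so I do not expect any real obstacle beyond bookkeeping.
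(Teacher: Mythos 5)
Your proof is correct and follows essentially the same route as the paper: uniqueness of the minimizer from strict convexity of the quadratic form $x\mapsto (p-x)^tQ_p(p-x)$ over the convex compact set $\mathcal{CH}(V)$ (the paper instead exhibits the midpoint of two putative minimizers as strictly closer, which is the same idea), followed by a compactness/subsequence argument using the joint continuity of $D$, which the paper phrases as a contradiction and you phrase directly. One small caution: $D(\cdot,p)$ itself is \emph{not} strictly convex (it is linear along rays through $p$), so the uniqueness claim should rest on the strict convexity of the squared distance, which you do state and which suffices since minimizing $D$ and $D^2$ over $\mathcal{CH}(V)$ are equivalent.
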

\begin{proof}
We first prove by contradiction that $\pi(p)$ is unique. 
Let $M_p$ be the unique~\cite{avd}, symmetric positive definite square-root matrix such that $M_p^t M^t_p = Q_p$. 
Consider distinct
$q_1,q_2\in\mathcal{CH}(V)$ closest to $p$. 
Since $q_1\neq q_2$, by the convexity of the Euclidean norm, and the positive-definiteness of $M_p$, it is
\[ D((q_1+q_2)/2,p) = \|M_p\left( (q_1 + q_2)/2 - p)\right)\| < \| M_p\left(q_1- p\right)\| = \| M_p\left(q_2 - p\right)\| \]
Thus $p$ is closer to $(q_1+q_2)/2$ than to
$q_1,q_2$, a contradiction. Therefore the closest point $\pi(p)$ to $p$ is
unique. 

If $\pi$ were not continuous at $p$, then there is $\epsilon > 0$ such that 
for all $\delta > 0$ there is a point $\bar{p}=\bar{p}(\delta)$ such that $\|p-\bar{p}\| < \delta$
and $\|\pi(p) - \pi(\bar{p})\| \ge \epsilon$. 
Consider the sequence $\{\kappa_i = \pi(\bar{p}(1/i)) : i\in\mathbb{N}\}$ of points in
$\mathcal{CH}(V)$. Because $\mathcal{CH}(V)$ is compact, $\{\kappa_i\}$ %the sequence 
has a
subsequence that converges to some $\kappa\in\mathcal{CH}(V)$. 
By continuity of %the distance function 
$D$ (which follows from the continuity of $Q$), it is
\begin{eqnarray*}
	D(\pi(p),p) = \min_{x\in\mathcal{CH}(V)} D(x,p) = \displaystyle{  \lim_{i\rightarrow\infty} \min_{x\in\mathcal{CH}(V)} D(x,\bar{p}_i) } 
			= \displaystyle{ \lim_{i\rightarrow\infty} D(\kappa_i,\bar{p}_i) = D(\kappa, p)  }
\end{eqnarray*}
and therefore the closest point in $\mathcal{CH}(V)$ to $p$ is not unique, a
contradiction. 
\end{proof}

%Since $\mathbb{R}^2$ with the natural metric is a uniformly convex space, %(since every Hilbert space is uniformly convex), 
%then $\pi(p)$ is unique and continuous, since $\mathcal{CH}(V)$ is closed convex (see~\cite{metric_projection}, p.\ 54). 
Clearly, $\nu$ is continuous in $\partial\mathcal{CH}(V)$. 
Note that, because $\mathcal{CH}(V)$ contains the origin, then, as shown in Fig.~\ref{fig:pinu}, $\nu$ projects 
every point $\pi(p)\in (w_i,w_j)$ on a segment of $\partial\mathcal{CH}(V)$, 
\emph{outwards} from the convex hull (and on the empty side of
$(w_i,w_j)$); that is, so that $\nu(\pi(p))\in H^{+}_{ij}\cap C$
($\nu(\pi(p))$ is in the empty half-space of $(w_i,w_j)$). 
%$\pi:C\rightarrow\partial\mathcal{CH}(V)$ is simply a projection that maps every 
%point $p\in C$ to its closest point in $\mathcal{CH}(V)$, 
%and $\nu(r) := r \cdot \sigma / \|r\|$ mapping points %in the boundary of the convex hull 
%back to the circle. % $C$. 
%%The definition of 
%$\nu$ simply %is involved but 
%has the %simple 
%goal of "projecting" any point 
%$r\in (w_i,w_j)$ on a segment of $\partial\mathcal{CH}(V)$
%\emph{outwards} from the convex hull (on the empty side of
%$(w_i,w_j)$) until it ``hits" $C$; that is, so that $\nu(r)\in H^{+}_{w_i,w_j}\cap C$
%($\nu(r)$ is in the empty half-space of $(w_i,w_j)$). 

\begin{figure}[htbp]
   \centering
   	\includegraphics[width=2.7in]{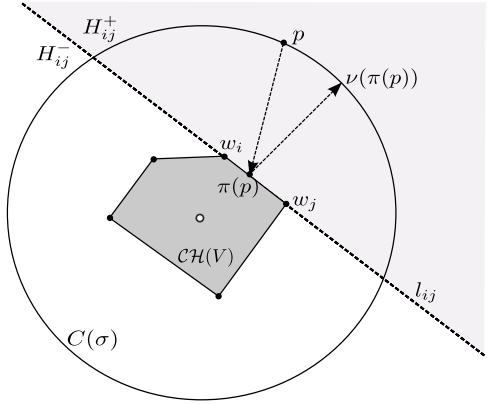} % requires the graphicx package
%	\subfigure[]{\includegraphics[width=2in]{normals2.eps}}
   \caption{The construction for the proof of Thm.~\ref{thm:boundary}.}
%\begin{figure}
%\centering
%\label{fig:nx}
%\caption{Diagram for the construction of the `outwards-oriented' vector field $n$ on the boundary of the convex hull
%(used in Lem.~\ref{lem:hard}).}
%\end{figure}
   \label{fig:pinu}
\end{figure}

Conveniently, Lem.~\ref{lem:mij} shows that, if $\pi(p)=(w_i+w_j)/2$, then
$p\in \tilde{E}_{w_i,w_j}$, and so the claim now reduces to showing that for each
segment $(w_i,w_j)$ of $\partial\mathcal{CH}(V)$, and for \emph{every} sufficiently large circle $C$, 
there is $p\in C$ with $\pi(p)=(w_i+w_j)/2$. Since this implies that $\tilde{E}_{w_i,w_j}$ is
unbounded, it means that the corresponding edge $(w_i,w_j)$ is in $B$. %G$. 

The proof is by contradiction. 
Lem.~\ref{lem:Sn} uses Brouwer's fixed point theorem to show that, for every segment $(w_i,w_j)$ of $\mathcal{B}$, 
if there were no $p\in C$ with $\pi(p)=(w_i+w_j)/2$, then 
the function $\nu\circ\pi:C\rightarrow C$ must have a point $q\in C$ such that
$\nu(\pi(q))=-q$, or equivalently such that 
$q$ is ``behind" the segment $(w_i,w_j)\in\partial\mathcal{CH}(V)$ to which it is
closest ($q\in H^{-}_{ij}$). 
On the other hand, Lem.~\ref{lem:contrad} shows that, for all sufficiently large circles $C$, no
point $q\in C$ can be closest to some segment
$(w_i,w_j)\in\partial\mathcal{CH}(V)$ that it is \emph{behind} of, creating a contradiction. 

%The remainder of the section is an effort in formalizing the above. 
%
%Given some $\sigma > 0$, consider the circle $C(\sigma)=\{x\in\mathbb{R}^2 : \|x\|=\sigma\}$, 
%%of points of norm $\sigma$, and 
%and define the function 
%$\pi(p) = \underset{x\in\mathcal{CH}(V)}{\operatorname{\mathbf{argmin\ }}}D(x,p)$
%that assigns, to each point $p$, its  closest point in the convex hull. % to any point $p$. 
%%:C(\rho)\rightarrow\partial\mathcal{CH}(V)$
%%that returns 
%Since $\mathbb{R}^2$ with the natural metric is uniformly convex, %(since every Hilbert space is uniformly convex), 
%then $\pi(p)$ is unique and continuous, since $\mathcal{CH}(V)$ is closed convex (see~\cite{metric_projection}, p.\ 54). 

%\begin{lemma}\label{lem:piC0}
%%	Given %$\pi:\mathbb{R}^2\setminus\mathcal{CH}(V)\rightarrow\partial\mathcal{CH}(V)$ with 
%%$\pi(p) = \underset{x\in\mathcal{CH}(V)}{\operatorname{\mathbf{argmin\ }}}D(x,p)$,
%%then 
%$\pi(p) = \underset{x\in\mathcal{CH}(V)}{\operatorname{\mathbf{argmin\ }}} D(x,p)$ is a continuous function in $\mathbb{R}^2$. 
%% in $\mathbb{R}^2\setminus\mathcal{CH}(V)$. 
%\end{lemma}
%

As mentioned above, we are interested in identifying points in $C(\sigma)$
that are also in some primal edge $\tilde{E}_{w_i,w_j}$. The following lemma will be used
to show that, for sufficiently large $\sigma$, such a point $p$ can be
characterized by satisfying $\pi(p)=(w_i+w_j)/2$.

\begin{lemma}\label{lem:mij}
	If $\pi(p)=(w_i+w_j)/2$, with $p\in\mathbb{R}^2$,  $w_i,w_j\in W$, then
$D(w_i,p) = D(w_j,p)$. 
\end{lemma}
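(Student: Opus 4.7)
The plan is to exploit that $\pi(p)$ being the argmin of $D(\,\cdot\,,p)$ over $\mathcal{CH}(V)$ gives a first-order optimality condition along the segment $[w_i,w_j]\subset\mathcal{CH}(V)$, which (because the relevant matrix $Q_p$ is fixed at $p$, not varying with the first argument) reduces to a statement about a convex quadratic in a single variable. The symmetry of that first-order condition at the midpoint is exactly what makes $D(w_i,p)=D(w_j,p)$.

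First I would parametrize the segment as $r(t) = (1-t)w_i + t w_j$ for $t\in[0,1]$ and consider $f(t) = D(r(t),p)^2 = (r(t)-p)^t Q_p (r(t)-p)$. Since $Q_p$ is symmetric positive definite, $f$ is a strictly convex quadratic in $t$. The assumption $\pi(p) = (w_i+w_j)/2$ means that $r(1/2)$ minimizes $f$ on $[0,1]$ (indeed on $\mathcal{CH}(V)$, hence certainly on the segment). Because the midpoint is an interior point of the segment, this minimizer must be the unconstrained minimizer of the quadratic, so
\[
f'(1/2) \;=\; 2\bigl((w_i+w_j)/2 - p\bigr)^t Q_p (w_j - w_i) \;=\; 0.
\]

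Next I would compute $D(w_i,p)^2 - D(w_j,p)^2$ using the identity $a^t Q_p a - b^t Q_p b = (a+b)^t Q_p (a-b)$, applied with $a = w_i - p$ and $b = w_j - p$. Then $a-b = w_i - w_j$ and $a+b = 2\bigl((w_i+w_j)/2 - p\bigr)$, giving
\[
D(w_i,p)^2 - D(w_j,p)^2 \;=\; 2\bigl((w_i+w_j)/2 - p\bigr)^t Q_p (w_i - w_j),
\]
which is the negative of the expression in the first-order condition and hence vanishes. Taking nonnegative square roots yields $D(w_i,p) = D(w_j,p)$.

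There is no real obstacle here; the only subtlety is justifying that $f'(1/2) = 0$ from the $\mathcal{CH}(V)$-wide optimality. This is immediate because the midpoint lies strictly between $w_i$ and $w_j$ (since $w_i \ne w_j$ as distinct boundary vertices), so a one-sided minimum at an endpoint of $[0,1]$ is excluded and the standard interior first-order condition applies. The rest is the algebraic identity, which is valid for any symmetric $Q_p$ and does not require positive-definiteness beyond what was already used to ensure uniqueness of the minimizer via Lem.~\ref{lem:piC0}.
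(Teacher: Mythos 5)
Your proof is correct and follows essentially the same route as the paper's: the first-order optimality condition $f'(1/2)=0$ for the convex quadratic $t\mapsto D((1-t)w_i+tw_j,\,p)^2$, combined with the identity $a^tQ_pa-b^tQ_pb=(a+b)^tQ_p(a-b)$, yields $D(w_i,p)=D(w_j,p)$. Your explicit justification that the interior minimizer on the segment forces the derivative to vanish is a point the paper leaves implicit, but the argument is the same.
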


The following technical lemma is the key in constructing a contradiction by showing that, 
for sufficiently large circles $C(\sigma)$, no
point $q\in C(\sigma)$ can be closest to some segment
$(w_i,w_j)\in\partial\mathcal{CH}(V)$ that it is behind of ($q\notin H^{-}_{ij}$); 
where, as before, the open half space $H^{-}_{ij}$ is chosen to be 
the only of the two half spaces on either side of the supporting line $l_{ij}$ of $(w_i,w_j)$ 
such that $H^{-}_{ij}\cap V\neq\phi$.

%We have now most of the tools needed to prove that $\mathcal{B}\subset B$. 
%The proof is by contradiction of the following Lemma. 
%As before, the open half space $H^{-}_{ij}$ is chosen to be 
%the only of the two half spaces on either side of the line $l_{ij}$ passing
%through two sites $w_i,w_j\in W$ that, because not all sites are colinear, 
%satisfies $H^{-}_{ij}\cap V\neq\phi$. 
%(by Lem.~\ref{lem:halfspace}) 

\begin{lemma}\label{lem:contrad}
	There is $\rho$ such that, %for all $p\in\mathbb{R}^2$ with $\|p\| > \rho$, 
	for any segment $(w_i,w_j)\in\mathcal{B}$, with supporting line $l_{ij}$, %in the boundary of $\mathcal{CH}(V)$, 
	every $p\in H^{-}_{ij}$ with $\|p\| > \rho$
    whose closest point in $l_{ij}$ is $m_p\in\overline{w_i w_j}$ is 
	closer to a site in $V\setminus\{w_i,w_j\}$ than to $l_{ij}$. 
%	there is no $p\in H^{-}_{ij}$ with $\|p\| > \rho$
%	whose closest point in $l_{ij}$ is $m\in\overline{w_i w_j}$ and that is 
%	closer to $m$ than to $V\setminus\{w_i,w_j\}$. 
\end{lemma}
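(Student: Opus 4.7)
The plan is a direct coordinate computation, interpreting ``distance from $p$ to $l_{ij}$'' as the $Q_p$-distance $d_{Q_p}(p,l_{ij}) := \min_{x\in l_{ij}} D(x,p)$ with $m_p$ the $Q_p$-closest foot (consistent with how $\pi$ was defined in Lem.~\ref{lem:piC0}). After a rigid motion I would place $l_{ij}$ along $y=h$ with $\overline{w_iw_j}=[-a,a]\times\{h\}$ and $H^{-}_{ij}=\{y<h\}$. Writing $Q_p=(q_{ij})$ and $H:=h-y_p>0$, a standard Lagrange-multiplier computation gives the foot $m_p=(x_p - Hq_{12}/q_{11},\,h)$ and $d_{Q_p}(p,l_{ij})^2 = H^2\det(Q_p)/q_{11}$, so the hypothesis $m_p\in\overline{w_iw_j}$ becomes the linear constraint $x_p = Hq_{12}/q_{11}+t$ for some $t\in[-a,a]$.

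Next I would pick any site $v\in V\setminus\{w_i,w_j\}$ with $v_y<h$ strictly, which exists by the non-collinearity assumption (otherwise all of $V$ would lie on $l_{ij}$). Setting $\alpha=v_x-x_p$ and $\beta=v_y-h<0$, expanding $D(v,p)^2 - d_{Q_p}(p,l_{ij})^2$ and then substituting $\alpha=-Hq_{12}/q_{11}+(v_x-t)$ from the constraint, the three $H^2$ contributions cancel identically, as do the $q_{12}$ parts of the $H^1$ term, producing the strictly \emph{linear} (in $H$) expression
\[
D(v,p)^2 - d_{Q_p}(p,l_{ij})^2 \;=\; \frac{2\beta\det(Q_p)}{q_{11}}\,H \;+\; \bigl(q_{11}(v_x-t)^2 + 2q_{12}\beta(v_x-t) + q_{22}\beta^2\bigr).
\]
Since $\beta<0$ and $\det(Q_p),q_{11}>0$, the slope in $H$ is strictly negative, so the difference becomes negative above some threshold $H_0(p,v)$.

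For uniformity, the bracketed constant equals $(v_x-t,\beta)\,Q_p\,(v_x-t,\beta)^t \le \lambda_2\bigl((v_x-t)^2+\beta^2\bigr)$, and $q_{11}/\det(Q_p)\le 1/\lambda_1$, so the anisotropy bound gives $H_0\le\gamma^2\bigl((v_x-t)^2+\beta^2\bigr)/(2|\beta|)$, which is bounded by a constant depending only on $\gamma$, $V$, and the segment (since $|v_x-t|$ and $|\beta|=h-v_y$ are fixed positive geometric quantities). The same constraint, together with $|q_{12}/q_{11}|<\gamma$, implies $\|p\|^2 \le H^2(1+\gamma^2) + O(H)$, so $H$ grows at least linearly with $\|p\|$. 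Choosing $\rho$ large enough segment-by-segment therefore forces $H>H_0$, and taking the maximum over the finitely many segments of $\mathcal{B}$ yields the uniform $\rho$ required.

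The main obstacle, and the crucial step, is the $H^2$ cancellation in the key expansion. Although it looks somewhat miraculous, it is not coincidental: geometrically, $m_p-p$ is $Q_p$-orthogonal to $l_{ij}$, so $d_{Q_p}(p,l_{ij})^2$ absorbs precisely the squared $Q_p$-length of the component of $p-v$ along that normal, leaving only the ``tangential'' component, which grows linearly (not quadratically) in $H$ and with strictly negative sign because $v$ lies strictly below $l_{ij}$.
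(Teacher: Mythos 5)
Your proof is correct, and I verified the central expansion: with $m_p=(x_p-Hq_{12}/q_{11},h)$ and $d_{Q_p}(p,l_{ij})^2=H^2\det(Q_p)/q_{11}$, substituting the constraint $x_p=Hq_{12}/q_{11}+t$ does make the $H^2$ and the $q_{12}$-dependent $H^1$ contributions cancel, leaving exactly $\frac{2\beta\det(Q_p)}{q_{11}}H+(v_x-t,\beta)Q_p(v_x-t,\beta)^t$; the resulting threshold $H_0\le\gamma^2\bigl((v_x-t)^2+\beta^2\bigr)/(2|\beta|)$ is uniform, and the bound $|q_{12}/q_{11}|\le\gamma$ correctly converts ``$\|p\|$ large'' into ``$H$ large'' on the constrained set. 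Where you differ from the paper is in mechanics rather than in geometry: the paper's proof of this lemma is a two-line reduction to its technical Lemma C (Lem.~\ref{lem:tech}), which is itself proved by transforming by the normalized square root $M'_p$ so that the $Q_p$-ball around $p$ tangent to $l_{ij}$ at $m_p$ becomes a Euclidean circle tangent to a line, and then applying the elementary criterion for a point to lie inside such a tangent circle. You reprove that content from scratch by a bare-hands quadratic expansion, and your closing remark correctly identifies why the $H^2$ cancellation is forced ($m_p-p$ is $Q_p$-orthogonal to $l_{ij}$). The two routes even produce essentially the same threshold, $\gamma^2\max_{m\in\overline{w_iw_j}}\|v-m\|^2/(2d(v,l_{ij}))$ up to the paper's apparent typo of $\|q-m\|$ for $\|q-m\|^2$ in Lem.~\ref{lem:tech}. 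The paper's factoring buys reusability (Lem.~\ref{lem:tech} is also invoked in Lem.~\ref{lem:halfspace} and Lem.~\ref{lem:VW}) and a cleaner geometric picture; yours buys self-containment and makes the role of the anisotropy bound $\gamma$ completely explicit. The one point worth stating more carefully is the existence of your site $v$ with $v_y<h$ strictly: it follows because $(w_i,w_j)\in\mathcal{B}$ forces all sites into the closed half-plane $\overline{H^{-}_{ij}}\cup l_{ij}$, so a site off $l_{ij}$ (guaranteed by non-collinearity) must lie strictly in $H^{-}_{ij}$.
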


%Lem.~\ref{lem:BB} is the final piece needed to show that $\mathcal{B}=B$. 
%It uses the following basic lemma. 

\begin{lemma}\label{lem:Sn}
	Every continuous function $F:\mathbb{S}^n\rightarrow\mathbb{S}^n$ that is not onto has a fixed point. 
\end{lemma}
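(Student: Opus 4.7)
The plan is to reduce the claim to Brouwer's fixed point theorem via stereographic projection away from a point that $F$ misses. Since $F$ is not onto, I pick some $q\in\mathbb{S}^n\setminus F(\mathbb{S}^n)$, and let $\phi:\mathbb{S}^n\setminus\{q\}\to\mathbb{R}^n$ denote stereographic projection from $q$, which is a homeomorphism. Because $F$ never takes the value $q$, the composition $\tilde{F}:=\phi\circ F\circ\phi^{-1}:\mathbb{R}^n\to\mathbb{R}^n$ is well defined and continuous.

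Next I bound the image so that Brouwer applies. The image $F(\mathbb{S}^n)$ is a compact subset of $\mathbb{S}^n$ disjoint from $q$, hence a compact subset of $\mathbb{S}^n\setminus\{q\}$; therefore $\phi(F(\mathbb{S}^n))$ is a compact, and in particular bounded, subset of $\mathbb{R}^n$, so it lies inside some closed ball $\bar{B}(0,R)\subset\mathbb{R}^n$. In particular $\tilde{F}$ sends all of $\mathbb{R}^n$ into $\bar{B}(0,R)$, so the restriction $\tilde{F}\big|_{\bar{B}(0,R)}:\bar{B}(0,R)\to\bar{B}(0,R)$ is a continuous self-map of a compact convex set.

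Brouwer's fixed point theorem then produces $y^*\in\bar{B}(0,R)$ with $\tilde{F}(y^*)=y^*$. Unwinding the conjugation, $\phi(F(\phi^{-1}(y^*)))=y^*=\phi(\phi^{-1}(y^*))$, and since $\phi$ is injective the point $x^*:=\phi^{-1}(y^*)\in\mathbb{S}^n\setminus\{q\}$ satisfies $F(x^*)=x^*$, as required. There is no genuinely hard step here: the only thing to check carefully is that non-surjectivity is exactly what is needed to make the conjugation $\phi\circ F\circ\phi^{-1}$ defined on all of $\mathbb{R}^n$, converting a sphere-to-sphere problem into a self-map of Euclidean space where Brouwer applies directly.
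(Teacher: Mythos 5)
Your proof is correct and follows essentially the same route as the paper: conjugate $F$ by a homeomorphism from the punctured sphere (puncturing at a missed point) to Euclidean space, observe the image is compact, and apply Brouwer. If anything, your version is slightly more careful than the paper's, since you restrict to a closed ball (compact and convex, so Brouwer applies directly) rather than to the compact image set itself, which need not be convex.
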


\begin{lemma}\label{lem:hard}
 To every segment  in the boundary of
$\mathcal{CH}(V)$ corresponds a boundary edge of $G$. \emph{$[B\supseteq\mathcal{B}]$}
%Every boundary edge $(v_i,v_j)$ of $G$ is a segment in the boundary of
%$\mathcal{CH}(V)$               
%($B\subset\mathcal{B}$). 
%	If not all sites in $V$ are colinear, then $\mathcal{B}\subset B$. 
\end{lemma}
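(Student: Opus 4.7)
The plan is to show, for each segment $(w_i,w_j) \in \mathcal{B}$, that the primal edge $\tilde{E}_{w_i,w_j}$ is unbounded, which by definition of $B$ places $(w_i,w_j) \in B$. Following the outline given in the paper, this reduces to producing, for every sufficiently large $\sigma$, a point $p \in C(\sigma)$ with $\pi(p) = (w_i+w_j)/2$: any such $p$ lies in $\tilde{E}_{w_i,w_j}$ by Lem.~\ref{lem:mij} and the remark preceding it. Combined with the connectedness of $\tilde{E}_{w_i,w_j}$ provided by Cor.~\ref{cor:Pij}, obtaining such points at arbitrarily large radii forces the edge to be unbounded.

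I would prove the existence of such a $p$ by contradiction. Fix $\sigma$ larger than the threshold $\rho$ of Lem.~\ref{lem:contrad}, and suppose no $p \in C(\sigma)$ satisfies $\pi(p) = (w_i+w_j)/2$. Then the map $\nu \circ \pi : C(\sigma) \to C(\sigma)$, continuous by Lem.~\ref{lem:piC0} and by continuity of $\nu$, misses the point $z_0 := \nu((w_i+w_j)/2)$. Consequently the composition $F := -\nu \circ \pi$, regarded as a self-map of $\mathbb{S}^1 \cong C(\sigma)$, misses $-z_0$ and is therefore not onto. By Lem.~\ref{lem:Sn} it admits a fixed point $q \in C(\sigma)$, giving $\nu(\pi(q)) = -q$.

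From this fixed point I would extract the contradiction via Lem.~\ref{lem:contrad}. The identity $\nu(\pi(q)) = -q$ forces $\pi(q) = -(\|\pi(q)\|/\sigma)\, q$, so $\pi(q)$ lies on the ray from the origin opposite to $q$. Let $(w_{i'},w_{j'}) \in \mathcal{B}$ be the segment containing $\pi(q)$. Because the origin is interior to $\mathcal{CH}(V)$, the line through the origin and $\pi(q)$ meets $l_{i'j'}$ only at $\pi(q)$, so $q$, on the opposite side of the origin from $\pi(q)$, must lie in $H^-_{i'j'}$, with $\pi(q) \in \overline{w_{i'}w_{j'}}$ as its closest point on $l_{i'j'}$. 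Since $\|q\| = \sigma > \rho$, Lem.~\ref{lem:contrad} supplies a site $v \in V \setminus \{w_{i'},w_{j'}\}$ strictly closer to $q$ than to $l_{i'j'}$, hence strictly closer than $\pi(q) \in l_{i'j'}$; but $v \in \mathcal{CH}(V)$, contradicting the minimality of $\pi(q)$ as the $\mathcal{CH}(V)$-projection of $q$.

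The main obstacle I anticipate is the geometric translation in the last step: extracting from the single algebraic identity $\nu(\pi(q)) = -q$ both the half-plane membership $q \in H^-_{i'j'}$ and the guarantee that $\pi(q)$ (rather than some point outside $\overline{w_{i'}w_{j'}}$) serves as the relevant closest point on $l_{i'j'}$ needed for Lem.~\ref{lem:contrad}. Once these are verified, the fixed-point theorem and Lem.~\ref{lem:contrad} fit together cleanly to complete the argument.
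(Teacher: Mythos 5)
Your argument follows the paper's proof essentially step for step: the same reduction to finding $p\in C(\sigma)$ with $\pi(p)=(w_i+w_j)/2$, the same antipodal composition $A\circ\nu\circ\pi$ combined with Lem.~\ref{lem:Sn} to produce a fixed point $q$ with $\nu(\pi(q))=-q$, and the same appeal to Lem.~\ref{lem:contrad} to contradict the minimality of $\pi(q)$ over $\mathcal{CH}(V)$. The one point to tighten is the very first step: Lem.~\ref{lem:mij} yields only equidistance $D(w_i,p)=D(w_j,p)$, and the paper additionally invokes Lem.~\ref{lem:VW} to conclude that such a $p$ is actually \emph{closest} to $w_i,w_j$ among all sites and hence lies in $\tilde{E}_{w_i,w_j}$.
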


Therefore,

\begin{theorem}\label{thm:boundary}
	%If the ratio of eigenvalues of $Q$ is bounded in $\mathbb{R}^2$, then
%$B=\mathcal{B}$. 
If the metric $Q$ has bounded ratio of eigenvalues, then the boundary edges of $\bar{G}$ are the same as the boundary edges of the convex hull of $V$.  \emph{$[B=\mathcal{B}]$} 
%the set of boundary edges of the dual of an orphan free DW diagram of sites
%$V$ is the set of boundary edges of the convex hull of $V$. 
\end{theorem}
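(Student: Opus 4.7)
The plan is straightforward: the theorem is a set equality $B = \mathcal{B}$, and the preceding lemmas already establish the two containments separately. So I would simply assemble them.

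First, I would invoke Lemma \ref{boundary_easy} to get $B \subseteq \mathcal{B}$: every boundary edge of $G$ (i.e., every edge whose primal in $\tilde P$ is unbounded) corresponds to a segment on $\partial\mathcal{CH}(V)$. This direction does not require the bounded-anisotropy hypothesis. Then I would invoke Lemma \ref{lem:hard} for the reverse inclusion $\mathcal{B} \subseteq B$: each segment $(w_i,w_{i\oplus 1})$ of the convex hull boundary arises as a boundary edge of $G$. Combining the two inclusions gives $B = \mathcal{B}$, which is the statement of the theorem.

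The only subtlety worth flagging is where the hypothesis ``$Q$ has bounded ratio of eigenvalues'' is actually used. It is not used in Lemma \ref{boundary_easy}; it enters in Lemma \ref{lem:hard}, specifically through the technical Lemma \ref{lem:contrad}, which says that once a point $p \in H^{-}_{ij}$ is sufficiently far from the origin, it must be closer to some site in $V \setminus \{w_i,w_j\}$ than to $l_{ij}$. Without bounded anisotropy the distance $D$ could be arbitrarily stretched by $Q$ far from the sites, and one could not guarantee the existence of a sufficiently large radius $\rho$ that makes the Brouwer-based contradiction in Lemma \ref{lem:Sn} go through. So I would take care to note that the hypothesis of the theorem is precisely what licenses the application of Lemma \ref{lem:hard}.

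Thus my ``proof'' is essentially one sentence: combine Lemmas \ref{boundary_easy} and \ref{lem:hard}, which together show $B \subseteq \mathcal{B}$ and $\mathcal{B} \subseteq B$, yielding $B = \mathcal{B}$. The main obstacle was never the theorem itself but rather the construction and analysis carried out in Lemmas \ref{lem:mij}, \ref{lem:contrad}, \ref{lem:Sn}, and \ref{lem:hard} — all of which were discharged before this statement.
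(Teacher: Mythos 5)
Your proof is correct and is exactly what the paper does: the theorem follows immediately ("Therefore") from Lemma~\ref{boundary_easy} giving $B\subseteq\mathcal{B}$ and Lemma~\ref{lem:hard} giving $\mathcal{B}\subseteq B$. Your remark about where the bounded-anisotropy hypothesis enters (via Lemmas~\ref{lem:contrad} and~\ref{lem:VW} inside the proof of Lemma~\ref{lem:hard}) is also accurate.
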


\begin{corollary}\label{col:simple-boundary}
%	If the ratio of eigenvalues of $Q$ is bounded in $\mathbb{R}^2$, then
If the metric $Q$ has bounded ratio of eigenvalues, then the boundary edges of $\bar{G}$ form %a simple polygonal chain (if all vertices are colinear), or
a simple, closed, convex polygonal chain. % (if not all vertices are colinear). 
\end{corollary}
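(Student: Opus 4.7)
The plan is to derive this corollary directly from Theorem~\ref{thm:boundary}, invoking only elementary convex-geometry facts. Theorem~\ref{thm:boundary} already establishes the combinatorial identity $B = \mathcal{B} = \{(w_i, w_{i\oplus 1}) : i = 1,\ldots,m\}$, where $w_1,\ldots,w_m$ are the convex-hull sites of $V$ listed in clockwise order. All that is left is to translate this identity into the claimed geometric statement about the straight-edge drawing $\bar{G}$.

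First, I would recall from the definition of $\bar{G}$ that every abstract boundary edge $(w_i, w_{i\oplus 1})$ is drawn as the straight segment $\overline{w_i w_{i\oplus 1}}$. Concatenating these in cyclic order therefore yields precisely the closed polygonal chain $\overline{w_1 w_2}, \overline{w_2 w_3}, \ldots, \overline{w_m w_1}$, in which consecutive segments meet at exactly one shared endpoint. Next, I would invoke the standard fact that this chain, whose vertices are the extreme points of $\mathcal{CH}(V)$ listed in cyclic order, coincides with $\partial \mathcal{CH}(V)$: it is \emph{simple} because non-adjacent edges of a convex polygon cannot intersect, \emph{closed} because the cyclic indexing returns the chain to $w_1$, and \emph{convex} because the chain turns in a consistent direction by less than $\pi$ at each vertex, as follows from the supporting-line characterization of hull edges.

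The main (and only) obstacle in proving this corollary is already absorbed upstream, in establishing Theorem~\ref{thm:boundary} via Lemmas~\ref{boundary_easy} and~\ref{lem:hard}. Once the identification $B = \mathcal{B}$ is in hand, no further analysis of the metric $Q$, of orphan-freedom, or of the primal diagram $\tilde{P}$ is required --- the corollary reduces to a routine fact about boundaries of convex polygons in $\mathbb{R}^2$.
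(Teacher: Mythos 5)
Your proposal is correct and matches the paper's (implicit) argument exactly: the paper treats this corollary as an immediate consequence of Theorem~\ref{thm:boundary}, since once $B=\mathcal{B}$ is known, the boundary edges of $\bar{G}$ are literally the edges of $\partial\mathcal{CH}(V)$, which form a simple, closed, convex polygonal chain by elementary convex geometry. Your observation that all the real work lives in Lemmas~\ref{boundary_easy} and~\ref{lem:hard} is also consistent with how the paper organizes the proof.
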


%Finally, we conclude this section by showing that 
The ECE property is sufficient to show that no face in $\bar{G}$ is degenerate, a fact that will be used in the following section. 

\begin{lemma}\label{lem:degen}
	$\bar{G}$ has no degenerate (null area) elements. 
\end{lemma}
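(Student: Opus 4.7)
The plan is to leverage Theorem~\ref{th:ece} directly. By the empty circum-ellipse property, the vertices of any face of $\bar G$ lie on a common ellipse, and the key geometric fact is that an ellipse, being a strictly convex curve, meets any line in at most two points. This rules out any three vertices of a face being collinear, which is essentially what ``null area'' would require.

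More concretely, I would first handle the trivial lower-dimensional cases: vertices of $\bar G$ are sites in $V$, hence points; edges of $\bar G$ join \emph{distinct} sites (since $V$ is a set and has no coincident elements), so every edge has positive Euclidean length and is therefore non-degenerate. It remains to show every face has strictly positive area.

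Let $f$ be a face of $\bar G$ with incident vertices $v_1,\dots,v_m$, $m\ge 3$. By Thm.~\ref{th:ece}, there is an ellipse $\theta$ (centered at the corresponding Voronoi vertex $c$ and determined by $Q_c$) that passes through $v_1,\dots,v_m$ and has no site in its interior. Since $\theta$ is the boundary of a strictly convex bounded region, any straight line $\ell$ satisfies $|\ell \cap \theta|\le 2$. Hence among $v_1,\dots,v_m$ no three are collinear. In the current section we are assuming that every Voronoi vertex is equidistant to exactly three sites, so $m=3$, and the three vertices of $f$ form a triangle with positive area in the straight-edge drawing; the general $m\ge 3$ case reduces to the same fact, since $m$ points on $\theta$ with no three collinear form a convex polygon of positive area.

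I do not expect any real obstacle: the argument is a one-line consequence of the ECE property plus the line-ellipse intersection bound. The only thing worth being careful about is to note explicitly that the face in $\bar G$ is the straight-edge polygon on $v_1,\dots,v_m$ (so ``area'' is unambiguous) and that the no-coincident-sites assumption is needed to rule out degenerate edges; both are already in place from Section~\ref{sec:setup}.
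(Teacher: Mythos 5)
Your argument is correct and is essentially the paper's own proof: the paper likewise takes the witness ellipse $\theta_c(u)$ centered at the dual primal vertex $c$ (the same ellipse underlying Thm.~\ref{th:ece}) and concludes from the fact that a line meets an ellipse in at most two points that the face's vertices cannot be collinear. Your additional remarks on edges and on the $m>3$ case are consistent with what the paper handles in Sec.~\ref{sec:setup} and Appendix~G.
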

%\begin{proof}See Appendix D\end{proof}

In Thm.~\ref{th:main}, we show that, even if Voronoi vertices are incident to more than three Voronoi regions (equidistant to more than three sites), % (incident to more than three Voronoi regions), 
every face in $\bar{G}$ is a (strictly) convex polygon, and therefore can be trivially triangulated (say in a fan arrangement). 
%In Appendix Appendix G, we show that every face in $\bar{G}$ is also convex, even if Voronoi vertices are equidistant to more than three sites. }
The resulting triangulation has no degenerate elements, and all its triangles satisfy the ECE condition (using the same witness ellipsoid as the convex polygon from which they are triangulated).

% -------------------------------------- 
%   Sec. Dual: Interior
% ---------------------------------------
\section{Dual of Orphan-free Diagram (Part II: interior)}\label{sec:interior}

In this section, we assume that the (topological) boundary of $\bar{G}$ is simple and closed, and prove that $\bar{G}$ must be embedded.
The main argument in the proof 
uses  Theorems~\ref{th:ece} and~\ref{thm:boundary}, Cor.~\ref{col:simple-boundary}, as well as 
the theory of discrete one-forms on graphs, 
%results from~\cite{1form}, 
to show that there are no
``edge foldovers" in $\bar{G}$ % of $G$  
(edges whose two incident faces are on the same side of its supporting line), 
%We use Thm.~\ref{th:ece}, Corollaries~\ref{thm:boundary}
%and~\ref{col:simple-boundary}, % from Section~\ref{sec:boundary}, 
%as well as results from~\cite{1form}, to show that $\bar{G}$ has no edge foldovers, 
and use this to conclude that $\bar{G}$ is embedded (Thm.~\ref{th:main}). 
As in %the previous 
Sec.~\ref{sec:boundary}, we assume that not all sites are colinear 
(the simpler colinear case was addressed in Appendix B). %Sec.~\ref{sec:colinear}),
We distinguish between the sites $W\subseteq V$ that lie on the boundary of the convex hull, and the remaining, or
\emph{interior sites} ($V\setminus W$).

The following definition, from~\cite{1form}, 
assumes that, for each edge
$(v_i,v_j)$ in $G$, we %pick an arbitrary reference orientation for each edge
distinguish the two opposing half-edges
$(v_i,v_j)$ and $(v_j,v_i)$.

\begin{definition}[Gortler et al.\ \cite{1form}]\label{def:1form}
A non-vanishing (discrete) one-form $\xi$  is an assignment of a real value
$\xi_{ij} \neq 0$ to each half edge $(v_i,v_j)$ in $G$, such that 
$\xi_{ji} = -\xi_{ij}$. 
\end{definition}

%In the remainder, we simply refer to discrete one-forms as one-forms. 

Since $\bar{G}$ has the same structure as $G$, we can construct a non-vanishing one-form
over $\bar{G}$ as follows. 
Given some unit direction vector $n\in\mathbb{S}^1$
(in coordinates $n=\left[n_1,n_2\right]^t$), 
we assign a real
value $z(v) = n^t v$ to each vertex $v$ in $\bar{G}$, and define 
$\xi_{ij} = z(v_i) - z(v_j)$, which clearly satisfies 
$\xi_{ji} = -\xi_{ij}$. The one-form, denoted by $\xi^n$, 
is non-vanishing if, for all edges $(v_i,v_j)\in E$, 
it is $\xi_{ij} = n^t (v_i - v_j) \neq 0$. 
That is, % $\xi^n$ is non-vanishing 
if the direction $n$ is not orthogonal to any edge. 
The set of edges has cardinality $|E| \le |V| (|V|-1)/2$, and in particular
it is finite. Therefore \emph{almost all} directions $n\in\mathbb{S}^1$ generate a non-vanishing one-form $\xi^n$.

%Recall that $G$ derives its topological structure by virtue of being the
%unique embedded dual of an embedded graph $P$. 
%We consider the embedded graph $G$, and show that it is also embedded when
%drawn using straight edges.

Since $G=(V,E,F)$ is an planar graph with a well-defined face structure,
there is, for each face $f\in F$, a cyclically ordered set
$\partial f$ of half-edges round the face. 
Likewise, for each vertex $v\in V$, the set $\delta v$ of cyclically ordered
(oriented) half-edges emanating from each vertex is well-defined. 
%If $\partial f$ is the ordered set of half-edges round a face,
%then a face $f$ is said to be \emph{closed}

\begin{definition}[Gortler et al.\ \cite{1form}]
Given %$n\in\mathbb{S}^1$ and 
non-vanishing one-form $\xi^n$ corresponding to $n\in\mathbb{S}^1$, 
the index of vertex $v$ with respect to $\xi^n$ is $\mathbf{ind}_{_{\xi^n}}(v) = 1 - \mathbf{sc}_{_{\xi^n}}(v) / 2$, where $\mathbf{sc}_{_{\xi^n}}(v)$ is the
number of sign changes of $\xi^n$ as one visits the half-edges of $\delta
v$ in order. \\
The index of face $f$ is $\mathbf{ind}_{_{\xi^n}}(f) = 1 - \mathbf{sc}_{_{\xi^n}}(f)/2$ where $\mathbf{sc}_{_{\xi^n}}(f)$ is the number
of sign changes of $\xi^n$ as one visits the half-edges of $\partial f$ in
order. 
\end{definition}

Note that, by definition, it is always $\mathbf{ind}_{_{\xi^n}}(v) \le 1$. 
A discrete analog of the Poincar\'e-Hopf index theorem relates 
the two indices above:

\begin{theorem}[Gortler et al.\ \cite{1form}]\label{lem:ph}
	%If $G=(V,E,F)$ is a simple planar graph then, 
For any non-vanishing one-form $\xi^n$, it is 
\[ \displaystyle{\sum_{v\in V} \mathbf{ind}_{_{\xi^n}}(v) + \sum_{f\in F} \mathbf{ind}_{_{\xi^n}}(f) = 2} \]
\end{theorem}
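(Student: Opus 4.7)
The plan is to prove this as a discrete Poincar\'e--Hopf theorem, via a corner-counting argument combined with Euler's formula. Call a \emph{corner} a triple $(v, h_1, h_2)$ where $h_1, h_2$ are consecutive half-edges in the cyclic order $\delta v$; equivalently, by the polygonal-mesh structure, it is a triple $(f, h'_1, h'_2)$ where $h'_1, h'_2$ are consecutive in $\partial f$. Either bookkeeping gives the total number of corners as $\sum_v \deg(v) = \sum_f |\partial f| = 2|E|$. I would then show that at each corner, exactly one of the two cyclic orders (vertex-side or face-side) records a sign change of $\xi^n$, and conclude via Euler's formula.

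For the key local observation, fix a corner at vertex $v$ inside face $f$, whose two bounding edges lead to neighbors $u_1, u_2$. In $\delta v$, the consecutive half-edges at the corner are $(v,u_1)$ and $(v,u_2)$; traversing $\partial f$ with a consistent orientation, the consecutive half-edges at the corner are the incoming $(u_1,v)$ followed by the outgoing $(v,u_2)$. Since $\xi_{u_1 v} = -\xi_{v u_1}$ and neither value is zero, the vertex-side pair has a sign change if and only if the face-side pair does not. Summing over all corners yields
\[
\sum_{v \in V} \mathbf{sc}_{\xi^n}(v) \;+\; \sum_{f \in F} \mathbf{sc}_{\xi^n}(f) \;=\; 2|E|.
\]
Substituting into $\mathbf{ind}_{\xi^n}(v) = 1 - \mathbf{sc}_{\xi^n}(v)/2$ and $\mathbf{ind}_{\xi^n}(f) = 1 - \mathbf{sc}_{\xi^n}(f)/2$ gives $\sum_v \mathbf{ind}_{\xi^n}(v) + \sum_f \mathbf{ind}_{\xi^n}(f) = |V| - |E| + |F|$, which equals $2$ by Euler's formula.

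The main point of care is the bookkeeping of the outer face: since $\bar{G}$ is defined in the plane with the unbounded face dual to $p_\infty$ excluded, to obtain $\chi = 2$ from Euler's formula one must include that outer face in $F$, equivalently view $G$ as a cellular decomposition of $S^2$. Once this convention is fixed, the argument is entirely general and depends on the non-vanishing hypothesis on $\xi^n$ only through the fact that every half-edge carries a well-defined nonzero sign (so that ``sign change'' between consecutive half-edges is unambiguous at each corner).
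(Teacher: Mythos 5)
The paper does not actually prove this statement; it cites Theorem 3.5 of~\cite{1form} and only adds the remark that the excluded outer face would have null index. Your corner-counting argument is the standard proof of that cited theorem (and is essentially how~\cite{1form} proves it): the local observation that exactly one of the vertex-side pair $(\xi_{v u_1},\xi_{v u_2})$ and the face-side pair $(-\xi_{v u_1},\xi_{v u_2})$ exhibits a sign change is correct, the count $\sum_v \mathbf{sc}_{_{\xi^n}}(v)+\sum_f \mathbf{sc}_{_{\xi^n}}(f)=2|E|$ follows, and Euler's formula (for the connected spherical cell complex) finishes it. So you are supplying a self-contained proof where the paper supplies a citation; that is a legitimate and more informative route.

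There is, however, one unfinished step. You resolve the outer-face bookkeeping by ``including that outer face in $F$,'' but the theorem as stated sums over the $F$ of $G$, from which the face dual to $p_\infty$ has been explicitly excluded. Your argument therefore proves $\sum_v \mathbf{ind}_{_{\xi^n}}(v)+\sum_{f\in F}\mathbf{ind}_{_{\xi^n}}(f)+\mathbf{ind}_{_{\xi^n}}(f_\infty)=2$, and you still owe the claim $\mathbf{ind}_{_{\xi^n}}(f_\infty)=0$. This is not automatic from the boundary being merely simple and closed: the index of $f_\infty$ is $1-k$ where $k$ is the number of local maxima of $z(v)=n^t v$ along the boundary cycle, and a non-convex simple polygon can have $k>1$, giving a strictly negative index. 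What saves the statement is the standing hypothesis of Sec.~5 together with Cor.~\ref{col:simple-boundary}: the boundary of $\bar{G}$ is a simple, closed, \emph{convex} polygonal chain, so a generic linear functional attains exactly one local maximum and one local minimum on it, $\mathbf{sc}_{_{\xi^n}}(f_\infty)=2$, and $\mathbf{ind}_{_{\xi^n}}(f_\infty)=0$. Add that sentence and your proof is complete and matches the identity as stated.
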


Note that this follows from Theorem 3.5 of~\cite{1form} because the unbounded, 
outside face, which is not in $G$, is assumed in this section to be closed and simple, %, by Lem.~\ref{col:simple-boundary},
%simple and convex, 
and therefore would have null index. %if included in $G$. 
%where $F$ is the set of faces in $G$ (plus the unbounded face). 
Note that the machinery from~\cite{1form} to deal with degenerate cases
isn't needed here because vertices, by definition, cannot coincide ($V$ is not a multiset). 
All  proofs in this section, except for that of Thm.~\ref{th:main}, are  in Appendix F. 
%We following one-forms are defined on $\bar{G}$, whose topological structure
%is that of $G$. %by using the structure  the one-form on $G$, but use it to analyze $\bar{G}$. 

The one-forms defined above satisfy the following property. 

\begin{lemma}\label{lem:non-negative}
	Given a non-vanishing $\xi^n$, the sum of indices of interior vertices ($V\setminus W$) of $\bar{G}$ is non-negative. 
\end{lemma}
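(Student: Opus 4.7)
The plan is to apply the discrete Poincar\'e-Hopf identity of Thm.~\ref{lem:ph} globally and derive the non-negativity from separate upper bounds on the face and boundary-vertex contributions. Rewriting that identity as $\sum_{v\in V\setminus W}\mathbf{ind}_{\xi^n}(v) = 2 - \sum_{f\in F}\mathbf{ind}_{\xi^n}(f) - \sum_{w\in W}\mathbf{ind}_{\xi^n}(w)$, the lemma reduces to establishing the two bounds $\sum_{f\in F}\mathbf{ind}_{\xi^n}(f) \le 0$ and $\sum_{w\in W}\mathbf{ind}_{\xi^n}(w) \le 2$.

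For the face bound, I would observe that for any face $f$ with cyclically ordered vertices $v_1,\dots,v_m$, the half-edge values $\xi^n_{v_i v_{i+1}} = n^t(v_i - v_{i+1})$ around $\partial f$ telescope to zero and are individually nonzero by the non-vanishing of $\xi^n$, so they cannot all share the same sign. Hence $\mathbf{sc}_{\xi^n}(f) \ge 2$ and $\mathbf{ind}_{\xi^n}(f) \le 0$ for every face (with equality in the current all-triangle setting).

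For the boundary bound, I would exploit Cor.~\ref{col:simple-boundary}: the boundary of $\bar{G}$ is a simple closed convex polygonal chain, and non-vanishing of $\xi^n$ forces $n$ to avoid perpendicularity to any boundary edge. Consequently the heights $h(w) = n^t w$ are strictly monotone along each of the two convex arcs of $\mathcal{B}$ joining a unique minimum $w_{\min}$ and a unique maximum $w_{\max}$ of $h$ on $V$ (both attained at boundary vertices since $h$ is linear on $\mathcal{CH}(V)$). At $w_{\max}$ every neighbor $u$ in $G$ satisfies $h(u) < h(w_{\max})$---the strict inequality being forced by $\xi^n_{w_{\max}u}\neq 0$---so all half-edge signs at $w_{\max}$ agree, giving $\mathbf{sc}_{\xi^n}(w_{\max}) = 0$ and $\mathbf{ind}_{\xi^n}(w_{\max}) = 1$; the symmetric argument yields $\mathbf{ind}_{\xi^n}(w_{\min}) = 1$. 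For every other boundary vertex $w_i\in W$, its two boundary neighbors lie on opposite sides of $h(w_i)$ by monotonicity along each arc, so the cyclic sign sequence around $w_i$ contains both positive and negative values, yielding $\mathbf{sc}_{\xi^n}(w_i) \ge 2$ and $\mathbf{ind}_{\xi^n}(w_i) \le 0$. Summing gives $\sum_{w\in W}\mathbf{ind}_{\xi^n}(w) \le 1+1+0 = 2$.

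Substituting both bounds into Thm.~\ref{lem:ph} will then yield $\sum_{v\in V\setminus W}\mathbf{ind}_{\xi^n}(v) \ge 0$. I expect the only genuinely delicate step to be the boundary-monotonicity argument, which bundles together the convexity of the boundary from Cor.~\ref{col:simple-boundary} with the genericity built into the non-vanishing condition in order to guarantee both the uniqueness of $w_{\min}, w_{\max}$ and the ``one-above, one-below'' property for non-extreme boundary vertices; everything else reduces to bookkeeping via the index identity.
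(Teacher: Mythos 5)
Your proposal is correct and follows essentially the same route as the paper's proof: bound every face index by zero via the telescoping (cyclic monotonicity) argument, use the convexity of the boundary chain from Cor.~\ref{col:simple-boundary} to bound the boundary-vertex contribution by two (the paper notes it equals two, with only the extremal vertices having index one), and conclude via the discrete Poincar\'e--Hopf identity of Thm.~\ref{lem:ph}. No gaps; your slightly more detailed monotonicity discussion along the two boundary arcs is a fine elaboration of the same idea.
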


The next two lemmas %, proven in~\ref{app:interior}, 
relate the presence of edge foldovers and 
the ECE %empty circum-ellipse 
property of Definition~\ref{def:ece}  to the indices of vertices in $\bar{G}$. 

%\begin{figure}
%\centering
%\includegraphics[height=5cm]{index-1.jpg}
%\label{fig:index-1}
%\caption{}
%\end{figure}
%
%\begin{figure}
%\centering
%\includegraphics[height=5cm]{index1.jpg}
%\label{fig:index1}
%\caption{}
%\end{figure}

\begin{lemma}\label{lem:index-1}
If $\bar{G}$ has an edge foldover, then there is a non-vanishing one-form $\xi^n$ such
that $\mathbf{ind}_{_{\xi^n}}(v) < 0$ for some interior vertex $v\in V\setminus W$. % has index  $\mathbf{ind}_{_{\xi^n}}(v) < 0$. 
%with respect to $\xi^n$. 
\end{lemma}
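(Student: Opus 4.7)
The plan is to find a direction $n\in\mathbb{S}^1$ for which $\xi^n$ is non-vanishing and for which some interior vertex $v\in V\setminus W$ has at least four sign changes around its cyclic sequence of half-edges, yielding $\mathbf{ind}_{_{\xi^n}}(v)\le -1$.

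Let edge $(v_i,v_j)$ be a foldover, with incident triangles $T_a=(v_i,v_j,v_a)$ and $T_b=(v_i,v_j,v_b)$ both on the same open side of segment $\overline{v_iv_j}$ in $\bar{G}$. I first argue that one of the foldover endpoints, say $v_j$, can be assumed interior: if all four of $v_i,v_j,v_a,v_b$ lay in $W$, then by Cor.~\ref{col:simple-boundary} and Thm.~\ref{thm:boundary} the simple convex boundary chain of $\bar{G}$ coincides with the boundary of $\mathcal{CH}(V)$, and two empty circum-ellipses (Thm.~\ref{th:ece}) sharing $v_i,v_j$ with third vertices $v_a,v_b$ on a common hull sub-arc yield a contradiction with orphan-freedom of the primal. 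Thus WLOG $v_j\in V\setminus W$.

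Next I choose $n$. Place coordinates so that $v_j$ is at the origin and $v_i$ lies at angle $\pi$; the foldover hypothesis puts both $v_a,v_b$ at geometric angles in $(0,\pi)$. Pick $n$ close to angle $\pi/2$ (i.e.\ nearly perpendicular to $v_i-v_j$), perturbed inside a small open wedge to avoid the finitely many bad directions where $\xi^n$ vanishes on some edge. Concretely, take $n$ on the side of $\pi/2$ making $\xi^n_{v_jv_i}>0$; then automatically $\xi^n_{v_jv_a},\xi^n_{v_jv_b}<0$. In the combinatorial cyclic order of half-edges at $v_j$, the consecutive triple $(v_a,v_i,v_b)$ (consecutive because $T_a,T_b$ share the foldover edge) contributes exactly two sign changes to $\mathbf{sc}_{_{\xi^n}}(v_j)$.

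It remains to produce two additional sign changes. Because $v_j$ is interior, its Voronoi region $R(v_j)$ is bounded, so the sites of $V$ must surround $v_j$ on all sides; in particular, at least one Voronoi neighbor $w$ of $v_j$ in $G$ lies in the open lower half-plane below $\overline{v_iv_j}$. For such $w$, $\xi^n_{v_jw}>0$, matching the sign of $\xi^n_{v_jv_i}$. Since the remaining arc of the cyclic sequence at $v_j$ (from just after $v_b$ back around to just before $v_a$) begins and ends with $\xi^n<0$ but passes through the $\xi^n>0$ value at $w$, it must contain at least two more sign changes. Together these give $\mathbf{sc}_{_{\xi^n}}(v_j)\ge 4$, so $\mathbf{ind}_{_{\xi^n}}(v_j)\le -1$, as required.

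The main obstacle is establishing that an interior $v_j$ always has a Voronoi neighbor in the open lower half-plane with respect to $\overline{v_iv_j}$. This is geometrically intuitive from the fact that $R(v_j)$ is bounded and the anisotropic regions of $v_j$'s neighbors must surround it, but making it precise in the anisotropic setting (together with the ancillary edge case where both foldover endpoints are boundary) requires careful use of the continuity of $Q$ and the orphan-free hypothesis — without this input, small combinatorial toy examples show that a foldover alone is consistent with only two sign changes for every choice of $n$.
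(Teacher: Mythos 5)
Your choice of direction $n$ is essentially the paper's: a direction in the open wedge that places the far endpoint of the folded edge strictly on one side of the level line through the chosen fold vertex and the two opposite vertices $v_a,v_b$ strictly on the other, so that the consecutive half-edges $(v_j,v_a),(v_j,v_i),(v_j,v_b)$ carry signs $-,+,-$. Up to that point the two arguments coincide. The difficulty is in the count, and here your proposal has a genuine, self-acknowledged gap: as you correctly note, that consecutive triple contributes only \emph{two} sign changes to the cyclic sequence at $v_j$, which gives $\mathbf{ind}_{_{\xi^n}}(v_j)\le 0$, not $<0$. To conclude you must exhibit a further neighbor of $v_j$, lying in the cyclic arc from $v_b$ back around to $v_a$, whose half-edge value has the same sign as that of $(v_j,v_i)$; you assert this is ``geometrically intuitive'' from boundedness of $R(v_j)$ but do not prove it, and you explicitly concede that without it a foldover is consistent with only two sign changes for every $n$. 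That concession is accurate, and it means the lemma is not established. (For comparison: the paper's own proof simply declares that the subsequence $[(v,v_1);(v,w);(v,v_2)]$ has four sign changes, whereas the honest count is two; so the step you are missing is precisely the step the published argument elides. Your diagnosis of where the real work lies is sharper than the paper's write-up, but a diagnosis is not a proof.)

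A secondary gap is your reduction to the case where a foldover endpoint is an interior vertex. A foldover edge is an interior edge of $\bar{G}$, but an interior edge can be a chord joining two vertices of $W$, and your argument ruling this out (two empty circum-ellipses through $v_i,v_j$ with third vertices on a common hull sub-arc contradicting orphan-freedom) is only gestured at, not carried out; note that Thm.~\ref{th:ece} by itself does not obviously forbid such a configuration. The paper likewise asserts without proof that some endpoint of a non-boundary foldover edge is interior, so you cannot appeal to the paper for this step either. Both gaps would need to be closed for the proposal to stand.
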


\begin{lemma}\label{lem:index1}
Given $n\in\mathbb{S}^1$ and non-vanishing one-form $\xi^n$, if $\bar{G}$ has an interior vertex $v\in V\setminus W$ with index
$\mathbf{ind}_{_{\xi^n}}(v)=1$, then there is a face $f$ of
$G$ that does not satisfy the empty circum-ellipse %(ECE) 
property. 
\end{lemma}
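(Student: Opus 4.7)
The plan is to translate the index-one hypothesis into a geometric constraint on the placement of $v$'s neighbors, then use the closed combinatorial fan of faces incident to the interior vertex $v$ to produce a face $f^*$ whose straight-line triangle must engulf another site of $V$.

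First I would interpret the hypothesis. From $\mathbf{ind}_{\xi^n}(v) = 1 - \mathbf{sc}_{\xi^n}(v)/2 = 1$ I get $\mathbf{sc}_{\xi^n}(v) = 0$, so the half-edge values $\xi^n_{v,u_i} = n^t(v-u_i)$ share a common sign across all neighbors $u_i$ of $v$. After possibly replacing $n$ by $-n$, this becomes $n^t u_i < n^t v$ for every neighbor $u_i$, so every neighbor of $v$ lies strictly below $v$ in direction $n$. Since $v$ is an interior vertex, its primal region $R(v)$ is bounded and simply connected (Lem.~\ref{lem:sc}), and the faces of $G$ incident to $v$ form a closed combinatorial cycle $f_1,\ldots,f_k$ with $f_i=(v,u_i,u_{i+1})$ in topological cyclic order (indices modulo $k$).

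Next I would identify the closing face $f^*$ of the fan. Since all $u_i$ sit strictly below $v$, the angular directions $u_i - v$ lie in an open half-plane of directions spanning less than $\pi$ radians, while the topological cyclic order around $v$ must accumulate a full $2\pi$ winding. Therefore at exactly one step of the topological traversal the cyclic order wraps around through the empty upper angular sector; the corresponding face $f^*$ is the closing face, with its two non-$v$ vertices being the extremal angular neighbors of $v$ (leftmost and rightmost with respect to $n^\perp$), trapping all remaining neighbors angularly in the cone between them. Since any ellipse circumscribing the three vertices of $f^*$ is convex and contains their convex hull, any site strictly inside the triangle $f^*$ lies strictly inside every such ellipse, so no circumscribing ellipse can be empty and $f^*$ fails the ECE property.

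The hard part will be rigorously proving this engulfing claim. The topological cyclic order of neighbors around $v$ need not coincide with their angular order in $\bar G$, so it is not immediate that the two extremal angular neighbors are topologically adjacent (and thus share a common face). More fundamentally, in a Euclidean diagram the index-one hypothesis is already self-contradictory, because $R(v)$ would be unbounded upward, precluding interior-ness; in the anisotropic setting with a varying metric $Q$, one must analyze $\partial R(v)$ more carefully. Since $v$ lies strictly interior to $R(v)$ (Lem.~\ref{lem:interior}) and $R(v)$ is bounded and simply connected, the Voronoi vertices along $\partial R(v)$ must surround $v$ in all angular directions, forcing some Voronoi vertex $c^*$ to lie strictly above $v$ in direction $n$. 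The face $f^*$ dual to $c^*$ is then our candidate, and the witness ellipse of Thm.~\ref{th:ece} centered at $c^*$ with metric $Q_{c^*}$ provides the concrete circumscribing set within which we locate a trapped site to witness the ECE failure.
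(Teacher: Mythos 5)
Your first step is right and matches the paper: $\mathbf{ind}_{\xi^n}(v)=1$ gives $\mathbf{sc}_{\xi^n}(v)=0$, so (after possibly flipping $n$) the line $l=\{x: n^tx=n^tv\}$ strictly separates $v$ from all of its neighbors. But from there the proposal has a genuine gap: you never actually exhibit a face together with a site that lies strictly inside \emph{every} ellipse circumscribing that face's vertices, which is what failing the ECE property requires. Your ``closing face'' $f^*$ does not work: even granting that the two angularly extremal neighbors are topologically adjacent (which, as you note, is unclear), the remaining neighbors being trapped in the angular cone at $v$ does not place any of them inside the \emph{triangle} $f^*$ — they can lie arbitrarily far beyond the segment joining the extremal neighbors. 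Your fallback is worse than a gap, it is self-defeating: a Voronoi vertex $c^*$ on $\partial R(v)$ is dual to a face \emph{incident} to $v$, and Thm.~\ref{th:ece} proves that the ellipse centered at $c^*$ with metric $Q_{c^*}$ is \emph{empty}. Every face incident to $v$ satisfies ECE, so no witness of failure can be found among them; the failing face must be sought elsewhere.

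The paper's proof supplies exactly the missing idea, and it is the crux of the lemma. Delete from $\bar G$ all faces incident to $v$; since they all lie on the far side of $l$, a generic ray $h$ from $v$ in a direction $d$ with $n^td<0$ meets none of them. Shooting $h$ from outside $\mathcal{CH}(V)$ toward $v$, the face count starts at $0$, becomes $1$ upon crossing the (simple, convex) boundary, and at every subsequent transversal edge crossing changes by $0$ or $\pm 2$ (each interior edge bounds exactly two faces); since it cannot go negative, it is at least $1$ at $v$. Hence some face $f$ \emph{not} incident to $v$ contains $v$, and by Carath\'eodory $v$ is a strict convex combination of vertices of $f$. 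The trapped site is then $v$ itself: every ellipse circumscribing the vertices of $f$ is convex, so it contains $v$ in its interior, and $f$ fails ECE. In short, you were looking for the bad face in the fan around $v$, where Thm.~\ref{th:ece} guarantees none exists; the correct argument shows $v$ is engulfed by a face from outside its own fan.
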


The above provides the necessary tools to prove the following key lemma. 
% that $G$ does not have any edge foldovers. 

\begin{lemma}\label{lem:ef}
	%The straight-edge triangulated dual 
$\bar{G}$ %of an orphan-free anisotropic Voronoi diagram in
%$\mathbb{R}^2$ 
has no edge foldovers. 
%$\bar{G}$ has no edge foldovers.
\end{lemma}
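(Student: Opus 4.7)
The plan is a direct proof by contradiction combining the three preceding lemmas with Theorem~\ref{th:ece}. Suppose toward a contradiction that $\bar{G}$ contains an edge foldover. Lemma~\ref{lem:index-1} then hands us a specific direction $n\in\mathbb{S}^1$ whose associated non-vanishing one-form $\xi^n$ assigns a strictly negative index to some interior vertex $v_-\in V\setminus W$.

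The key observation is that, for this same $\xi^n$, every vertex index is an integer no greater than $1$. This is immediate from the definition $\mathbf{ind}_{_{\xi^n}}(v)=1-\mathbf{sc}_{_{\xi^n}}(v)/2$: as one traverses $\delta v$ and returns to the starting half-edge, the sign of $\xi^n$ must match its initial value, so $\mathbf{sc}_{_{\xi^n}}(v)$ is a non-negative even integer. Now apply Lemma~\ref{lem:non-negative} to the same $\xi^n$: the sum of interior vertex indices is non-negative. Since $\mathbf{ind}_{_{\xi^n}}(v_-)\le -1$, there must exist another interior vertex $v_+\in V\setminus W$ with $\mathbf{ind}_{_{\xi^n}}(v_+)>0$, and integrality together with the upper bound force $\mathbf{ind}_{_{\xi^n}}(v_+)=1$.

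Finally, invoking Lemma~\ref{lem:index1} on $v_+$ produces a face of $G$ that fails the empty circum-ellipse property. This directly contradicts Theorem~\ref{th:ece}, which guarantees that \emph{every} face of $G$ satisfies ECE whenever the primal Voronoi diagram is orphan-free. Hence our assumption was false, and $\bar{G}$ has no edge foldovers.

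The main obstacle is really bookkeeping rather than mathematics: one must use the \emph{same} direction $n$ (hence the same one-form $\xi^n$) throughout, so that the negative index from Lemma~\ref{lem:index-1}, the non-negativity from Lemma~\ref{lem:non-negative}, and the index-$1$ vertex extracted via integrality all refer to one coherent assignment. Once that is arranged, the tension between ``at least one negative interior index'' and ``non-negative total interior index'' forces an index-$1$ interior vertex, which Lemma~\ref{lem:index1} cannot reconcile with Theorem~\ref{th:ece}.
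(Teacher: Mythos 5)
Your proof is correct and follows essentially the same route as the paper: invoke Lemma~\ref{lem:index-1} to get a negative-index interior vertex, use Lemma~\ref{lem:non-negative} to force an interior vertex of index $1$, and contradict Theorem~\ref{th:ece} via Lemma~\ref{lem:index1}. Your extra care about integrality of the index and keeping the same one-form $\xi^n$ throughout is a welcome (if minor) tightening of details the paper leaves implicit.
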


Finally, the absence of edge foldovers, %in $\bar{G}$ is sufficient to show that it is
together with a simple and closed boundary, 
is sufficient to show that $\bar{G}$ is 
embedded.
%, % with straight edges, 
%and that, using the results from the previous
%section, it covers the convex hull of its vertices. 

\begin{lemma}\label{lem:main-weak}
%The dual of an orphan-free anisotropic Voronoi diagram in $\mathbb{R}^2$, with vertices incident to no more than three regions, 
%%$\mathbb{R}^2$ %, with vertices at the sites, 
%is embedded with vertices at the sites and straight edges. %, and covers the convex hull $\mathcal{CH}(V)$ of the sites. %its vertices. 
If its (topological) boundary is simple and closed, then the straight-line dual $\bar{G}$ of an orphan-free diagram, with vertices incident to at most three sites, is an embedded triangulation. 
\end{lemma}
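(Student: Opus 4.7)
The plan is to combine the absence of edge foldovers (Lem.~\ref{lem:ef}) with a discrete analog of the winding-number argument to show that $\bar{G}$ covers the bounded region enclosed by its boundary exactly once, and is therefore embedded. First, I would fix the combinatorial orientation of $G$ coming from its planar structure, so that each face has a cyclic boundary orientation and any two abstractly adjacent faces traverse their common edge in opposite directions. I claim that the signed areas of all triangles in $\bar{G}$ then carry a uniform sign. Indeed, if two abstractly adjacent faces $f_1, f_2$ carried opposite signs, the third vertices of $\bar{f}_1$ and $\bar{f}_2$ would lie on the \emph{same} side of the supporting line of their common edge, which is exactly an edge foldover and hence forbidden by Lem.~\ref{lem:ef}. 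Since the face-adjacency graph of $G$ is connected, all faces inherit the same sign; after relabeling, all triangles $\bar{f}$ are positively oriented, and they are non-degenerate by Lem.~\ref{lem:degen}.

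Next, for a generic $p\in\mathbb{R}^2$ avoiding all edge and vertex images, define $N(p)$ to be the number of faces $f\in F$ with $p\in\bar{f}$, and let $w(p)$ be the winding number around $p$ of the boundary polygon of $\bar{G}$. I would show $N(p)=w(p)$ by verifying they have identical jumps across every generic edge crossing. When $p$ crosses a non-boundary edge $e$ shared by $f_1,f_2$, the no-foldover condition places $\bar{f}_1,\bar{f}_2$ on opposite sides of the supporting line of $e$, and both triangles carry the same orientation, so $p$ exits one while entering the other and $N$ is unchanged; likewise $w$ is unchanged because $e$ is interior. When $p$ crosses a boundary edge, exactly one triangle is incident and the uniform orientation ensures the sign of the jump in $N$ matches that of $w$. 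Since both $N$ and $w$ vanish for $p$ far away, this yields $N\equiv w$ everywhere generic.

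By the hypothesis of the lemma (and Cor.~\ref{col:simple-boundary} when the bounded-anisotropy results of Sec.~\ref{sec:boundary} apply), the boundary polygon of $\bar{G}$ is a simple closed curve, so $w(p)=1$ on its bounded component and $0$ outside. Consequently $N(p)=1$ on the interior of the bounded region, meaning the triangles $\bar{f}$ tile it exactly once. This simultaneously rules out interior edge crossings (which would produce a small region with $N\ge 2$) and branching at interior vertices (whose stars would otherwise cover a punctured neighborhood $m>1$ times), so $\bar{G}$ is an embedded straight-line triangulation.

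The main obstacle I expect is the sign bookkeeping in the jump computation of the second paragraph: one has to verify carefully that the orientation conventions line up so that $N$ and $w$ change identically across a boundary edge, which leans essentially on having first secured the uniform triangle orientation from Lem.~\ref{lem:ef}. Once this is in place, the remaining topological deductions are standard piecewise-linear arguments.
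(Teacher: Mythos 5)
Your proposal is correct and is essentially the paper's argument: both reduce embeddedness to a face-count computation that is invariant across interior edge crossings by the no-foldover property (Lem.~\ref{lem:ef}, with non-degeneracy from Lem.~\ref{lem:degen}) and jumps by exactly one at the simple closed boundary, forcing a single cover and hence no crossings. The paper phrases this as a direct ray-casting count along a generic line from a point outside the convex hull (using convexity so the boundary is crossed once), whereas your winding-number packaging is a slightly more formal rendering of the same idea; the only step you should make explicit is why the face-adjacency graph of $G$ is connected, which your uniform-orientation argument relies on.
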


As shown in Appendix G, even if there are non-generic Voronoi vertices that are incident to more than three sites, the dual is composed of faces each of which is convex and satisfies the ECE condition (Def.~\ref{def:ece}). Each convex face can be triangulated (e.g.\ in a fan arrangement) in such a way that individual triangles satisfy the ECE condition with the same witness ellipse as the face from which they are derived. This leads to the following:% theorem, extending Thm.~\ref{th:main}, whose proof is in Appendix G. 

\begin{theorem}\label{th:main}
%The dual of an orphan-free anisotropic Voronoi diagram in $\mathbb{R}^2$, with vertices incident to no more than three regions, 
%%$\mathbb{R}^2$ %, with vertices at the sites, 
%is embedded with vertices at the sites and straight edges. %, and covers the convex hull $\mathcal{CH}(V)$ of the sites. %its vertices. 
If its (topological) boundary is simple and closed, then the straight-line dual $\bar{G}$ of an orphan-free diagram is an  embedded polygonal mesh with convex faces. 
\end{theorem}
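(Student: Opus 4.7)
The plan is to reduce the general case to Lem.~\ref{lem:main-weak} by triangulating the dual faces that correspond to non-generic Voronoi vertices. Given a Voronoi vertex $c$ incident to $m > 3$ sites $v_1,\dots,v_m$ (ordered cyclically by the primal structure of regions around $c$), I would first show that the corresponding dual face is a strictly convex polygon inscribed in the witness ellipse $\partial\theta$ of Thm.~\ref{th:ece}. Let $M_c$ satisfy $M_c^t M_c = Q_c$; under the linear change of coordinates $p \mapsto M_c p$, the witness ellipse becomes a circle of radius $\mu$ about $M_c c$. For $p = c + \delta$ with $\|\delta\|$ small, expanding $D(v_i,p)^2$ yields
\[
D(v_i,p)^2 = \mu^2 + 2(c-v_i)^t Q_c \delta + O(\|\delta\|^2),
\]
so the equation $D(v_i,p)=D(v_j,p)$ is, to leading order, the line through $c$ orthogonal in the $Q_c$ metric to $v_i - v_j$. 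Hence the primal diagram near $c$ agrees to first order with the Euclidean Voronoi diagram of the $M_c v_i$ on their circumscribing circle, in which the cyclic order of regions around the center coincides with the angular order of the sites. Since $M_c^{-1}$ preserves cyclic order on a convex curve about an interior point, the cyclic order of the dual face's vertices equals their angular order on $\partial\theta$, and a polygon inscribed in an ellipse whose vertices appear in angular order is strictly convex.

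Next, I would triangulate each such face in a fan arrangement (e.g., connecting $v_1$ to each of $v_3,\dots,v_{m-1}$). Because all $v_i$ lie on $\partial\theta$ and $\theta$ is empty of sites, every fan triangle inherits the ECE condition with witness $\theta$. The refined mesh $\bar{G}'$ then has only triangular faces (the original generic ones satisfy ECE by Thm.~\ref{th:ece}, the newly introduced ones as just argued), and its boundary coincides with that of $\bar{G}$, hence is simple and closed by hypothesis. The argument of Lem.~\ref{lem:main-weak} therefore applies verbatim to $\bar{G}'$, yielding that $\bar{G}'$ is an embedded triangulation. Finally, erasing the inserted fan diagonals recovers $\bar{G}$: since each diagonal lies strictly inside a convex polygonal face of $\bar{G}$, removing it merges adjacent triangles into that face without introducing any self-intersection, so $\bar{G}$ is an embedded polygonal mesh with convex faces.

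The main obstacle I anticipate is making rigorous the cyclic-order claim in the first step. The first-order expansion above only controls the primal structure infinitesimally near $c$, while the relevant cyclic order is a global combinatorial statement about the primal regions around $c$. To bridge the two, I would invoke Lem.~\ref{lem:simple} together with orphan-freedom to ensure that each primal edge $\tilde{E}_{v_i v_j}$ incident to $c$ is a single connected arc whose initial direction at $c$ is determined by the frozen-metric perpendicular bisector, so that the cyclic order of edges incident to $c$ in $\tilde{P}$ (and therefore the cyclic order of the dual face's vertices) coincides with the angular order of the $v_i$ on $\partial\theta$.
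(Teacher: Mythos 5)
Your overall architecture (fan-triangulate the non-generic faces, observe that the fan triangles inherit the ECE property from the common witness ellipse $\theta_c(\cdot)$, and then invoke the machinery of Lem.~\ref{lem:main-weak}) is exactly the paper's strategy in Appendix G. The genuine gap is in your first step, where you try to establish \emph{up front} that the dual face is a strictly convex polygon whose cyclic (combinatorial) order matches the angular order of the $v_i$ on $\partial\theta$. Your argument for this rests on the expansion $D(v_i,p)^2=\mu^2+2(c-v_i)^tQ_c\delta+O(\|\delta\|^2)$, but the paper only assumes $Q$ is \emph{continuous}, not differentiable. Writing $Q_{c+\delta}=Q_c+E(\delta)$ with $\|E(\delta)\|=o(1)$, the neglected term $(c-v_i+\delta)^tE(\delta)(c-v_i+\delta)$ is $o(1)$ but need not be $o(\|\delta\|)$ (e.g.\ it could decay like $\sqrt{\|\delta\|}$), so it can dominate the linear term you keep. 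Consequently the bisectors $\tilde{E}_{v_iv_j}$ need not have well-defined tangent directions at $c$, and the bridge you propose in your last paragraph --- that the ``initial direction at $c$'' of each primal edge is the frozen-metric perpendicular bisector --- is precisely the assertion that fails for merely continuous metrics. Your convexity claim, and hence the validity of the diagonal-erasure step and the ``convex faces'' part of the conclusion, is therefore unsupported as written.

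The paper sidesteps this entirely by reversing the logical order: it fan-triangulates the abstract dual polygon $\Pi$ (in whatever cyclic order the primal regions around $c$ impose) from an \emph{arbitrary} vertex, notes that every such triangle satisfies ECE because all its vertices lie on the empty ellipse, and concludes embeddedness from the one-form machinery. Convexity is then \emph{deduced}: a closed polygon all of whose fan triangulations (centered at every vertex) are embedded must be convex, and the only convex polygon on points of $\partial\theta_c(v_1)$ is the one in angular order --- which retroactively yields the cyclic-order statement you were trying to prove analytically. If you drop your first step and derive convexity this way, the rest of your proposal goes through.
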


\section{Final Results}\label{sec:gen}

We can combine Corollary~\ref{col:simple-boundary} and Theorem~\ref{th:main} into
\begin{theorem}\label{gamma}
If the metric $Q$ has bounded ratio of eigenvalues, then the dual of an orphan-free Voronoi diagram with respect to $Q$ is an embedded polygonal mesh with convex faces, and covers the convex hull of the sites. 
\end{theorem}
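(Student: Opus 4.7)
The plan is to assemble Theorem \ref{gamma} directly from machinery already in hand: Corollary \ref{col:simple-boundary} supplies the boundary statement under bounded anisotropy, Theorem \ref{th:main} then converts a simple closed boundary into an embedded polygonal mesh with convex faces, and Theorem \ref{thm:boundary} identifies that boundary with $\partial\mathcal{CH}(V)$ so that ``covers the convex hull'' comes for free.

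Concretely, first I would invoke Corollary \ref{col:simple-boundary} to conclude that, under the bounded anisotropy hypothesis on $Q$, the boundary edges of $\bar{G}$ form a simple, closed, convex polygonal chain. This in particular verifies the hypothesis ``(topological) boundary is simple and closed'' required by Theorem \ref{th:main}. Second, apply Theorem \ref{th:main} (whose proof does not itself need bounded anisotropy, only the simple closed boundary) to conclude that $\bar{G}$ is an embedded polygonal mesh with convex faces.

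Third, to upgrade ``embedded'' to ``covers $\mathcal{CH}(V)$'', I would combine the previous two conclusions with Theorem \ref{thm:boundary}: the boundary edges of $\bar{G}$ coincide exactly with the edges of $\partial\mathcal{CH}(V)$, so the image of $\bar{G}$ is an embedded planar mesh whose outer boundary is the convex polygon $\partial\mathcal{CH}(V)$. An embedded polygonal mesh with this boundary necessarily fills the bounded region it encloses (by the Jordan curve theorem, together with the fact that the complement of the image in $\mathcal{CH}(V)$ would have to be a union of faces of the mesh, and the only face outside is the unbounded one we explicitly excluded). Thus the image of $\bar{G}$ equals $\mathcal{CH}(V)$.

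I do not expect a serious obstacle here: the theorem is essentially a packaging step. The one mild subtlety is making precise the ``covers the convex hull'' claim from the embedding together with the identified boundary; this is where I would be careful to cite Theorem \ref{thm:boundary} (not just Corollary \ref{col:simple-boundary}) so that the outer boundary of the embedded mesh is actually $\partial\mathcal{CH}(V)$ and not merely some simple closed convex chain on the sites. Everything else is a direct chaining of the previously established results.
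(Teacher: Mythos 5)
Your proposal is correct and matches the paper, which proves Theorem~\ref{gamma} exactly by combining Corollary~\ref{col:simple-boundary} with Theorem~\ref{th:main}; the covering claim is established in the paper via the same face-count/boundary-identification argument you sketch (it appears inside the proof of Lemma~\ref{lem:main-weak}, relying on Theorem~\ref{thm:boundary}).
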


This result can be generalized, dropping the bounded anisotropy condition on $Q$, but at the cost of losing the convex hull property, as shown in Appendix H:
\begin{theorem}\label{th:final}
The dual of an orphan-free Voronoi diagram is an embedded polygonal mesh with convex faces. 
\end{theorem}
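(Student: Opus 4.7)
The plan is to reduce Theorem~\ref{th:final} to the already-established Theorem~\ref{gamma} by means of a local modification of the metric outside a sufficiently large compact region. The key observation is that the distance $D(v,p) = [(p-v)^t Q_p (p-v)]^{1/2}$ depends only on $Q$ at the query point $p$, not on $Q$ anywhere else; consequently, replacing $Q$ on a set $S$ changes distances only at points in $S$, and leaves the Voronoi diagram unchanged on the complement of $S$. This decoupling is what makes a cut-and-paste metric construction feasible.

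The Voronoi vertices of the $Q$-diagram form a finite set (each is an isolated point equidistant to at least three of finitely many sites), so there exists a closed ball $B$ containing $V$ and all Voronoi vertices of the $Q$-diagram in its interior. Since $Q$ is continuous on the compact $B$, it has bounded anisotropy there. Define $\tilde{Q}$ to agree with $Q$ on $B$, to equal the identity (or any fixed bounded-anisotropy metric) outside a slightly larger ball $B' \supset B$, and to interpolate continuously in $B' \setminus B$; by construction, $\tilde{Q}$ has globally bounded anisotropy. The main technical step, and the chief obstacle, is to show that the $\tilde{Q}$-Voronoi diagram is itself orphan-free: inside $B$ the two diagrams coincide region by region, so every $\tilde{Q}$-region restricted to $B$ is connected and contains its site; one must then argue that, outside $B$, each region attaches to its interior piece without producing any new disconnected component, using a continuity and compactness argument in the transition layer together with the standard far-field behavior of bounded-anisotropy diagrams.

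Once orphan-freedom of the $\tilde{Q}$-diagram is established, Theorem~\ref{gamma} applies and yields that $\bar{G}_{\tilde{Q}}$ is an embedded polygonal mesh with convex faces that covers $\mathcal{CH}(V)$. It then remains to observe that the faces of $\bar{G}_Q$ form a sub-collection of the faces of $\bar{G}_{\tilde{Q}}$: every face of $\bar{G}_Q$ corresponds to a Voronoi vertex of the $Q$-diagram, which lies inside $B$ and is therefore also a Voronoi vertex of the $\tilde{Q}$-diagram with identical site incidences, hence corresponds to the same straight-edge face in $\bar{G}_{\tilde{Q}}$. Since any subset of the faces of an embedded polygonal mesh with convex faces is itself an embedded polygonal mesh with convex faces, $\bar{G}_Q$ inherits the conclusion of the theorem. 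The convex-hull covering property need not transfer from $\bar{G}_{\tilde{Q}}$ to $\bar{G}_Q$, in agreement with the remark that this property is lost in the general setting.
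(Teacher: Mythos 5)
Your reduction is essentially the paper's own proof (Appendix H): blend $Q$ to the identity outside a ball containing all Voronoi vertices of the $Q$-diagram, invoke Theorem~\ref{gamma} for the modified metric, and observe that every face of $\bar{G}_{Q}$ is dual to a Voronoi vertex lying where the two metrics agree (and where, since $D(v,p)$ depends only on $Q_p$, the two diagrams assign identical distances), hence is also a face of $\bar{G}_{\tilde{Q}}$; deleting faces preserves embeddedness and convexity of the remaining ones.

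The one substantive issue is the step you yourself single out as the chief obstacle: orphan-freeness of the $\tilde{Q}$-diagram, which Theorem~\ref{gamma} requires as a hypothesis. You do not prove it --- you only gesture at ``a continuity and compactness argument in the transition layer'' --- so your proof is incomplete at precisely that point. (For what it is worth, the paper's Appendix H does not address this either; it applies Theorem~\ref{gamma} to $V_{Q'}$ without verifying that $V_{Q'}$ is orphan-free, so you have at least made the elision explicit.) A second, smaller point: your parenthetical argument that the Voronoi vertices of the $Q$-diagram form a finite set (``each is an isolated point equidistant to at least three of finitely many sites'') does not suffice, since a bounded set of isolated points need not be finite. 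The paper instead shows that no Voronoi vertex is isolated from the edge structure --- every vertex is incident to some Voronoi edge --- and that edges are connected and hence in bijection with pairs of adjacent sites (Cor.~\ref{cor:Pij}), each edge having two endpoints, which bounds the number of vertices by twice the finite number of edges. You would need to supply an argument of this kind rather than assert finiteness from isolation.
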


From the above results, Corollary~\ref{cor:Pij}, and the definition of the dual, it follows (see Appendix H) that
%it is trivial (as seen Appendix H) to obtain the following result,

\begin{corollary}\label{uniqueVD}
An orphan-free anisotropic Voronoi diagram is composed of unique (connected) vertices, edges, and faces.
%The dual of an orphan-free Voronoi diagram is an embedded polygonal mesh with convex faces. }
\end{corollary}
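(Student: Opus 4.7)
The plan is to dispatch the three types of primal elements separately. Faces of the primal are exactly the Voronoi regions $R(v)$, so their connectedness is the very definition of orphan-freedom and requires nothing further. Edges, by definition, are maximal connected sets of points closest and equidistant to two sites, and their uniqueness (equivalently, the absence of multi-edges in $\tilde{P}$) was already established in Corollary~\ref{cor:Pij}, a direct consequence of Lemma~\ref{lem:simple}. Thus the only remaining piece is uniqueness of primal vertices: for every subset $\{v_1,\dots,v_m\}\subseteq V$ with $m\ge 3$, at most one point of $\mathbb{R}^2$ can simultaneously lie in $R(v_1)\cap\dots\cap R(v_m)$ and be equidistant from all of $v_1,\dots,v_m$.

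For this, I would pass to the dual. Property (ii) of Section~\ref{sec:setup} assigns to every dual face of $G$ a primal vertex equidistant to its incident sites; conversely, the duality construction assigns to every primal vertex a dual face whose vertices are precisely the sites whose Voronoi regions meet at that vertex. Hence if two distinct primal vertices $c_1\neq c_2$ both lay in $R(v_1)\cap\dots\cap R(v_m)$ and were both equidistant to $v_1,\dots,v_m$, then $G$ would contain two different faces with identical vertex set $\{v_1,\dots,v_m\}$. Invoking Theorem~\ref{th:final}, $\bar{G}$ is an embedded polygonal mesh with vertices drawn at the sites and convex faces, and in such an embedding two distinct faces cannot share the same vertex set: their straight-line realizations would coincide as the same convex polygon, producing an overlap that contradicts embeddedness. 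This forces $c_1=c_2$, and the uniqueness of primal vertices follows.

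The main obstacle is really a bookkeeping issue in the non-generic regime: when a Voronoi vertex is incident to more than three sites (so a dual face may have $m>3$ vertices), one must verify that the duality construction genuinely matches primal vertices to dual faces one-to-one, and that the phrase ``faces with the same vertex set'' is unambiguous in the embedded mesh. This is dispatched by the treatment of non-generic Voronoi vertices in the lead-up to Theorem~\ref{th:main} (elaborated in Appendix G), after which the embedding conclusion of Theorem~\ref{th:final} furnishes the rest of the argument cleanly.
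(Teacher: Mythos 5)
Your proposal matches the paper's own argument: the paper likewise disposes of faces via the definition of orphan-freedom, of edges via Corollary~\ref{cor:Pij} (itself a consequence of Lemma~\ref{lem:simple}), and of vertices by observing that two distinct primal vertices with the same incident sites would yield coincident dual polygons, contradicting the embeddedness of $\bar{G}$ furnished by Theorem~\ref{th:final} (see the closing paragraph of Appendix~H). The approach and the decomposition are essentially identical, with your write-up merely supplying a bit more detail on the non-generic case.
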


%In Appendix G we show that, e

%
%We have so far assumed that Voronoi vertices are equidistant to exactly three sites (incident to three Voronoi regions), 
%and thus that $\bar{G}$ is a triangulation. The general case is a simple extension, which makes use of Thm.~\ref{th:main}:
%% that Voronoi vertices are equidistant to any number of sites,  the following holds:
%%The following theorem covers the general case that Voronoi vertices are incident to any number of sites, a result similar to the one above holds:
%\begin{theorem}\label{th:main}
%The dual of an orphan-free anisotropic Voronoi diagram in
%$\mathbb{R}^2$ %, with vertices at the sites, 
%is embedded with vertices at the sites and straight edges, is composed of convex polygons. %, and covers the convex hull $\mathcal{CH}(V)$ of the sites. %its vertices. 
%\end{theorem}
%\begin{proof} See Appendix G. \end{proof}
Finally, we note that Thm.~\ref{th:main} can be combined with existing conditions for orphan-freedom~\cite{avd}, resulting in a simple and natural condition for a set of sites to induce an embedded polygonal mesh as the dual of their anisotropic Voronoi diagram:
\begin{corollary}\label{cor:enet}
	If $V$ is an asymmetric $\epsilon$-net w.r.t.\ $D$, $Q$ a continuous metric with metric variation $\sigma$, 
		and $\epsilon\sigma \le 0.09868$, then the dual of the anisotropic Voronoi diagram of $V$
		is an embedded polygonal mesh with convex faces.%, and covers the convex hull of $V$. 
\end{corollary}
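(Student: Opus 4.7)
The plan is to combine two ingredients that have already been assembled: the orphan-freedom criterion for anisotropic Voronoi diagrams from~\cite{avd}, and Theorem~\ref{th:final} of this paper. Since Theorem~\ref{th:final} converts any orphan-free diagram into the desired conclusion about the dual, all the real work is in verifying that the hypotheses of the corollary are precisely those that force the primal to be orphan-free.

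First, I would recall (or cite) the quantitative orphan-freedom theorem from~\cite{avd}: if $V$ is an asymmetric $\epsilon$-net with respect to $D$, and $Q$ has metric variation $\sigma$ satisfying $\epsilon\sigma\le 0.09868$, then the Voronoi regions $R(v)$ (with respect to the distance $D$ defined in Section~\ref{sec:setup}) are connected for every $v\in V$. In particular, the anisotropic Voronoi diagram of $V$ under $Q$ is orphan-free in the sense used throughout this paper. I would be careful to check that the definition of the distance $D$, of an asymmetric $\epsilon$-net, and of metric variation used in~\cite{avd} agree with the ones in Section~\ref{sec:setup}, so that the orphan-freedom conclusion transfers verbatim.

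With orphan-freedom in hand, the rest is immediate: Theorem~\ref{th:final} states that the dual of any orphan-free anisotropic Voronoi diagram (with no further hypothesis on $Q$) is an embedded polygonal mesh with convex faces. Applying this theorem to the diagram of $V$ yields exactly the claim of Corollary~\ref{cor:enet}. No bounded-anisotropy assumption is needed here because Theorem~\ref{th:final} is the unconditional version; if one additionally wanted the dual to cover $\mathcal{CH}(V)$, one would invoke Theorem~\ref{gamma} instead, under the extra assumption of bounded eigenvalue ratio.

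The proof is essentially a one-line composition, so there is no genuine obstacle. The only care point is bookkeeping: making sure the quantitative constant $0.09868$ and the precise notion of ``asymmetric $\epsilon$-net'' quoted from~\cite{avd} match, and noting explicitly that the orphan-freedom hypothesis of Theorem~\ref{th:final} does not require the dual to be an embedded triangulation a priori (the theorem itself produces that conclusion). Once these bookkeeping items are verified, the corollary follows by chaining the two results.
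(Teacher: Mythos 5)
Your proposal is correct and matches the paper's own (implicit) argument exactly: the corollary is obtained by citing the quantitative orphan-freedom condition of~\cite{avd} for asymmetric $\epsilon$-nets with $\epsilon\sigma\le 0.09868$, and then applying the unconditional embedding result (Thm.~\ref{th:final}) to the resulting orphan-free diagram. Your bookkeeping remarks about matching definitions are appropriate but do not change the substance.
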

where an asymmetric $\epsilon$-net is simply a weaker form of $\epsilon$-net defined on non-symmetric functions $D$, which can be computed with the iterative algorithm of~\cite{Gonz}, and the metric variation $\sigma$ is a Lipschitz-type condition  on $Q$~\cite{avd}. 
%Note that, as pointed out in~\cite{avd}, an asymmetric $\epsilon$-net can be computed using the simple iterative algorithm of~\cite{Gonz}, which, 
%combined with the above corollary, results in an algorithm for constructing embedded anisotropic Delaunay triangulations. 
The above condition is known to be conservative, and there may be simpler conditions to achieve orphan-freedom. %the orphan-freedom condition may be less restrictive than may appear at first. 
As a practical observation, Du and Wang~\cite{DW} report orphans to be a rare occurrence in their experiments. 

\section{Proof-of-concept Implementation}\label{sec:implementation}

\begin{figure}[ht]
%\label{fig:img}
\centering
\subfigure[]{
\includegraphics[height=2.6cm]{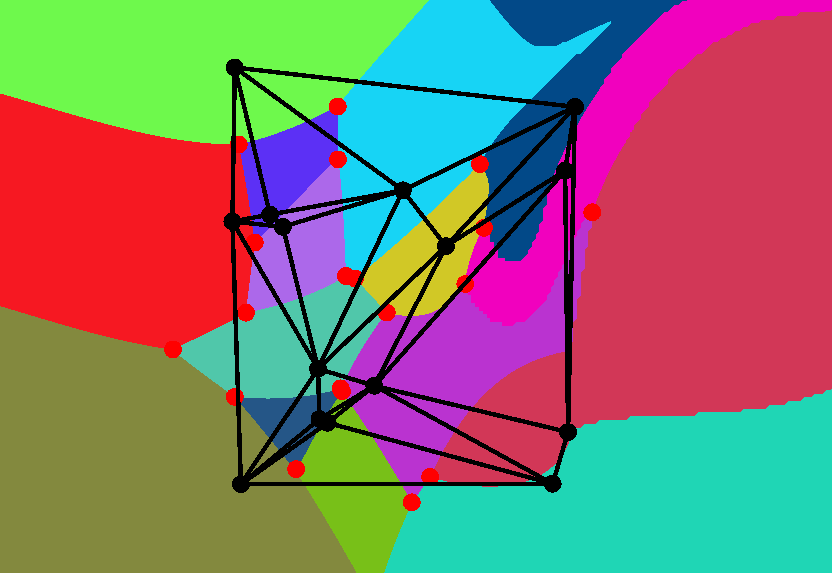}
\label{fig:img_a}
}
\subfigure[]{
\includegraphics[height=2.6cm]{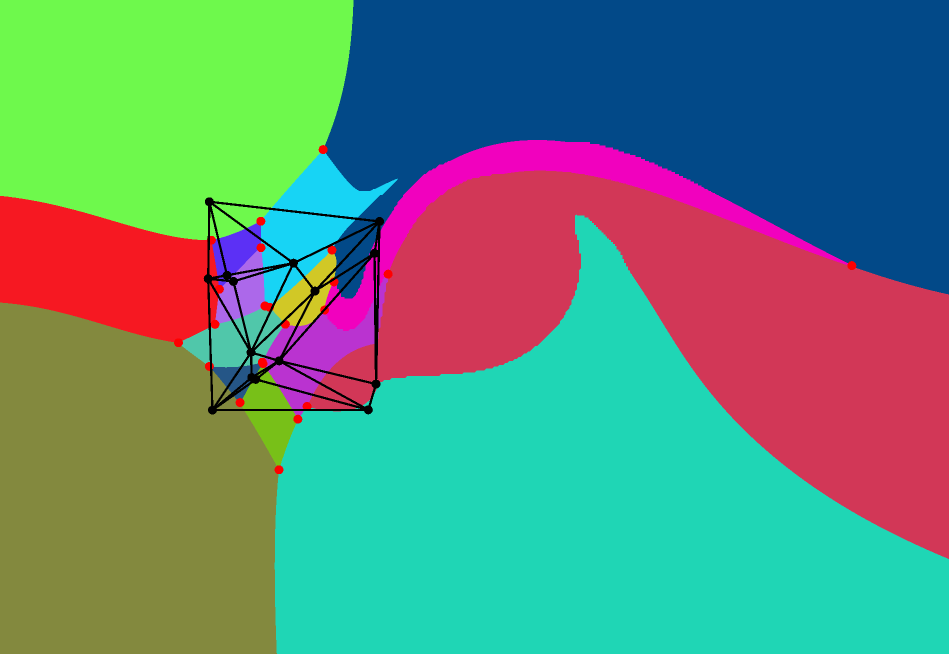}
\label{fig:img_b}
}%\quad\quad
\subfigure[]{
\includegraphics[height=2.6cm]{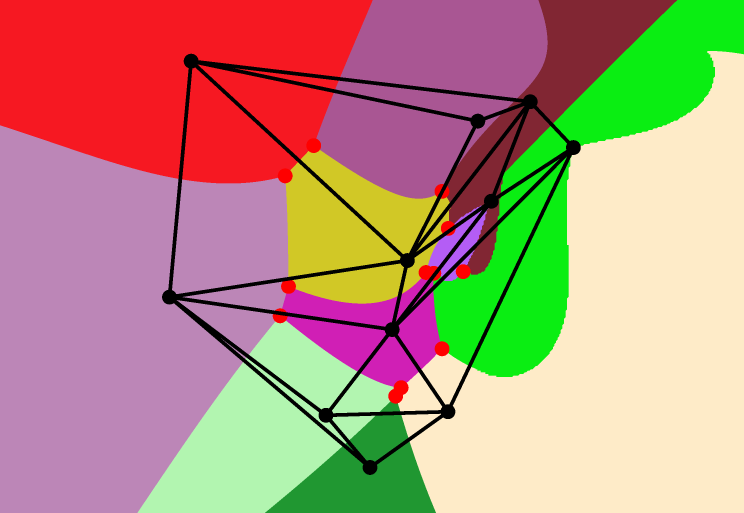}
\label{fig:img_c}
}
\subfigure[]{
\includegraphics[height=2.6cm]{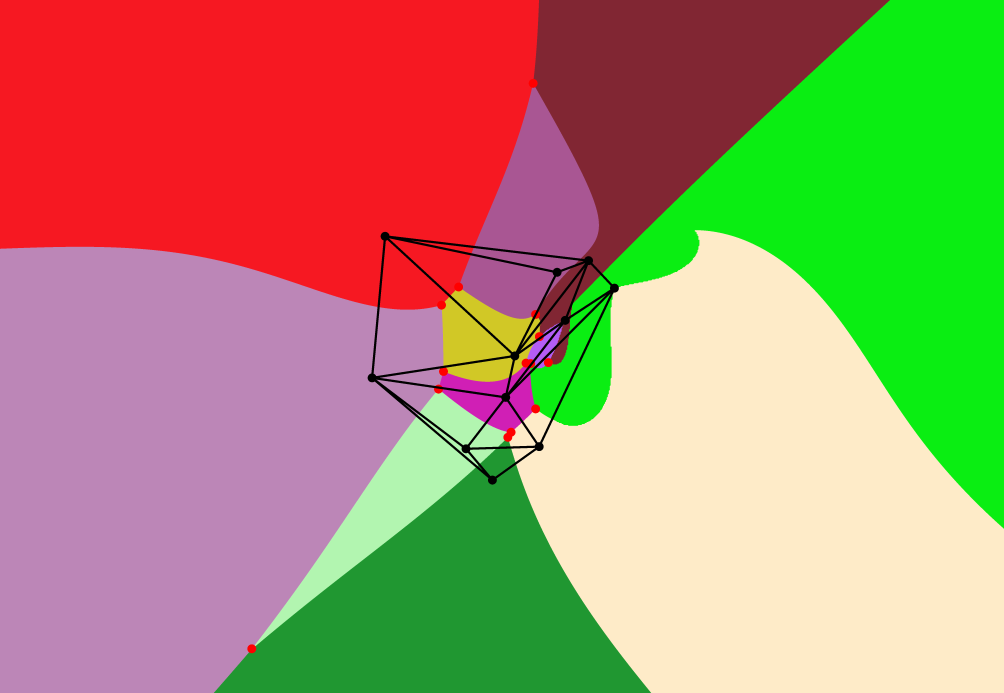}
\label{fig:img_d}
}\quad
\subfigure[]{
\includegraphics[height=2.6cm]{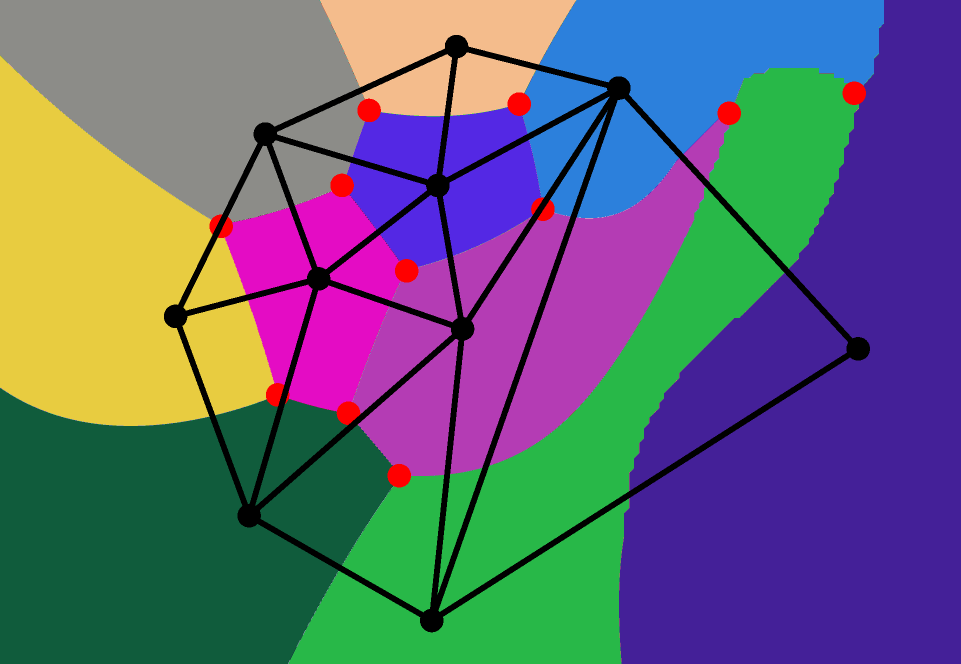}
\label{fig:img_e}
}
\subfigure[]{
\includegraphics[height=2.6cm]{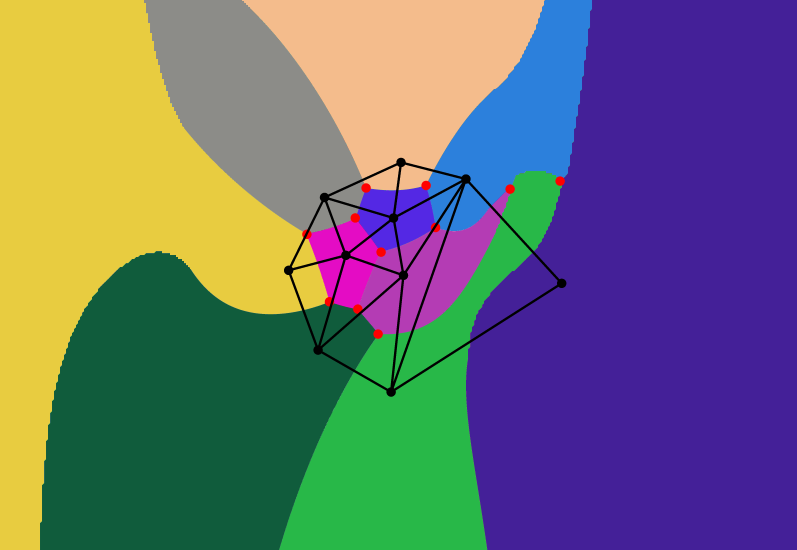}
\label{fig:img_f}
}%\quad\quad	
\subfigure[]{
\includegraphics[height=2.6cm]{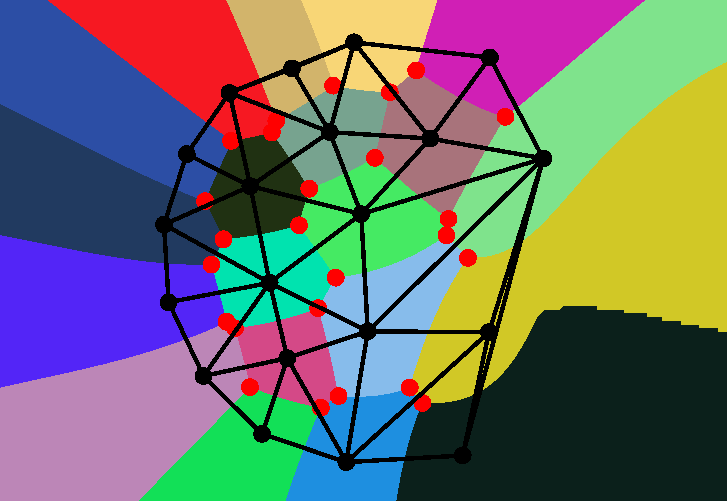}
\label{fig:img_g}
}
\subfigure[]{
\includegraphics[height=2.6cm]{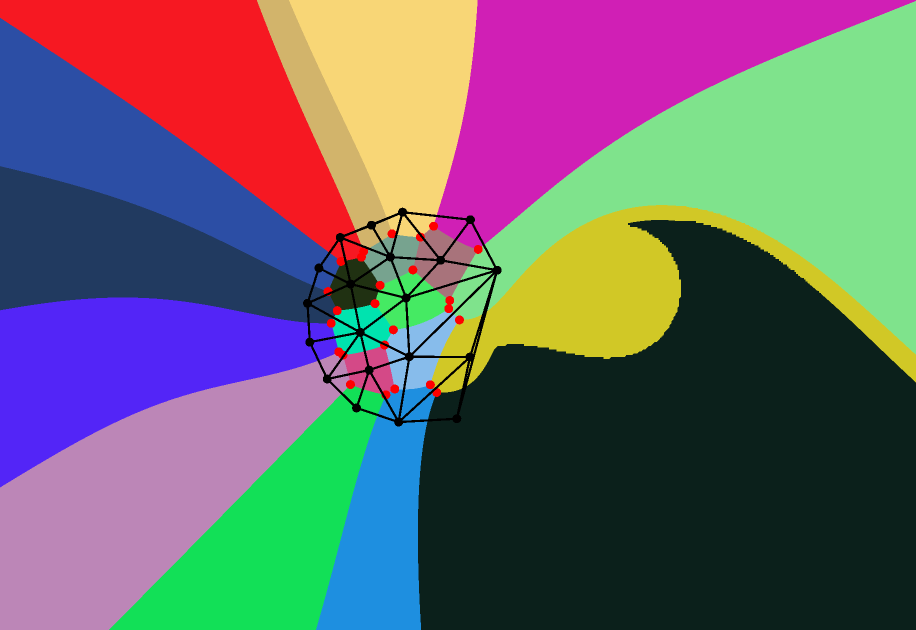}
\label{fig:img_h}
}
\caption{
Anisotropic Voronoi diagrams, and their duals generated by our
proof-of-concept implementation. 
Voronoi vertices are marked in red, while dual vertices (sites) and edges are drawn
in black.}
\end{figure}

Though not aiming for an efficient implementation, 
we implemented a simple proof-of-concept that constructs anisotropic Voronoi diagrams
like the ones considered in this paper, and their duals
(Fig. 3). 
A closed-form metric, which %, as required by Sec.~\ref{sec:setup}, 
has bounded ratio of eigenvalues, is discretized on a
fine regular grid, and linearly interpolated inside grid elements, resulting in a
continuous metric. The sites are generated randomly (Figs.~\ref{fig:img_a}
and~\ref{fig:img_b}), or using a combination of random, and equispaced
points forming an asymmetric $\epsilon$-net (remaining figures). 

The primal diagram was obtained using front propagation from the sites
outwards, until fronts meet at Voronoi edges. 
The runtime is proportional to the grid size, since every grid-vertex is visited exactly six times (equal to their valence). 
%Because in this front propagation grid nodes are visited once, the runtime is proportional 
%to the grid size. % of the grid, since each grid node is visited once. % where the metric is sampled. %, and stopping where two regions meet. 
The implementation does not %, in general, 
guarantee the correctness of the diagram unless it \emph{is} orphan-free, and serves to verify the claims of the paper since well-behave-ness of the dual is predicated on that of the primal. 
%
%, 
%and as such 
%should be understood as a verification of the theory (sanity check of the theory. }
%\textcolor{In particular, the algorithm is optimistic: it operates under the assumption that the diagram is orphan-free. Therefore when the underlying diagram is indeed orphan-free, the output of the algorithm will be correct. 
%best effort attempt that often succeeds.  
%, without exploring further for possible orphan cells in the diagram. 
%This means that orphan cells may be present but missed, 
%The implementation is simply a 

The two main claims of the paper are clearly illustrated in these examples. 
In all examples, the dual covers the convex hull of the vertices
(Thm.~\ref{thm:boundary}), is a
single cover, embedded with straight edges without edge crossings
(Thm.~\ref{th:main}), 
and has no degenerate faces (Lem.~\ref{lem:degen}). 
%In some cases, such as Fig.~\ref{fig:img_b}, a large portion of the
%Voronoi diagram had to be computed in order to find some relatively far-away
%Voronoi vertices, like the one in the far-right in the Fig.. Once a
%sufficiently large 
By focusing on the primal diagrams (second and fourth column), further claims in
the paper become apparent, namely that Voronoi regions are simply connected (Lem.~\ref{lem:sc}), 
Voronoi vertices are unique (Cor.~\ref{uniqueVD}), 
Voronoi edges are connected (Cor.~\ref{cor:Pij}), 
unbounded regions correspond to boundary dual vertices, and unbounded
edges of the Voronoi diagram correspond to boundary dual edges.

\section{Conclusion and Open Questions}

We studied the properties of duals of orphan-free anisotropic Voronoi diagrams, for the
purposes of constructing triangulations on the plane. 
The main result (Theorems~\ref{th:main} and~\ref{thm:boundary}, Cor.~\ref{cor:enet}) is that
the dual, with straight edges and having the sites as vertices, is embedded
and covers the convex hull of the sites, mirroring similar results for
ordinary Voronoi diagrams and their duals. 

A few, somewhat less important  properties are proven, including the fact
that every primal region is simply connected, that elements of the primal are unique (Cor.~\ref{uniqueVD}, 
but perhaps most significantly
that every face in the dual satisfies an \emph{empty circum-ellipse}
property that has a direct parallel in the empty circum-circle property of
ordinary diagrams, and is the basis for proving that it is embedded with
straight edges. 

Perhaps the most important outstanding question may be whether these ideas
extend to higher dimensions. %, which is our subject of study in the immediate future. 
The results in Secs.~\ref{sec:boundary} and~\ref{sec:interior},
except for Lem.~\ref{lem:sc}, can be trivially extended to n dimensions. 
Sec.~\ref{sec:boundary} has been written only for the two-dimensional case,
but a similar construction, and the same argument would work in higher
dimensions (Lem.~\ref{lem:Sn} being a hint of this). 
It is the argument in Sec.~\ref{sec:interior} that becomes problematic. 
While the ECE property is shown to be sufficient to prevent foldovers in the
triangulation, it is not sufficient in 
higher dimensions. In particular, fixing the boundary to be simple and convex, 
there are simple arrangements of tetrahedra in $\mathbb{R}^3$ that contain 
face foldovers but do not break the ECE property. 
We plan to study these question next. %in the immediate future. %next the higher-dimensional case. 
%\vspace*{1in}

%The case where not all sites are colinear requires additional consideration. 
%The claim that $G$ is embedded with straight edges, and is a single-cover of
%the convex hull of the sites 

%We begin by proving in Lem.~\ref{lem:VW} that, outside of a sufficiently large ball, all
%points are closer to $W$ than $V\setminus W$, 
%and use this to show that any boundary edge in $G$ connects only sites in
%$W$. We next show that all boundary edges in $G$ are of the form
%$\overline{$
%
%
%
%\begin{lemma}\label{lem:VW}
%\end{lemma}
%
%
%
%\begin{lemma}\label{lem:VW}
%\end{lemma}
%\begin{proof}
%\end{proof}
%
%
%
%\begin{lemma}\label{lem:VW}
%\end{lemma}
%\begin{proof}
%\end{proof}
%
%
%
%\begin{lemma}\label{lem:VW}
%\end{lemma}
%\begin{proof}
%\end{proof}
%
%
%
%\begin{lemma}\label{lem:VW}
%\end{lemma}
%\begin{proof}
%\end{proof}
%
%

%%%%%%%%%%%%%%%%%%%%%%%%%%%%%%%%%%%%%%%%%%%%%%%%%%%\section*{Acknowledgements}
%%%%%%%%%%%%%%%%%%%%%%%%%%%%%%%%%%%%%%%%%%%%%%%%%%%We thank the reviewers for their helpful comments, including a simplified proof of Lem.~\ref{lem:sc}. 

\newpage
\bibliographystyle{plain}
\bibliography{vddw3}

\begin{thebibliography}{10}

\bibitem{BOISSONNAT-2007-488446}
Jean-Daniel Boissonnat, Camille Wormser, and Mariette Yvinec.
\newblock Curved {V}oronoi diagrams.
\newblock In {\em {Effective Computational Geometry for Curves and Surfaces}},
  Mathematics + Visualization, pages 67--116. Springer, 2007.

\bibitem{Boissonnat:2008:LUA:1377676.1377724}
Jean-Daniel Boissonnat, Camille Wormser, and Mariette Yvinec.
\newblock Locally uniform anisotropic meshing.
\newblock In {\em Proceedings of the twenty-fourth annual symposium on
  Computational geometry}, SCG '08, pages 270--277, New York, NY, USA, 2008.
  ACM.

\bibitem{avd}
Guillermo~D. Ca{\~n}as and Steven~J. Gortler.
\newblock Orphan-free anisotropic {V}oronoi diagrams.
\newblock {\em Discrete {\&} Computational Geometry}, 46(3):526--541, 2011.

\bibitem{enets}
Kenneth~L. Clarkson.
\newblock Building triangulations using epsilon-nets.
\newblock In {\em STOC 2006: Proceedings of the Thirty-eighth Annual SIGACT
  Symposium}, 2006.

\bibitem{Delaunay1934}
Boris~N. Delaunay.
\newblock Sur la sph\`{e}re vide.
\newblock {\em Bulletin of Academy of Sciences of the USSR}, (6):793--800,
  1934.

\bibitem{DW}
Qiang Du and Desheng Wang.
\newblock Anisotropic centroidal {V}oronoi tessellations and their
  applications.
\newblock {\em SIAM Journal of Scientific Computing}, 26(3):737--761, 2005.

\bibitem{Edelsbrunner}
Herbert Edelsbrunner.
\newblock Triangulations and meshes in computational geometry.
\newblock {\em Acta Numerica}, pages 133--213, 2000.

\bibitem{power}
H.~Edelsbrunner F.~Aurenhammer.
\newblock An optimal algorithm for constructing the weighted {V}oronoi diagram
  in the plane.
\newblock {\em Pattern Recognition 17 (1984), 251--257}.

\bibitem{Gonz}
Teofilo~F. Gonzalez.
\newblock Clustering to minimize the maximum intercluster distance.
\newblock {\em Theoretical Computer Science}, 38:293--306, 1985.

\bibitem{1form}
Steven~J. Gortler, Craig Gotsman, and Dylan Thurston.
\newblock Discrete one-forms on meshes and applications to 3d mesh
  parameterization.
\newblock {\em Computer Aided Geometric Design}, 23:83--112, 2006.

\bibitem{GruberOQ}
Peter~M. Gruber.
\newblock Optimum quantization and its applications.
\newblock {\em Advances in Mathematics}, 186(2):456 -- 497, 2004.

\bibitem{LS}
Francois Labelle and Jonathan~R. Shewchuk.
\newblock Anisotropic {V}oronoi diagrams and guaranteed-quality anisotropic
  mesh generation.
\newblock In {\em SCG '03: Proceedings of the Nineteenth Annual Symposium on
  Computational Geometry}, pages 191--200, New York, NY, USA, 2003. ACM.

\bibitem{LL2000}
Greg Leibon and David Letscher.
\newblock Delaunay triangulations and {V}oronoi diagrams for riemannian
  manifolds.
\newblock In {\em SCG '00: Proceedings of the Sixteenth Annual Symposium on
  Computational Geometry}, pages 341--349, New York, NY, USA, 2000. ACM.

\bibitem{Milnor}
John~W. Milnor.
\newblock {\em Topology from the Differentiable Viewpoint}.
\newblock Princeton University Press, 1997.

\end{thebibliography}

%\appendix
%\renewcommand\thesection{Appendix \Alph{section}}

%\end{document}

% ------------------------------------------------------------------------------------------ Appendix Prelim

\newpage
\section*{Appendix A}\label{app:prelim}

\begin{figure}[ht]
\centering
\subfigure[]{
\includegraphics[height=2.5cm]{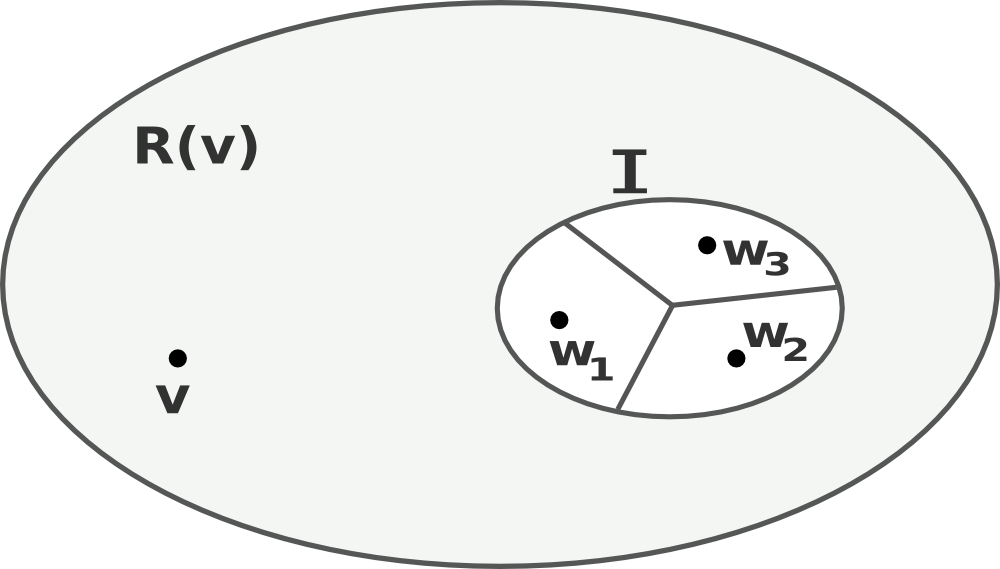}
\label{fig:sca}
}
\subfigure[]{
\includegraphics[height=2.5cm]{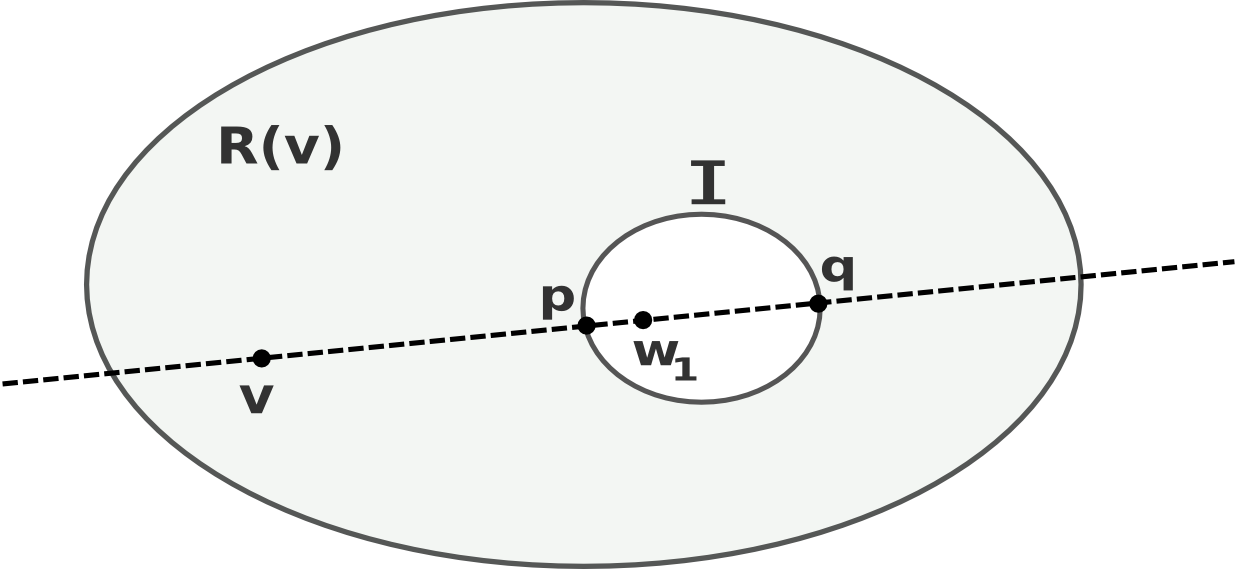}
\label{fig:scb}
}
\label{fig:sc}
\caption{Diagram for the proof of Lem.~\ref{lem:sc}.}
\end{figure}

\noindent{\bf Lemma~\ref{lem:midpoint}}
\emph{Given two sites $v,w\in V$, $v\ne w$, the only point equidistant to $v,w$ in their supporting line is the midpoint $(v+w)/2$. }
\begin{proof}
%	Let $z=(v+w)/2$. Clearly, $z$ is equidistant, since 
%	$z-v = w-z$ and so \[ (z - v)^t Q_z (z-v) = -(z-w)^t Q_z -(z-w) = (z-w)^t Q_z (z-w) \]
	If $p$ is in the supporting line of $v,w$, then $p=(1-\lambda)v+\lambda w$, $\lambda\in\mathbb{R}$. 
	$p$ is equidistant to $v,w$ (i.e.\ $D(v,p)=D(w,p)$) if and only if
		$\lambda^2 (v-w)^t Q_p (v-w) = (1-\lambda)^2 (v-w)^t Q_p (v-w)$.

	Since $Q$ is positive definite and $v\ne w$, it is $(v-w)^t Q_p (v-w) > 0$, 
	and so the above equality holds iff $\lambda=1/2$ ($p$ is the midpoint).
\end{proof}

%\noindent{\bf Lemma~\ref{lem:interior}}
%\emph{ Every site is an interior point of its corresponding Voronoi region.  }
%\begin{proof}
%%	This follows trivially from the continuity of $Q$ (and therefore of $D(v,\cdot)$), 
%%	and from the fact that $D(v,v)=0$, $v\in V$ 
%%		and that $D(v,w) > 0$, with $v\ne w$. 
%Since $Q$ is positive definite, for all $v\in V$ it is $D(v,v)=0$ and for all $v,w\in V$,
%%with $v\neq w$, 
%it is $D(v,w) > 0$. 
%Therefore for every $v\in V$, the set $\{p\in\mathbb{R}^n : D(v,p) < D(w,p), \forall w \in V, w\neq
%v\}\subset R(v)$ is clearly open, and contains $v$. 
%\end{proof}

\noindent {\bf Lemma~\ref{lem:sc}}
%\begin{lemma}\label{lem:sc}
\emph{%
Every Voronoi region %$R\subset\mathbb{R}^2$ of 
of an orphan-free anisotropic Voronoi %DW 
diagram in 
$\mathbb{R}^2$ %two dimensions 
is simply connected. 
}
%\end{lemma}
\begin{proof}	
A multiply connected region $R(v)\subset\mathbb{R}^2$ is path connected, and is
such that there is a map
$f:\mathbb{S}^1\rightarrow R(v)$ that cannot be continuously contracted to a point. We
can assume $f$ injective (simple), since a non-injective $f$ can always be broken up
into injective pieces, at least one of which must be such that it cannot be
continuously contracted to a point (or else $f$ would be). 
By the Jordan curve theorem, $f$ encloses a bounded set $B$. Since $f$
cannot be continuously contracted to a point, there must be
$I\subseteq B$ with $I\subset\mathbb{R}^2\setminus R(v)$. Since $I$ is a subset of $B$, 
and $B$ is bounded, $I$ is bounded. 

%We begin by showing that the complement of any multiply-connected set $R$
%(whether bounded or unbounded), 
%always has a connected component that %is bounded and 
%is ``inside" $R$ (a ``hole"), and is bounded. % (even if $R$ is not). 
%TODO\\

%	We prove the lemma for 2-connected regions, but the proof is identical for
%higher-connectedness.  %We begin by proving that bounded regions are simply
%connected.

%If a region is bounded, then it is simply connected iff it's complement is connected. 
%Consider a multiply-connected region $R(v)$ (fig\.~\ref{fig2a}), whose complement has
%at least one bounded component ``inside" $R(v)$, which we call $I$. 
%two components. 
%We focus on the bounded component ``inside" $R(v)$, which we call $I$. 
$I$ is part of the Voronoi diagram, so it must be composed of one or more
Voronoi cells. Because the diagram is orphan-free, each cell in the Voronoi
diagram contains its generating site: $I=\displaystyle{\cup_i R(w_i)}, i=1,\dots,m$
(Fig~\ref{fig:sca}), for some number $m>0$. 
%Because the diagram is orphan-free, each Voronoi region inside $I$
%is connected and contains its generating site; 
%Since there is a finite number of sites, the number of regions inside $I$ is finite (say $m$). 

We can now consider what the diagram would look like if we remove, one by
one, all of the sites $w_i$ in $I$, until only
one ($w_1$) is left.
The new set of sites is $V'=V\setminus \{w_i : i=2,\dots,m\}$. 
	From the definition of Voronoi diagram it is clear that 
	the region corresponding to site $w_1$ in the new diagram is $R'(w_1)\subseteq I$
(since no point is \emph{strictly} closer to $w_1$ than to $\{w_i : i=1,\dots,m\}$), and
$R'(w_1)\ne\phi$
(since, by Lem.~\ref{lem:interior}, it is $w_1\in
\mathbf{int\ }{R'(w_1)}$). %, because a site is always an interior point of its Voronoi region). 

In particular, since $R'(w_1)\ne\phi$, and $R'(w_1)$ is connected, 
$R'(v)$ is \emph{still} multiply-connected,
with $R'(w_1)$ playing the role of $I$ in the new diagram (Fig.~\ref{fig:scb}). 
We show that this cannot be the case, %the 2-connectedness of $R'(v)$ 
resulting in a contradiction. 

Since $I$ is bounded, then $R'(w_1)\subseteq I$ must be bounded as well. 
And since $w_1$ is an interior point of $R'(w_1)$, and $v$ is 
an interior point of $R'(v)$ (and therefore an interior point of the complement of $R'(w_1)$), 
then the line passing through $v$ and $w_1$ must intersect $R'(v)$ at
least at two distinct points $p$ and $q$. 
%This can be seen by considering the continuous function
%$\Phi(\lambda)=D(v,(1-\lambda)v + \lambda w_1) -D(w_1, (1-\lambda)v +
%\lambda w_1))$, with $\lambda\in\mathbb{R}$. 
%Since $R'(w_1)$ is bounded, it is $\Phi(\lambda)

%Because $$ is bounded, and $w_1$ is an interior the line passing through  $v$ and $w_1$ must intersect the set $R'(v)\cap
%R'(w_1)$ (the boundary between $R'(v)$ and $R'(w_1)$), 
%at least at two distinct points $p$ and $q$, or else $R'(w_1)$ could not be
%inside the outer boundary of $R'(v)$. 
Since the boundary between $R'(v)$ and $R'(w_1)$ is composed of
equidistant points to $v,w_1$, then both $p,q$ must be equidistant to $v,w_1$. 
By Lem.~\ref{lem:midpoint}, only one point on the line connecting $v,w_1$ can be equidistant to them, a contradiction.

%The lemma also holds for unbounded regions. 
%As fig 2c shows, an unbounded region may be simply connected, even if its
%complement is not connected. 
%In particular, unbounded 
%
%Note that, even if $R(v)$ is unbounded, if the set $I$ (as defined above) is
%bounded, then the same proof applies here. 
%The case when both $R(v)$ and $I$ are unbounded requires further
%consideration. 
%Like before, we begin by noting that if $I$ is unbounded, we can beging by discard
%all Voronoi regions inside $I$ except for one, $R(w_1)$, which is unbounded
%(fig 2c). 
%
%
%
%
%TODO: unbounded regions, better describe what it means for $R'(w_1)$ to be
%``inside" $R'(v)$ . 

\end{proof}

% ------------------------------------------------------------------------------------------ Appendix Colinear

\section*{Appendix B: Colinear sites}\label{app:colinear}

\begin{figure}[ht]
%\label{fig:colinear}
\centering
\subfigure%[Caption-a]
{
\includegraphics[height=3.5cm]{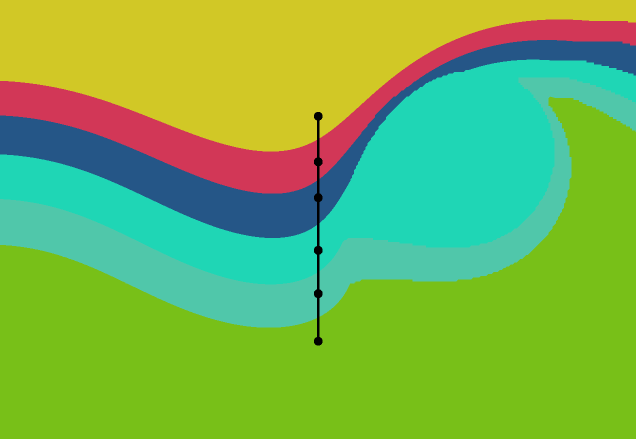}
%\label{fig:subfig1}
}
\subfigure%[Caption-b]
{
\includegraphics[height=3.5cm]{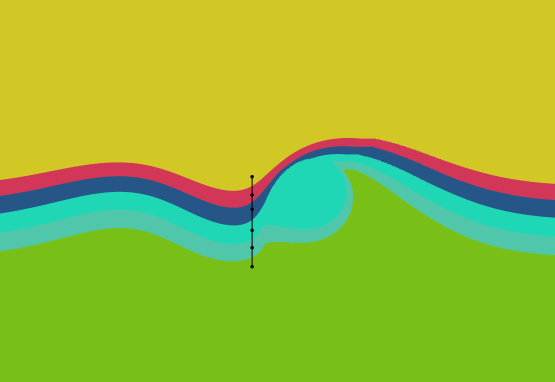}
%\label{fig:colineara}
}
\caption{If all sites are colinear, the dual is always a polygonal chain.}
\end{figure}

If all the sites are colinear then the structure of the Voronoi diagram
$\tilde{P}$ is
greatly simplified, and always has the form shown in
Fig.\  5.%~\ref{fig:colinear}. 
In particular, $\tilde{P}$ has no vertices aside from $p_\infty$ since 
vertices are equidistant to three or more sites, 
%, then there are no vertices in $P$ aside from
%$p_\infty$. 
%This is because 
and no point $p\in\mathbb{R}^2$ can be equidistant to three colinear
sites (since points equidistant to $p$ form an ellipse, and a line
intersects an ellipse at most twice). 

Consider the set of sites $V=\{w_l : l=1,\dots,m\}$ ordered linearly along
their supporting line. 
We shown that the graph dual to $P$, having $V$ as vertices,
has edges $\{(w_l,w_{i+1}) : l=1,\dots,m-1\}$. 

For every pair of sites $w_l,w_{l+1}$, because they are consecutive, and by Lem.~\ref{lem:midpoint}, 
the midpoint $m_{l,l+1} = (w_l+w_{l+1})/2$ is closest and equidistant to $w_l,w_{l+1}$,
and therefore the edge $(w_l,w_{l+1})$ is in the dual. 

We finally show that every dual edge $(w_i,w_j)$ is of the form
$(w_l,w_{l+1})$. Assume otherwise, and therefore that, since $w_i,w_j$ are
not consecutive, there is some $w_k$ between them. 
Because there is a dual edge $(w_i,w_j)$, the
corresponding primal edge $\tilde{E}_{w_i,w_j}$ in $\tilde{P}$ is not empty, and therefore there is some 
point $p_{ij}\in \tilde{E}_{w_i,w_j}$ that is closest and equidistant to $w_i,w_j$. 
By the convexity of $D(\cdot,p_{ij})$, and the fact that $w_k\in \overline{w_i
w_j}$, it is $D(w_k,p_{ij}) < D(w_i,p_{ij}) = D(w_j,p_{ij})$, a
contradiction. Therefore $(w_i,w_j)$ are consecutive sites. 

Since we have characterized the set of edges and vertices, and
there are no (interior) faces in the dual, this completely determines the
dual when all sites are colinear. %characterizes the dual. 

Perhaps interestingly, this means that, in the colinear case, the structure of the dual does not depend on the metric.

\setcounter{section}{6}
\section*{Appendix C: Technical Lemmas}\label{app:technical}

We include in this appendix all claims that, 
although needed to prove the results in this paper, 
are used only as intermediate steps. 
Some of them reveal useful aspects of the structure of the problem. 
In particular, Lem.~\ref{lem:halfspace} is used as a basic step throughout the
paper. 
The proofs are quite technical, and can be skipped on a first read without
affecting the comprehension of the remainder of the paper.

The following result 
%is not strictly needed for our purposes, but is a possibly interesting property of anisotropic Voronoi diagrams. It 
states that, even though the regions and edges of an 
anisotropic Voronoi diagram may be unbounded, there is a sufficiently
large ball outside which there are no vertices. % of the diagram. 

\begin{lemma}\label{lem:bounded-vertices}
%	If the ratio of eigenvalues of the metric is bounded, 
%then 
The set of vertices in the primal diagram $\tilde{P}$ is bounded. 
%  is $\rho$ such that no $p\in\mathbb{R}^2$ with $\|p\| > \rho$ is equidistant to
%three or more sites. 
\end{lemma}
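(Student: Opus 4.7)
The plan is to bound the location of any Voronoi vertex $c \in \tilde{P}$ by converting the equidistance conditions that define $c$ into a linear system for $Q_c c$, and then to use the bounded-anisotropy hypothesis of Sec.~\ref{sec:boundary} to convert that bound into a bound on $\|c\|$ itself.

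First, fix a vertex $c$ equidistant to (at least) three sites $v_1,v_2,v_3 \in V$. These three sites cannot be colinear: by Thm.~\ref{th:ece} they lie on the circum-ellipse $\{p : (p-c)^t Q_c (p-c) = \mu^2\}$ centered at $c$, and a line meets an ellipse in at most two points. Expanding $D(v_i,c)^2 = (c-v_i)^t Q_c (c-v_i)$ and subtracting the equations $D(v_1,c)^2 = D(v_i,c)^2$ for $i=2,3$ eliminates the $c^t Q_c c$ term and yields the linear system
\[ A\,(Q_c c) = b, \qquad A = \begin{pmatrix} (v_2-v_1)^t \\ (v_3-v_1)^t \end{pmatrix}, \qquad b_i = \tfrac{1}{2}\bigl(v_{i+1}^t Q_c v_{i+1} - v_1^t Q_c v_1\bigr). \]
Non-colinearity of $v_1,v_2,v_3$ makes $A$ invertible, and since $V$ is finite there are only finitely many possible triples, so $\|A^{-1}\|$ is uniformly bounded by some constant $K$ depending only on $V$.

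Because $V$ is a bounded set, each entry of $b$ satisfies $|b_i| \le C\,\lambda_{\max}(Q_c)$ with $C$ depending only on $\max_{v \in V}\|v\|$. Hence $\|Q_c c\| \le KC\,\lambda_{\max}(Q_c)$, and combining this with the elementary bound $\|Q_c c\| \ge \lambda_{\min}(Q_c)\,\|c\|$ gives
\[ \|c\| \;\le\; KC\cdot\frac{\lambda_{\max}(Q_c)}{\lambda_{\min}(Q_c)}. \]
The bounded-anisotropy hypothesis caps this ratio by $\gamma^2$ uniformly in $c$, so every Voronoi vertex lies in the ball of radius $KC\gamma^2$ about the origin, proving the lemma.

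The main obstacle is conceptual rather than computational: the system $A(Q_c c) = b$ is linear in $c$ only after $Q_c$ is fixed, but $Q_c$ itself depends on $c$, so one cannot naively ``solve for $c$''. The cleanest way around this is to treat $Q_c$ as an arbitrary positive-definite matrix at each vertex, extract a bound on $\|Q_c c\|$ rather than $\|c\|$ directly, and only at the end invoke bounded anisotropy to translate between the two. A minor additional point: when $c$ is equidistant to more than three sites, any three of them form a non-colinear triple by the same ellipse argument, so the derivation applies unchanged.
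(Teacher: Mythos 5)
Your proof is correct, but it takes a genuinely different route from the paper's. The paper fixes a triple of non-colinear sites and observes that, for each \emph{fixed} normalized metric $Q'$ with eigenvalue ratio in $[1,\gamma^2]$, the transformed point $M'p$ must be the Euclidean circumcenter $\Psi(M'v_i,M'v_j,M'v_k)$, so $p=M'^{-1}\Psi(M'v_i,M'v_j,M'v_k)$ is the unique candidate; it then invokes compactness of the space of admissible $Q'$ and continuity of this map to conclude boundedness, without producing an explicit radius. You instead linearize the equidistance conditions to get $A(Q_c c)=b$, bound $\|Q_c c\|$ by $\|A^{-1}\|\,\|b\| = O(\lambda_{\max}(Q_c))$, and divide by $\lambda_{\min}(Q_c)$, so bounded anisotropy enters only at the very last step; this yields an explicit bound $\|c\|\le KC\gamma^2$ where $K$ reflects how far each site triple is from colinearity. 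Both arguments use the same two finiteness/uniformity inputs (finitely many triples, eigenvalue ratio capped by $\gamma^2$), and your handling of the circular dependence of $Q_c$ on $c$ (bounding $\|Q_c c\|$ first) is the right fix for the obstacle you identify. Two cosmetic remarks: the non-colinearity of a triple admitting an equidistant point follows directly from the definition of $D$ (all three sites lie on the ellipse $(p-c)^tQ_c(p-c)=\mu^2$, which meets a line at most twice), so you do not need to invoke Thm.~\ref{th:ece} and its orphan-freedom hypothesis there; and, like the paper, you should note that triples which \emph{are} colinear simply admit no equidistant point, so they contribute nothing to the vertex set.
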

\begin{proof}
%Given the eigendecomposition $Q_p = R_p\Lambda_p R_p^t$, 
%with $\Lambda_p=\diag\left[\lambda_2(p),\lambda_1(p)\right]$ and $R_p$ orthonormal, 
Let $M_p = R_p \Lambda_p^{1/2} R_p^t$ be the unique symmetric, positive definite square root of $Q_p$. 
Define $Q'_p = Q_p\cdot\lambda^{-1}_1(p) = R_p \diag\left[\lambda_2(p)/\lambda_1(p), 1\right] R_p^t$ 
(which is continuous by virtue of the continuity of $Q$ and $\lambda_1(p)$) 
and its symmetric positive definite square root $M'_p$. 

Pick three sites $v_i,v_j,v_k\in V$. If they are colinear then no point is 
equidistant to them, and so the lemma holds vacuously.  
If they are not colinear, 
consider some $Q'\in\mathbb{R}^{2\times 2}$ symmetric,
positive definite with ratio of eigenvalues in the range $[1,\gamma^2]$. 
Define  $\Psi(a,b,c)$ to be the {unique} center
of a circle passing through three (non-colinear) points $a,b,c$. $\Psi$ is
continuous wherever $a,b,c$ are not colinear. 

If $p$ is equidistant to $v_i,v_j,v_k$, and has $Q'_p=Q'$, %for some chosen $Q'$, 
then clearly 
\[(v_i-p)^t Q' (v_i-p) = (v_j-p)^t Q' (v_j-p) = (v_k-p)^t Q' (v_k-p) \]
or equivalently $\|M' v_i- M' p\| = \|M'v_j - M' p\| = \|M' v_k- M' p\|$. 
This means that $M' p = \Psi(M' v_i, M' v_j, M' v_k)$, 
and so $p = M'^{-1} \Psi(M' v_i, M' v_j, M' v_k)$ is the unique point
that can be equidistant to $v_i,v_j,v_k$ while having $Q'_p=Q'$.

Since the space of $Q'$ is compact and $M'^{-1} \Psi(M' v_i, M' v_j, M'
v_k)$ is continuous w.r.t.\ $M'$, then set of possible equidistant points to
$v_i,v_j,v_k$ is compact, and therefore bounded. 
Since this is true for every triple $v_i,v_j,v_k\in V$, and there is a
finite number of triples, the set of possible equidistant points to three or
more sites is compact, and therefore bounded. 
Since, by definition, vertices are equidistant to three or more sites, the lemma follows. 
% lemma folllows. 
%Defining $\rho$ as the maximum norm of possible equidistant points over all
%triples of sites completes the lemma. 
\end{proof}

For the sake of conciseness, we define: %an open ellise centered at $p$ and passing through $m$ as:
\begin{definition}
	The open ellipse centered at $p$ with $m$ in its boundary is %\in\partial\theta_p(m)$ is
	\[\theta_p(m) = \{x\in\mathbb{R}^2 : D(x,p) < D(m,p)\} =
\{x\in\mathbb{R}^2 : (x-p)^t Q_p (x-p) < (m-p)^t Q_p (m-p)\}\]
\end{definition}

Given some $p\in\mathbb{R}^2$, we make frequent use of a map that transforms points
$q\in\mathbb{R}^2$ into $q'=M'_p q$, where 
$M'_p=R_p \diag[(\lambda_2(p)/\lambda_1(p))^{1/2}, 1]R_p^t$ defined in the
proof of Lem.~\ref{lem:bounded-vertices}. %, and transforms lines by multiplying all its point by $M'_p$. 
We follow the convention of denoting by $q'$ the transformed version of point
$q$, where it will be clear from the context which matrix $M'_p$ is used in
the transformation. 
%the matrix $M'_p$ chosen for the transformation will be clear from the context. 
Since the eigenvalues of $M'_p$ are in the range $[1,\gamma]$, it's easy to
see that: % distances between transformed points can only increase by a similar factor: 
\begin{itemize}
\item \emph{Given $p,q\in\mathbb{R}^2$, it is $\|p-q\| \le \|p'-q'\| \le \gamma \|p-q\|$}
\item \emph{If $d(p,l)>0$ is the minimum Euclidean distance between a point
$p$ and a line $l$, then $\gamma^{-1} \le d(p',l')/d(p,l) \le \gamma$.}\\ \\
%Given point $p$ and line $l$, if their minimum distance is
%$d(p,l)$, then the distance between their transformed versions satisfies 
%the distance between a transformed point $p'$ and line $l'$ is 
%$d(p',l')/d(p,l) \in [\gamma^{-1},\gamma]$ %, where $d(p,l)$ is the distance between the original point and the original line 
\emph{Proof:} Given $a,b\in l$, $a\neq b$, it is $d(p',l') = |(p'-b')\times (a'-b')| /
\|p'-b'\|\cdot\|a'-b'\|$ where the numerator is $|(p'-b')\times (a'-b')| =
\gamma |(p-b)\times (a-b)|$, and the denominator satisfies 
$\|p-b\|\cdot\|a-b\| \le \|p'-b'\|\cdot\|a'-b'\| \le \gamma^2
\|p-b\|\cdot\|a-b\|$. Therefore $d(p',l')/d(p,l) \in [\gamma^{-1},\gamma]$. 
\end{itemize}
%These two facts are used to prove 
The following technical lemma is 
the basis for the proof of Lem.~\ref{lem:halfspace}. % many of the results that follow. 

\begin{figure}[ht]
\centering
\subfigure[]{
\includegraphics[height=3cm]{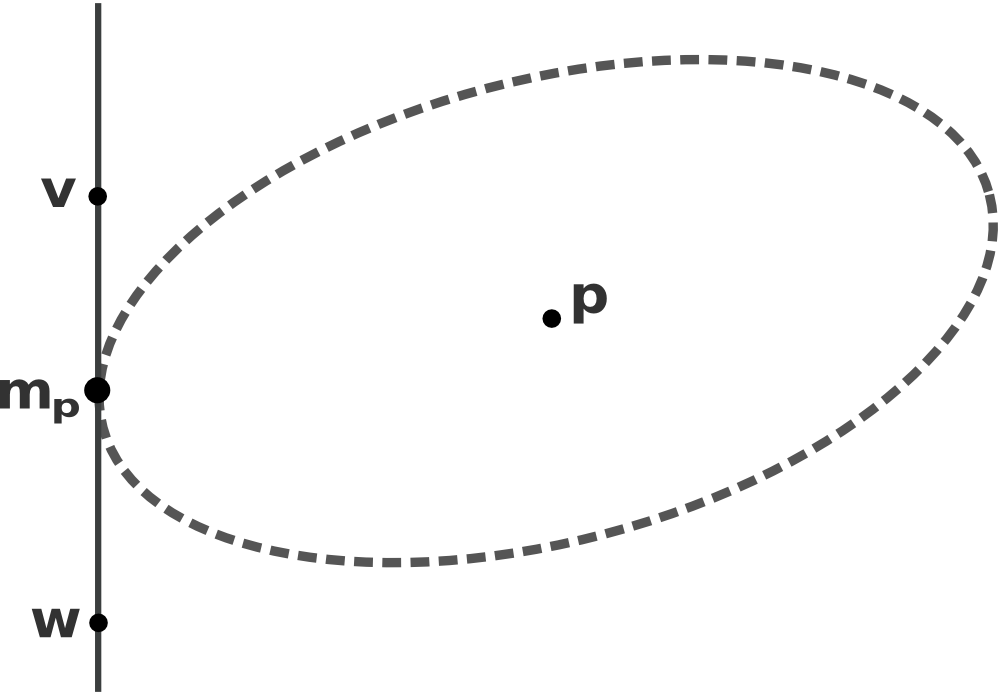}
\label{fig:thetaa}
}
\subfigure[]{
\includegraphics[height=3cm]{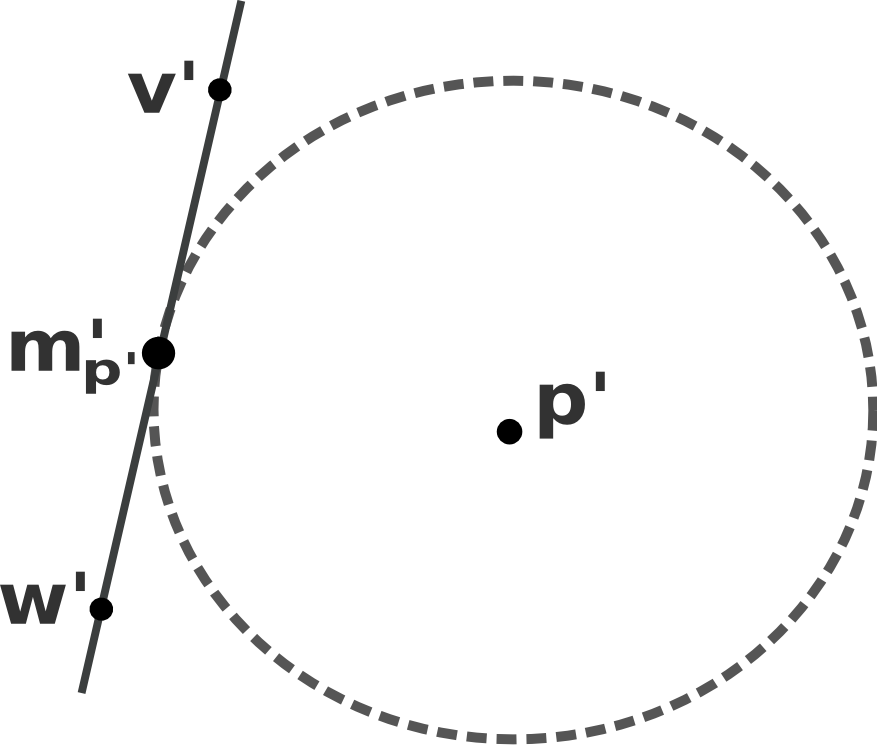}
\label{fig:thetab}
}
\label{fig:theta}
\caption{Diagram for the proof of Lem.~\ref{lem:tech}.}
\end{figure}

\begin{lemma}\label{lem:tech}
	Given $v,w\in\mathbb{R}^2$, with supporting line $l$, and 
an open half space $H^{+}$ on one side of $l$, 
as well as a set $S\subset H^{+}$ of points whose closest point in $l$
belongs to $\overline{v w}$, 
then 
for all $q\in H^{+}$, if $p\in S$ has
\[ \|p\| > \max\{\|v\|,\|w\|\} + \frac{\gamma^2 \displaystyle{\max_{m\in\overline{v w}} \|q-m\|} }{ 2 d(q,l) }\]
then %$\rho$ such that all $p\in S$ with $\|p\| > \rho$ 
$p$ is strictly closer to $q$ than to $l$. 
%where $H^{+}$ is an open half space on one
%side of the supporting line $l$ of $v,w$, and $S$ is the set of points in
%$H^{+}$ whose closest point in $l$ belongs to $\overline{v w}$.
%For this lemma, assume given $v,w\in\mathbb{R}^2$ and an open half space $H^{+}$ on one
%side of their supporting line $l$, and define $S$ to be the set of points in
%$H^{+}$ whose closest point in $l$ belongs to $\overline{v w}$. 
\end{lemma}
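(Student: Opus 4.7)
The plan is to reduce the statement to an ordinary Euclidean problem via the linear map $r \mapsto r' := M'_p r$ of the excerpt (with $M'_p$ the symmetric positive definite square root of $Q'_p = Q_p/\lambda_1(p)$, so that its eigenvalues lie in $[1,\gamma]$). Under this transformation, $D(r,p) = \sqrt{\lambda_1(p)}\,\|r' - p'\|$, and the assertion ``$p$ is strictly closer to $q$ than to $l$'' becomes the Euclidean inequality
\[\|q' - p'\| < d(p', l'),\]
i.e., $p'$ lies in the interior of the Euclidean parabola with focus $q'$ and directrix $l'$. A direct algebraic expansion of this inequality (writing $q'-p' = (q'-m_{p'}) + (m_{p'}-p')$ and using that $p'-m_{p'}$ is orthogonal to $l'$ by the definition of the Euclidean projection $m_{p'}$ of $p'$ onto $l'$) shows that the parabola containment is equivalent to
\[d(p', l') > \frac{\|q' - m_{p'}\|^2}{2\, d(q', l')}.\]

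With this clean sufficient condition in hand, I would bound each side using the bullet facts stated just before the lemma. For the left-hand side, the hypothesis $p \in S$ says the Euclidean projection $m_p$ of $p$ onto $l$ lies in $\overline{vw}$, so $\|m_p\| \leq \max\{\|v\|,\|w\|\}$ and therefore
\[d(p,l) = \|p - m_p\| \geq \|p\| - \max\{\|v\|, \|w\|\},\]
which combined with $d(p',l') \geq \gamma^{-1} d(p,l)$ gives a lower bound on $d(p',l')$ that grows with $\|p\|$. For the right-hand side, $d(q',l') \geq \gamma^{-1} d(q,l)$ lower-bounds the denominator (contributing one factor of $\gamma$), and the numerator $\|q' - m_{p'}\|$ is controlled from above by invoking $\|a'-b'\| \leq \gamma\|a-b\|$ for a suitable preimage of $m_{p'}$ in $l$ lying near $\overline{vw}$ (contributing the other factor of $\gamma$, for a total $\gamma^2$). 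Putting these together, the stated hypothesis on $\|p\|$ exactly forces the parabola inequality, giving the conclusion.

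The main obstacle is the upper bound on $\|q' - m_{p'}\|$ in terms of $\max_{m \in \overline{vw}} \|q - m\|$: the point $m_{p'}$ is the \emph{Euclidean} projection of $p'$ onto $l'$, and because $M'_p$ does not in general send perpendiculars to $l$ to perpendiculars to $l'$, this point is not the image of $m_p \in \overline{vw}$. One must therefore estimate the ``drift'' of $m_{p'}$ along $l'$ relative to $M'_p m_p$, and argue that this drift is harmless in the relevant regime, using the eigenvalue bounds on $M'_p$ and the fact that $m_p \in \overline{vw}$ keeps the base point within a bounded segment; this bookkeeping is what ultimately produces the $\gamma^2$ factor in the stated threshold rather than a higher power.
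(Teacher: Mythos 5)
Your overall strategy---push everything through $M'_p$, reduce ``closer to $q$ than to $l$'' to the condition that $q'$ lies in the circle centered at $p'$ tangent to $l'$, i.e.\ $d(p',l') > \|q'-m_{p'}\|^2/\bigl(2\,d(q',l')\bigr)$, and then bound both sides via the eigenvalue estimates---is exactly the paper's argument. But the issue you flag as ``the main obstacle'' and defer is not bookkeeping; it is the crux, and under your reading of the hypothesis it cannot be repaired. You interpret ``closest point in $l$'' as the \emph{Euclidean} projection $m_p$ of $p$ onto $l$. Then the anisotropic foot $\arg\min_{m\in l}D(m,p)$ (whose $M'_p$-image is the point $m_{p'}$ you actually need) is displaced from $m_p$ along $l$ by a distance proportional to $d(p,l)$, hence proportional to $\|p\|$ for $p$ in the slab over $\overline{vw}$ (take a constant metric whose axes sit at $45^\circ$ to $l$). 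Consequently $\|q'-m_{p'}\|$ grows linearly in $\|p\|$, the right-hand side of your parabola inequality grows quadratically while the left-hand side $d(p',l')\le\gamma\,d(p,l)$ grows only linearly, and the inequality---which is \emph{equivalent} to the conclusion, not merely sufficient---fails for all large $\|p\|$. So the drift is fatal, not ``harmless in the relevant regime.''

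The resolution, and the paper's reading, is that the closest point in the definition of $S$ is taken with respect to $D(\cdot,p)$: $m_p=\arg\min_{m\in l}D(m,p)\in\overline{vw}$. Then the ellipse $\theta_p(m_p)$ is tangent to $l$ at $m_p$, and $M'_p$ maps it to a circle centered at $p'$ tangent to $l'$ at $M'_p m_p$; a circle centered at $p'$ touches $l'$ only at the Euclidean foot, so $M'_p m_p=m_{p'}$ \emph{exactly} and there is no drift to estimate. The two bounds you want then fall out directly: $\|q'-m_{p'}\|=\|M'_p(q-m_p)\|\le\gamma\max_{m\in\overline{vw}}\|q-m\|$ and $d(q',l')\ge\gamma^{-1}d(q,l)$ for the right-hand side, and $d(p',l')=\|M'_p(p-m_p)\|\ge\|p-m_p\|\ge\|p\|-\max\{\|v\|,\|w\|\}$ for the left (no $\gamma^{-1}$ loss, since $m_p\in\overline{vw}$ under this reading). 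This interpretation is also the one consistent with every invocation of the lemma (Lemmas~\ref{lem:halfspace}, \ref{lem:contrad}, \ref{lem:VW}, where $S$ consists of Voronoi-edge points or $\pi$ is defined by minimizing $D$). One last caution: your derived condition correctly carries $\|q'-m_{p'}\|^2$ in the numerator, so the threshold in the statement should read $\max_{m\in\overline{vw}}\|q-m\|^2$ (and the power of $\gamma$ adjusts accordingly); as literally printed the hypothesis does not ``exactly force'' your squared inequality.
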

\begin{proof}
Given a point $p$ that satisfies the above constraint, 
%Since $S\cap H^{+}$ is unbounded, we can choose 
%\[
%	\rho = \max\{\|v\|,\|w\|\} + \frac{\gamma^2 \displaystyle{\max_{m\in\overline{v w}} \|q-m\|} }{ 2 d(q,l) }
%\]
%where $d(q,l) > 0$ since $q\in H^{+}$ and $H^{+}$ doesn't
%include $l$. 
%The lemma reduces to showing that for all $p$ with $\|p\| > \rho$, if $ it is $D(q,p) < D(v,p)=D(w,p)$. 
%Pick $p\in S$ with $\|p\| > \rho$ and 
by the definition of $S$, 
it is $m_p = \underset{m\in l}{\mathbf{argmin\ } } D(m,p)\in\overline{v w}$. 
Also, since it is $q\in H^{+}$ and since $H^{+}$ doesn't include $l$, 
$q\notin l$ and therefore $d(q,l) > 0$.
%doesn't include $l$. %$, by the definition of $S$, is $m_p\in\overline{v w}$. 
Because $m_p = \underset{m\in l}{\mathbf{argmin\ }} D(m,p)\in\overline{v w}$, 
it can be easily verified that the ellipse $\theta_p(m_p)$ must be 
tangent to $l$ at $m_p$ (Fig.~\ref{fig:thetaa}). 
When transforming it by $M'_p$, $\theta_p(m_p)$ becomes the circle
$\theta'_{p'}(m'_p)=\{x'\in\mathbb{R}^2 : \|x'-p'\| < \|m'_p-p'\|\}$
(Fig.~\ref{fig:thetab}), which is tangent to $l'$ at $m'_p$. 
The lemma then reduces to showing that $q'\in\theta'_{p'}(m'_p)$ since this
implies $q\in\theta_p(m_p)$ and this in turn means that $D(q,p) < D(m_p,p) = \min_{m\in l}
D(m,p)$, and therefore $q$ is closer to $p$ than to $l$. 

%Consider the transformation that turns fig. 6a into fig. 6b by multiplying
%the coordinates of each point on the plane by $M'_p$. 
%The ellipse in fig. 6a is the set of points $\theta_p=\{x\in\mathbb{R}^2 : D(x,p) < D(m,p)\}$ which, 
%since $D(x,p)<D(m,p)$ can be rewritten as $\|M'_p (x-p)\| < \|M'_p(m-p)\|$, 
%becomes the circle $\theta'_{p'}=\{x'\in\mathbb{R}^2 : \|x'-p'\| < \|m'-p'\|\}$ of fig. 6b.
%Since, by the convexity of $D(\cdot,\cdot)$, 
%it is $D(m,p) < D(v,p) = D(w,p)$, then 
%the lemma reduces to showing that $q'\in\theta'_{p'}$, since this implies
%$q\in\theta_p$, which  means $D(q,p) < D(m,p) < D(v,p) = D(w,p)$.

A simple calculation shows that $q'\in\theta'_{p'}(m'_p)$ iff 
$\|p'-m'_p\| > \|q'-m'_p\| / \left[2 d(q',l')\right]$. %, where $l'$ is the supporting line of $v',v'_j$. 
Since $m_p\in\overline{v w}$ implies $\|m_p\| \ge
\max\{\|v\|,\|w\|\}$, it is 
\begin{eqnarray*}
	\|p'-m'_p\| &\ge& \|p-m_p\| \ge \|p\| - \|m_p\| > \\
 \frac{\gamma^2 \displaystyle{\max_{m\in\overline{v w}} \|q-m\|} }{ 2 d(q,l) }
&\ge& \gamma^2 \frac{\|q-m_p\|}{ 2 d(q,l) } 
 \ge \frac{\|q'-m'_p\|}{2 d(q',l')}
\end{eqnarray*}
and so $q'\in\theta'_{p'}(m'_p)$, as claimed. 	
\end{proof}

Given $v_i,v_j\in V$ whose Voronoi regions are neighbors in $\tilde{P}$, %in the primal diagram, and the non-empty set $\tilde{E}_{w_i,w_j}$ of points closest to sites $v_i,v_j$, 
define $H^{+}_{ij}$ and $H^{-}_{ij}$ to be the two open half spaces 
on either side of the supporting line
$l_{ij}$ of $v_i,v_j$. Clearly, $\{H^{+}_{ij}, H^{-}_{ij}, l_{ij}\}$ form a partition of
$\mathbb{R}^2$. 
If the edge $\tilde{E}_{v_i,v_j}$ of $\tilde{P}$ corresponding to sites $v_i,v_j$ is unbounded
then, it must be either $\tilde{E}_{v_i,v_j}\cap l_{ij}$, $\tilde{E}_{v_i,v_j}\cap H^{+}_{ij}$, or
$\tilde{E}_{v_i,v_j}\cap H^{-}_{ij}$ unbounded.
By Lem.~\ref{lem:midpoint}, it is $\tilde{E}_{v_i,v_j}\cap l_{ij}\subseteq \{(v_i+v_j)/2\}$ 
and thus $\tilde{E}_{v_i,v_j}\cap l_{ij}$ is always bounded. 
Therefore it must be that either $\tilde{E}_{v_i,v_j}\cap H^{+}_{ij}$ or
$\tilde{E}_{v_i,v_j}\cap H^{-}_{ij}$ are unbounded (or both). 
The following lemma proves that to each unbounded Voronoi edge corresponds
at least one open half-space that does not contain any site. 
%
%Since $\tilde{E}_{v_i,v_j}$ is unbounded, then it must be either $\tilde{E}_{v_i,v_j}\cap H^{+}_{ij}$ or
%$\tilde{E}_{v_i,v_j}\cap H^{-}_{ij}$ unbounded. Assuming w.l.o.g.\   that $\tilde{E}_{v_i,v_j}\cap H^{+}_{ij}$
%is unbounded, we prove that $H^{+}_{ij}\cap V=\phi$.

%We now show that to every boundary edge of $G$ corresponds an
%``empty" half space that contains no sites. This property is analogous to
%that of boundary edges in ordinary Delaunay triangulations, and is the basis
%for the argument that proves that $B\subseteq\mathcal{B}$. 

\begin{lemma}\label{lem:halfspace}
    Given an edge $\tilde{E}_{v_i,v_j}$ of the Voronoi diagram $\tilde{P}$, corresponding to neighboring sites $v_i,v_j\in
V$, % in the Voronoi
%diagram, % with neighboring regions, 
if $\tilde{E}_{v_i,v_j}\cap  H^{+}_{ij}$ ($\tilde{E}_{v_i,v_j}\cap  H^{-}_{ij}$) is unbounded, then it is $ H^{+}_{ij}\cap V=\phi$
 ($ H^{-}_{ij}\cap V=\phi$), 
where $H^{+}_{ij},H^{-}_{ij}$ are open half spaces on either side of the
supporting line of $v_i,v_j$. 
%if the set of points closest to $v_i,v_j$ in the primal diagram
%$P$ is unbounded, then there
%is an open half space, on one side of the supporting line of $v_i,v_j$, that contains no sites.
%, if their associated DW diagram edge $\tilde{E}_{v_i,v_j}$
%is unbounded in the open half-plane $H_{ij}$, then $H_{ij}$ contains no
% sites. 
\end{lemma}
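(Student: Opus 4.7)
The plan is to argue by contradiction: suppose $\tilde{E}_{v_i,v_j}\cap H^{+}_{ij}$ is unbounded and there exists a site $q\in H^{+}_{ij}\cap V$. The idea is to pick a point $p$ on the unbounded edge that is so far from the origin that Lem.~\ref{lem:tech} forces $p$ to be strictly closer to the site $q$ than to the line $l_{ij}$, which will contradict the fact that $p$ lies in $R(v_i)\cap R(v_j)$.

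The crucial preliminary step is to verify that every $p\in \tilde{E}_{v_i,v_j}$ has its closest point on $l_{ij}$ (under the $Q_p$-distance) lying in the segment $\overline{v_i v_j}$, so that Lem.~\ref{lem:tech} can be applied with $v=v_i$, $w=v_j$ and $S=\tilde{E}_{v_i,v_j}\cap H^{+}_{ij}$. I would do this via the now-familiar trick of transforming by $M'_p$: after the transformation, $l_{ij}$ becomes the Euclidean line $l'_{ij}$ through $v'_i,v'_j$, and $D(v_i,p)=D(v_j,p)$ becomes the Euclidean equality $\|v'_i-p'\|=\|v'_j-p'\|$. The foot of the Euclidean perpendicular from $p'$ to $l'_{ij}$ is then the midpoint $(v'_i+v'_j)/2$ (by Pythagoras), which pulls back to $m_p=(v_i+v_j)/2\in\overline{v_i v_j}$.

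Next, since $q\in H^{+}_{ij}$ and $H^{+}_{ij}$ is open and disjoint from $l_{ij}$, we have $d(q,l_{ij})>0$, so the threshold $\max\{\|v_i\|,\|v_j\|\}+\gamma^{2}\max_{m\in\overline{v_iv_j}}\|q-m\|/(2d(q,l_{ij}))$ from Lem.~\ref{lem:tech} is finite. Using the assumption that $\tilde{E}_{v_i,v_j}\cap H^{+}_{ij}$ is unbounded, I would choose $p$ in this set with $\|p\|$ exceeding that threshold; Lem.~\ref{lem:tech} then gives $D(q,p)<D(m_p,p)$. But $v_i,v_j\in l_{ij}$, so $D(m_p,p)\le D(v_i,p)=D(v_j,p)$, yielding $D(q,p)<D(v_i,p)$, which contradicts $p\in R(v_i)\cap R(v_j)$. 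The symmetric statement for $H^{-}_{ij}$ follows by relabeling.

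The main obstacle is essentially bookkeeping rather than conceptual: one must be careful that the point $p$ we select lies in $\tilde{E}_{v_i,v_j}\cap H^{+}_{ij}$ (not merely in the closure), and that the unboundedness hypothesis actually produces such a $p$ with arbitrarily large $\|p\|$. This follows because $\tilde{E}_{v_i,v_j}\cap H^{+}_{ij}$ being unbounded means its points have unbounded norm, so the required $p$ exists. Once Lem.~\ref{lem:tech} and the identification $m_p=(v_i+v_j)/2$ are in hand, the contradiction is essentially immediate.
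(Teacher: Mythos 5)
Your proposal is correct and follows essentially the same route as the paper: assume a site exists in the empty half-space, apply Lem.~\ref{lem:tech} with $S=\tilde{E}_{v_i,v_j}\cap H^{+}_{ij}$, pick $p$ of sufficiently large norm on the unbounded edge, and derive $D(q,p)<D(v_i,p)=D(v_j,p)$, contradicting $p\in\tilde{E}_{v_i,v_j}$. If anything you are more careful than the paper, which silently assumes the hypothesis of Lem.~\ref{lem:tech} that the closest point of $l_{ij}$ to $p$ lies in $\overline{v_i v_j}$, whereas you verify via the $M_p$-transformation that it is exactly the midpoint.
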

\begin{proof}
%We show by contradiction that, for any site $w\in V$, it cannot be $w\in
%H^{+}_{ij}$. 
Assume $w\in H^{+}_{ij}\cap V$. 
We make use of Lem.~\ref{lem:tech}, with $v_i,v_j,l_{ij},\tilde{E}_{v_i,v_j}\cap H^{+}_{ij}$
taking the role of $v,w,l,S$, respectively. %, and conclude that, 
%for every $q\in H^{+}_{ij}$, 
Since $\tilde{E}_{v_i,v_j}\cap H^{+}_{ij}$ is unbounded, 
we can choose $p\in \tilde{E}_{v_i,v_j}\cap H^{+}_{ij}$ of sufficiently large norm, according to
Lem.~\ref{lem:tech}, 
such that $D(w,p) < D((w_i+w_j)/2,p)$ which, 
by the convexity of $D(\cdot,p)$, means that $D(w,p) < D((v_i+v_j)/2,p) <
D(v_i,p)=D(v_j,p)$. 
Since $p$ is closer to $w$ than to $v_i,v_j$, it is $p\notin \tilde{E}_{v_i,v_j}$, a
contradiction. 
%and therefore $q$ cannot be a site or else the closest
%sites to $p$ wouldn't be $v_i,v_j$. %$p\in \tilde{E}_{v_i,v_j}$
\end{proof}

Lastly, we show that every point of sufficiently large norm is strictly closer to
the sites $W=V\cap\partial\mathcal{CH}(V)$ in the boundary of the convex
hull of $V$, %that lie in the boundary of $\mathcal{CH}(V)$ 
than to sites ($V\setminus W$) in its interior. % $\mathcal{CH}(V)$. 
This means that, outside of a sufficiently large ball, the Voronoi 
diagram of $V$ is the same as that of $W$, a fact critical in proving the
results of Sec.~\ref{sec:boundary}. The proof of this property makes use of the
bound in the ratio of eigenvalues of $Q$, and it can easily be shown that
counter-examples exist in cases where no such bound exists. 
% since the sites in $V\setminus W$
The proof assumes that the origin of $\mathbb{R}^2$ is chosen to be in the interior of $\mathcal{CH}(V)$.

\begin{figure}
\centering
\subfigure[]{
\includegraphics[height=2.5cm]{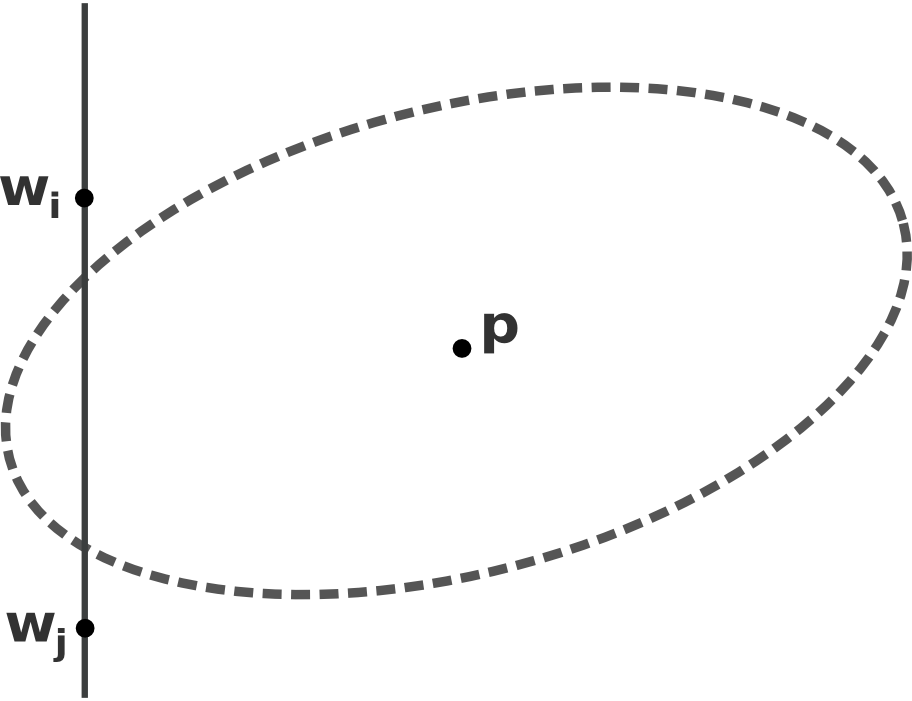}
\label{fig:wijcap}
}
\subfigure[]{
\includegraphics[height=2.5cm]{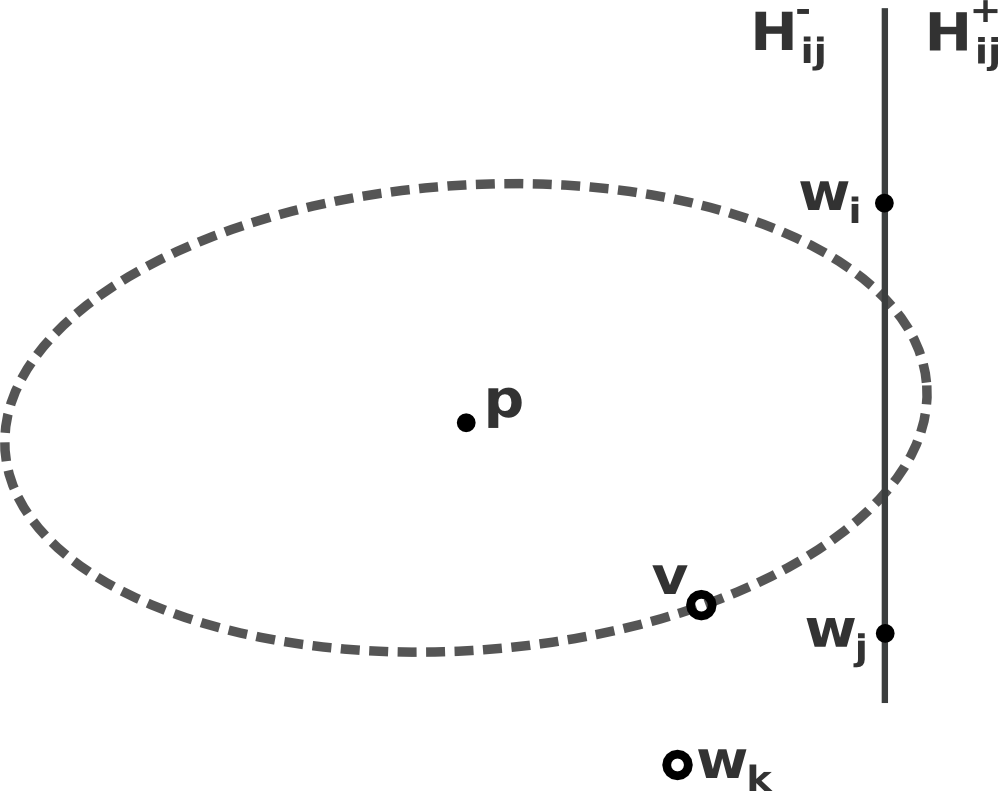}
\label{fig:vH-}
}
\quad\quad%\quad\quad
\subfigure[]{
\includegraphics[height=2.5cm]{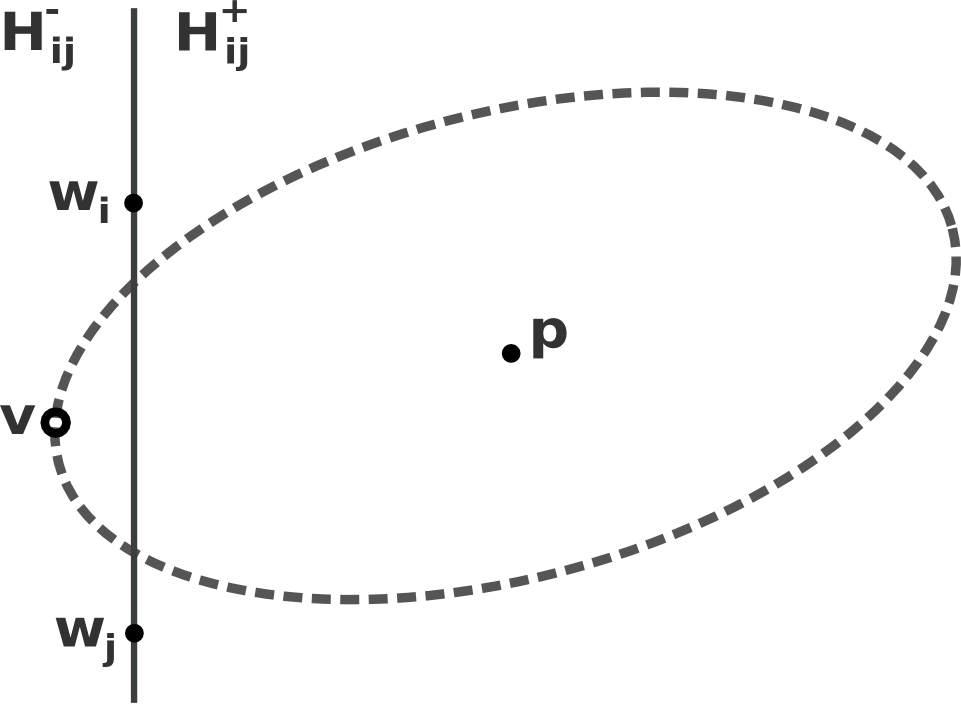}
\label{fig:vH+}
}
\label{fig:vH}
\caption{Diagram for the proof of Lem.~\ref{lem:VW}.}
\end{figure}

\begin{lemma}\label{lem:VW}
	There is $\rho$ such that all $p\in\mathbb{R}^2$ with $\|p\| > \rho$ 
are closer to $W$ than to $V\setminus W$. % ($D(w,p) < D(v,p)$ with $w\in W$, $v\in V\setminus W$).
\end{lemma}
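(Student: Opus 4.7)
The plan is to reduce the anisotropic distance comparison to a Euclidean one under the pointwise linear map $M'_p$ introduced after Lem.~\ref{lem:bounded-vertices}, whose eigenvalues lie in $[1,\gamma]$: writing $v':=M'_p v$, the inequality $D(u,p)>D(w,p)$ is equivalent to $\|u'-p'\|>\|w'-p'\|$, i.e.\ a purely Euclidean comparison in the primed coordinates. The geometric fact I will exploit is that every interior site $u\in V\setminus W$ is strictly inside $\mathcal{CH}(V)=\mathcal{CH}(W)$, so it is not extreme in any direction: for every unit $n\in\mathbb{S}^1$, some $w\in W$ satisfies $n^t(w-u)\geq\epsilon$ for a single, \emph{uniform} constant
\[
\epsilon\;:=\;\min_{u\in V\setminus W}\min_{n\in\mathbb{S}^1}\max_{w\in W} n^t(w-u).
\]
This $\epsilon>0$ by continuity of the inner max, compactness of $\mathbb{S}^1$, and finiteness of $V\setminus W$.

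Expanding,
\[
\|u'-p'\|^2-\|w'-p'\|^2 \;=\; \|u'\|^2-\|w'\|^2 \;+\; 2\,(p')^t(w'-u'),
\]
and the first two terms are bounded in absolute value by $\gamma^2 M^2$, with $M:=\max_{v\in V}\|v\|$, because the eigenvalues of $M'_p$ lie in $[1,\gamma]$. For the last term I set $\hat{p}':=p'/\|p'\|$, let $n$ be the unit vector in the direction of $M'_p\hat{p}'$, and invoke the extremality gap to pick $w\in W$ with $n^t(w-u)\geq\epsilon$. Then
\[
(p')^t(w'-u') \;=\; \|p'\|\cdot\|M'_p\hat{p}'\|\cdot n^t(w-u)\;\geq\; \|p\|\cdot 1\cdot\epsilon,
\]
using the two eigenvalue bounds $\|p'\|\geq\|p\|$ and $\|M'_p\hat{p}'\|\geq 1$. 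Combining, $\|u'-p'\|^2-\|w'-p'\|^2\geq 2\epsilon\|p\|-\gamma^2 M^2$, which is strictly positive as soon as $\|p\|>\rho:=\gamma^2 M^2/(2\epsilon)$; since $\rho$ depends on neither $u$ nor the chosen $w$, this gives the uniform bound required.

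The main subtlety is that the ``right'' hull site $w$ depends on $p$ not through $\hat{p}'$ alone but through $M'_p\hat{p}'$, which itself varies with $p$; the extremality argument must therefore be uniform over \emph{all} directions $n\in\mathbb{S}^1$ simultaneously. The bounded-anisotropy assumption enters twice---via $\|M'_p\hat{p}'\|\geq 1$ to absorb the $M'_p$-dependence in the linear-in-$p$ term, and via $\gamma^2 M^2$ as a uniform bound on the constant term---and both bounds can degenerate without it, consistent with the paper's remark that counter-examples exist when the eigenvalue ratio is unbounded.
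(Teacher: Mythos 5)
Your proof is correct, and it takes a genuinely different, more elementary route than the paper's. The paper argues by contradiction: if a far-away $p$ were closest to an interior site $v$, the empty ellipse $\theta_p(v)$ would have to cross a hull segment $(w_i,w_j)$ without containing its endpoints; the plane is then partitioned into $\{l_{ij},H^{+}_{ij},H^{-}_{ij}\}$ and $p$ is excluded from each piece, using Lem.~\ref{lem:tech} on the $H^{-}$ side and an explicit radius estimate for the transformed circle $\theta'_{p'}(v')$ on the $H^{+}$ side, with $\rho$ assembled from the two displayed bounds (Eqs.~\ref{eqi}--\ref{eqii}). You instead argue directly: the identity $\|u'-p'\|^2-\|w'-p'\|^2=\|u'\|^2-\|w'\|^2+2(p')^t(w'-u')$ isolates a term bounded by $\gamma^2 M^2$ and a term linear in $\|p\|$, and your uniform separation constant $\epsilon=\min_u\min_n\max_w n^t(w-u)>0$ (positive because each $u\in V\setminus W$ lies in the interior of $\mathcal{CH}(W)$, $\mathbb{S}^1$ is compact, and $V$ is finite) makes the linear term dominate. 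The steps check out: the reduction $D(u,p)>D(w,p)\iff\|u'-p'\|>\|w'-p'\|$ is valid since both distances are rescaled by the same factor $\lambda_1(p)^{-1/2}$, and your key identity holds because $\|p'\|\,\|M'_p\hat p'\|\,n^t(w-u)=p^t(M'_p)^2(w-u)=(p')^t(w'-u')$; you also correctly handle the dependence of the optimal hull site on $p$ by making $\epsilon$ uniform over all directions. Your version yields a cleaner explicit threshold $\rho=\gamma^2M^2/(2\epsilon)$, dispenses with the case analysis and with Lem.~\ref{lem:tech}, and extends verbatim to higher dimensions by replacing $\mathbb{S}^1$ with $\mathbb{S}^{d-1}$; what the paper's version buys is uniformity of machinery, since its constants and Lem.~\ref{lem:tech} are shared with Lem.~\ref{lem:contrad}. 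Both proofs use the bounded-anisotropy hypothesis essentially, consistent with the paper's remark that the lemma fails without it.
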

\begin{proof}
Begin by choosing $\rho \ge \max_{v\in V} \|v\|$ so that, since the
origin is in $\mathbf{int\ }{\mathcal{CH}(v)}$, every point of norm greater than $\rho$ is outside $\mathcal{CH}(V)$.
Pick $\rho$ such that it is also:
\begin{equation}\label{eqi}
\rho \ge \displaystyle{\max_{w\in W}\|w\| + \max_{(w_r,w_s)\in\mathcal{B},w_t\in W} \frac{\gamma^2 \max_{x\in\overline{w_r w_s}} \|x - w_t\|}{2 d(w_t, l_{ij})}}
\end{equation}
\begin{equation}\label{eqii}
\rho \ge \displaystyle{\max_{v\in V\setminus W, (w_i,w_j)\in\mathcal{B}} \|v\| + 
	\frac{\gamma^3 \|w_i - w_j\|^2 / 4 - d(v,l_{ij})^2\gamma^{-1}}{ 2 d(v,l_{ij})} } 
\end{equation}
The proof is by contradiction. 

If $p\notin\mathcal{CH}(V)$ is closest to
$v\in\mathbf{int\ }{\mathcal{CH}(V)}$ then, since $p\notin\mathcal{CH}(V)$, the ellipse 
$\theta_p(v)$ must intersect $\partial\mathcal{CH}(V)$. 
Because $p$ is not closer to any site in
$\partial\mathcal{CH}(V)$ than to $v$, then $\theta_p(v)$ cannot contain any
$w\in W$, and therefore $\theta_p(v)$ must intersect some segment $(w_i,w_j)\in\mathcal{B}$
as in Fig.~\ref{fig:wijcap} (to see this note that the intersection between an ellipse
and a line is an open line segment, and that the only way in which this
segment can overlap $\overline{w_i w_j}$ but not contain either $w_i,w_j$ is for
it to be $(\theta_p(v)\cap \overline{w_i w_j})\subseteq\overline{w_i w_j}$).

Consider the two open half spaces $H^{+}_{ij}$ and $H^{-}_{ij}$ on either side of
the supporting line of a boundary segment $(w_i,w_j)\in\mathcal{B}$, where
$H^{+}_{ij}$ is chosen to be on the ``empty" side
($H^{+}_{ij}\cap\mathcal{CH}(V)=\phi$, by Lem.~\ref{lem:halfspace}). 
Clearly, $\{H^{+}_{ij}, H^{-}_{ij}, l_{ij}\}$ partition $\mathbb{R}^2$. 
Since $\theta_p(v)$ must intersect some boundary segment, assume w.l.o.g.\   that
it intersects $(w_i,w_j)$. We show that $p$ cannot be in either
or the three sets in the partition, a contradiction.

Clearly, if $p\in l_{ij}$, then the fact that $\theta_p(v)\cap l_{ij}\subset\overline{w_i w_j}$
implies $p\in\overline{w_i w_j}$, which is precluded by
the fact that $p$ is outside $\mathcal{CH}(V)$. 

Assume $p\in  H^{-}_{ij}$, as in Fig.~\ref{fig:vH-}. 
Since not all sites are colinear, then there is $w_k\in W$ that is not colinear
with $w_i,w_j$, and therefore $w_k\in H^{-}_{ij}$. 
Eq.~\ref{eqi} has been constructed so that, for any choice of
$w_i,w_j,w_k\in W$, Lem.~\ref{lem:tech} implies that,
since $\|p\| > \rho$, if $m\in\overline{w_i w_j}$ is the
closest point to $p$ in $l_{ij}$, then it is $D(w_k,p) < D(m,p)$. 
The convexity of $D(\cdot,p)$, and the fact that
$\theta_p(v)\cap l_{ij}\subset\overline{w_i w_j}$, 
implies $D(w_k,p) <  D(m,p) < D(v,p)$, 
contradicting the fact that the site closest to $p$ is $v$.

Lastly, if we assume that $p\in  H^{+}_{ij}$ (Fig.~\ref{fig:vH+}), 
we obtain a similar contradiction using Eq.~\ref{eqii}. 
The ellipse $\theta_p(v)$ is transformed by $M_p$ into the circle
$\theta'_{p'}(v')$. 
Since it is $\theta_p(v)\cap l_{ij}\subset \overline{w_i w_j}$, 
then it is  $\theta'_{p'}(v')\cap l'_{ij}\subset\overline{w'_i w'_j}$. 
A simple calculation reveals that this can only be true if the radius of
$\theta'_{p'}(v')$ satisfies
\[ \| p' - v'\| \le \frac{ \|w'_i-w'_j\|^2/4 - d(v',l'_{ij})^2 
	}{ 2 d(v',l'_{ij}) } \]
However, from Eq.~\ref{eqii} and the fact that $\|p\| > \rho$, it is 
\begin{eqnarray*}
	\| p' - v'\| &\ge& \|p - v\| \ge \|p\| - \|v\| 
> \frac{\gamma^3 \|w_i - w_j\|^2 / 4 - d(v,l_{ij})^2\gamma^{-1}}{ 2 d(v,l_{ij}) } \\
&\ge& \frac{ \|w'_i-w'_j\|^2/4 - d(v',l'_{ij})^2 }{ 2 d(v',l'_{ij}) }
\end{eqnarray*}
a contradiction. 

Since we have shown that $p\notin l_{ij}\cup  H^{-}_{ij}\cup H^{+}_{ij}$, 
where $\{H^{+}_{ij}, H^{-}_{ij}, l_{ij}\}$ partition $\mathbb{R}^2$, 
this creates the contradiction that we were after. 
Therefore, given the above choice of $\rho$, 
every $p$ with $\|p\| > \rho$ is closer to the boundary sites $W$ than to any
site $V\setminus W$ in the interior of $\mathcal{CH}(V)$. 
\end{proof}

% ------------------------------------------------------------------------------------------ Appendix Simple Graphs

\section*{Appendix D}\label{app:simple}

\begin{figure}[ht]
\centering
\subfigure[]{
\includegraphics[height=3.5cm]{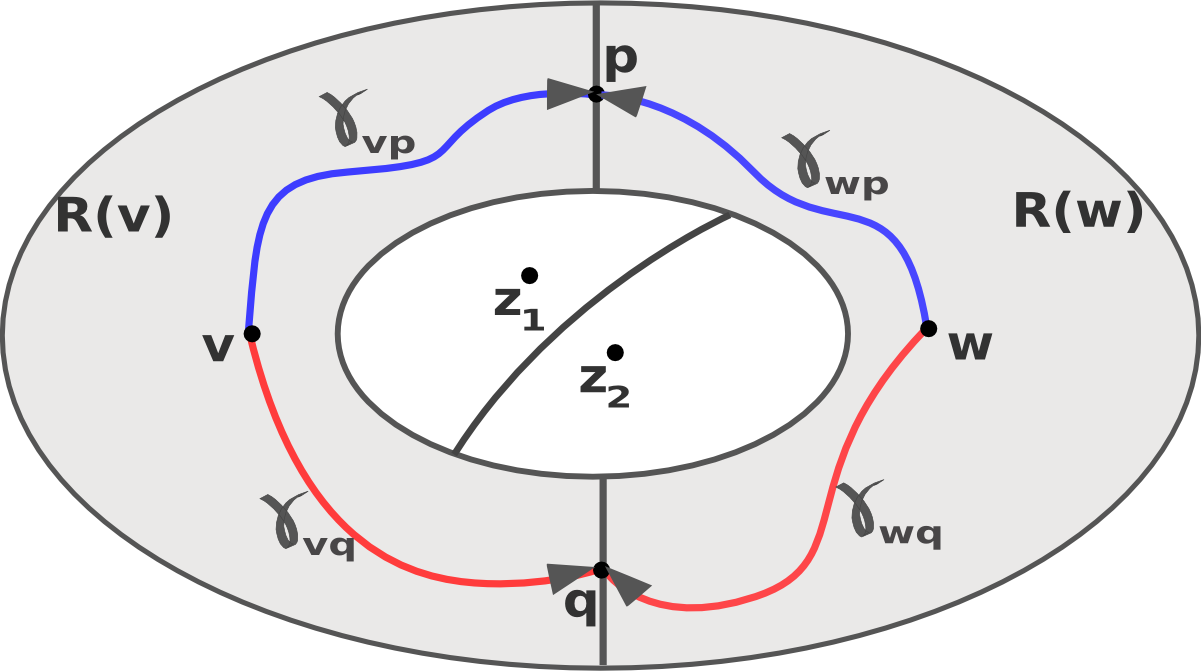}
\label{fig:3a}
}
\subfigure[]{
\includegraphics[height=3.5cm]{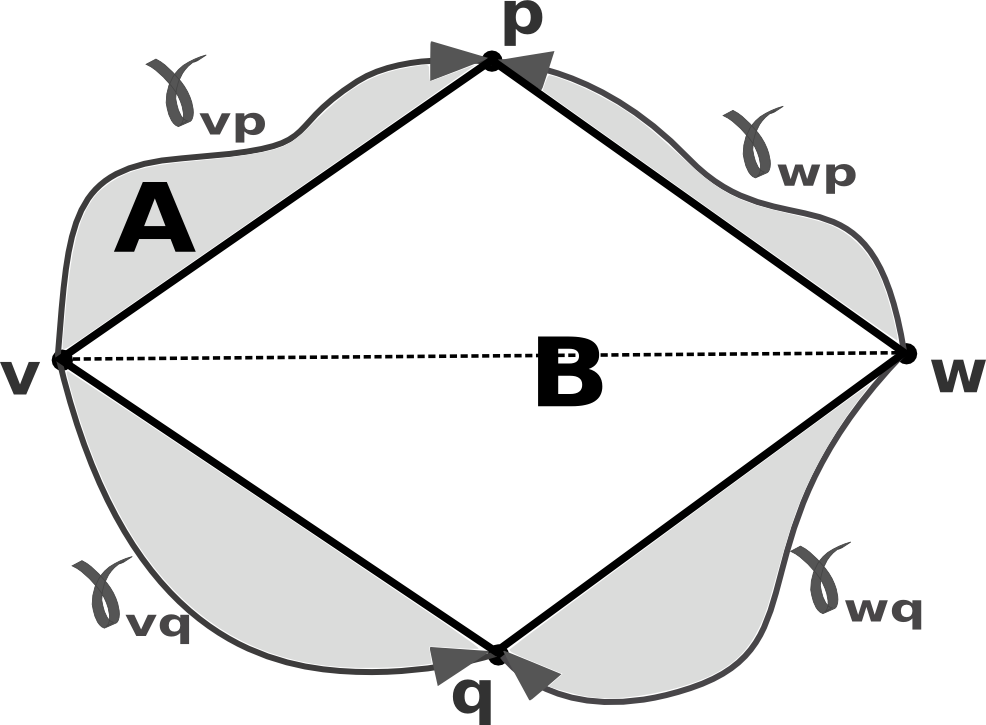}
\label{fig:3b}
}
\label{fig:subfigureExample}
\caption{Diagram for the proof of Lem.~\ref{lem:simple}.}
\end{figure}

%\begin{lemma}\label{lem:simple}
\noindent{\bf Lemma~\ref{lem:simple}}
\emph{
%	There are no repeated edges in ${E}$ (${E}$ is a set, rather than a multiset).
%	If not all sites are colinear then $\tilde{G}$ and $\tilde{P}$ are planar and simple. %(V,E)$ and $(V_p,E_p)$ are simple, planar graphs. 
$\tilde{G}$ and $\tilde{P}$ are simple (have no multi-edges or self-loops).
}
%\end{lemma}
\begin{proof}
%Since $\tilde{P}$ is planar, by duality, so is $\tilde{G}$. 
%Planarity comes from that of $P$ and $G$. 
%$G$ is planar by virtue of being the dual of a planar graph $P$.
%A simple graph is one without multi-edges or self-loops~\cite{graphs}.

By definition, % neither $P$ nor 
$\tilde{G}$ doesn't have self-loops since primal edges in $\tilde{P}$ %(and by duality in $G$)
%correspond to edges in $\tilde{P}$, which 
are connected sets of equidistant points to distinct sites $v,w$, and
therefore they always connect distinct vertices. 
%P doesn't have self-loops, either p_\infty or other
The only possible exception are self-loops in $\tilde{P}$ connecting
$p_\infty$ to itself. 
Because an edge of $\tilde{P}$ is a connected set $\tilde{E}_{v,w}$ of points
equidistant to two sites $v,w$, if the edge is a self-loop of $p_\infty$
then $\tilde{E}_{v,w}$ is unbounded on the two open half-spaces on either side of the
supporting line $l_{v,w}$ of $v,w$ and therefore, by Lem.~\ref{lem:halfspace},
neither half-space can contain any site. Therefore all sites are contained
in $l_{v,w}$, and are thus all colinear, a contradiction. Therefore there are no
self-loops. 

We prove by contradiction that $\tilde{G}$ has no multi-edges, 
and therefore, by duality, that $\tilde{P}$ doesn't either. 

% and therefore, by duality,  that $P$  doesn't either. 
	Consider two edges in the embedding $\tilde{G}$ that connect the same vertices $v,w$
(Fig.~\ref{fig:3a}).
One such edge is shown in blue in Fig.~\ref{fig:3a} and consists of a point $p\in R(v)\cap R(w)$, 
a path $\gamma_{vp}\subset R(v)$ from $v$ to $p$, 
and a path $\gamma_{wp}\subset R(w)$ from $w$ to $p$.
%, where the first path $\gamma_{vp}$ is $\gamma_{vp}\subset R(v)$, and the second $\gamma_{wp}$ is $\gamma_{wp}\subset R(w)$. 
The second edge, shown in red, has analogous structure. 

From the construction of $\tilde{G}$, it is clear that the two edges must not intersect (other
than meeting at the endpoints), and therefore their union is a simple, closed
curve whose complement has, by the Jordan curve theorem, an interior $I$. 
$I$ must contain some regions corresponding to sites other than $v,w$, say $z_1,\dots,z_m$, 
or else the two edges would be connected and thus not counted as separate edges. 

Since the diagram is orphan-free, the regions inside $I$ contain their
generating sites and thus there are sites $z_1,\dots,z_m$ in the
interior of $I$. 
We show that this is not possible, resulting in a contradiction. 
%, and therefore no two edges connect the same vertices. 
%We show that no site may be in the interior of $I$, a contradiction. 

Consider any site $z$ in $\mathbf{int\ }{I}$, as in Fig.~\ref{fig:3b}. 
We first split $\mathbf{int\ }{I}$, as in the figure, into two regions $A$ and
$B$, and show that $z$ cannot belong to either. 
Although $A$ and $B$ may not always be disjoint, it is easy to show that
they cover $\mathbf{int\ }{I}$ (i.e.\  $\mathbf{int\ }{I}\subseteq A\cup B$).

$A$ (shaded in Fig.~\ref{fig:3b}) is the set of points in $I$ that are \emph{inside} some segment
$\overline{vr}$ with $r\in\gamma_{vp}\cup\gamma_{vq}$, or inside
some segment $\overline{wr}$ with $r\in\gamma_{wp}\cup\gamma_{wq}$. 
If a site $z$ belongs to $A$ then, without loss of generality, there is $r_0\in\gamma_{vp}$ 
such that $z\in\overline{vr_0}$ 
(the argument would be the same for $\gamma_{vq},\gamma_{wp},\gamma_{wq}$) 
and so $z=(1-\lambda)v + \lambda r_0$, $\lambda\in(0,1)$. 
Since $r_0\in\gamma_{vp}\subset R(v)$, it is $D(v,r_0) = \displaystyle{\min_{u\in V}
D(u,r_0)}$. %, with $u\in V$. 
However, $z=(1-\lambda)v + \lambda r_0$, $\lambda\in(0,1)$, 
	implies that $D(z,r_0) = \lambda D(v,r_0) < D(v,r_0)$, a contradiction.

$B$ is the interior of the triangles $\widetriangle{vwp}$ and
$\widetriangle{vwq}$. We will only show that $z\notin\mathbf{int\ }{\widetriangle{vwp}}$, but the 
exact same argument proves that $z\notin \mathbf{int\ }{\widetriangle{vwq}}$.
Because $p\in R(v)\cap R(w)$, $p$ is closest, and equidistant 
to $v,w$, and therefore the open ellipse $\theta_p(v)$ cannot contain any site (or else $p$ would
belong to its corresponding Voronoi region instead). 
If some site $z$ is $z\in\mathbf{int\ }{\widetriangle{vwp}}$ then, since
$\mathbf{int\ }{\widetriangle{vwp}}\subset\theta_p(v)$, it is 
$z\in\mathbf{int\ }{\widetriangle{vwp}}\subset\theta_p(v)$, a
contradiction. 

Since $z\notin A\cup B$, and $\mathbf{int\ }{I} \subseteq A\cup B$, then it is
$z\notin \mathbf{int\ }{I}$, which is the contradiction that we were after in the first
place, and therefore no two edges in $\tilde{G}$ connect the same vertices. 
\end{proof}

\noindent{\bf Lemma~\ref{lem:degen}}
\emph{
	$\bar{G}$ has no degenerate (null area) elements. }
\begin{proof}
Because $G$ is dual to $P$, to every degenerate face of $\bar{G}$ with 
vertices $u,v,w$ corresponds a %face of $G$ whose
primal vertex in $\tilde{P}$: a point $c\in
R(u)\cap R(v)\cap R(w)$, and therefore
%$D(v_1,c)=\dots=D(v_m,c)$. 
all of $u,v,w$ are in the boundary of the ellipse 
$\theta_c(u)$ (with $\theta_c(u)=\theta_c(v)=\theta_c(w)$). %, where $\rho=D(v_1,c)=\dots =D(v_m,c)$. 

Since a line can intersect an ellipse at most at two points, 
three or more points cannot be both colinear and in $\partial\theta_c(u)$. 
If $u,v,w$ are in a degenerate face, they are colinear, and in $\partial\theta_c(u)$, a contradiction. 
Therefore no faces of $\bar{G}$ are degenerate. 
\end{proof}

% ------------------------------------------------------------------------------------------ Appendix Boundary

\section*{Appendix E}\label{app:boundary}

\noindent{\bf Lemma~\ref{boundary_easy}}
\emph{	%If not all sites in $V$ are colinear,  then 
To every boundary edge $(v_i,v_j)$ of $G$ corresponds a segment in the boundary of $\mathcal{CH}(V)$.  \emph{$[B\subseteq\mathcal{B}]$}
}
%($B\subset\mathcal{B}$). 
%of a DW diagram $P$ are not colinear, then every boundary edge of
%its dual $G$ is a segment in the boundary of $\mathcal{CH}(V)$ (is of the form $\overline{w_i w_{i\oplus 1}}\in\mathcal{B}$). 
\begin{proof}
By the definition of $G$, 
to every boundary edge $(v_i,v_j)\in B$ corresponds a primal edge in $P$
that is incident to the point at infinity $p_\infty\in V_p$. 
In turn, to this edge corresponds an edge in $\tilde{P}$: 
an unbounded set $\tilde{E}_{v_i,v_j}$ of points closest to $v_i,v_j$.
%We show that $(v_i,v_j)\in\mathcal{B}$

Consider the two open half-planes $H^{+}_{ij}$ and $H^{-}_{ij}$ on either
side of the supporting line $l_{ij}$ of $v_i,v_j$. 
Since, by Lem.~\ref{lem:midpoint}, it is $\tilde{E}_{v_i,v_j}\cap l_{ij} \subseteq
\{(v_i+v_j)/2\}$, it must be either that $\tilde{E}_{v_i,v_j}\cap H^{+}_{ij}$ or
$\tilde{E}_{v_i,v_j}\cap H^{-}_{ij}$ are unbounded. 
By Lem.~\ref{lem:halfspace}, they cannot both be, or else 
$H^{+}_{ij}\cap V=\phi$ and $H^{-}_{ij}\cap V=\phi$, and therefore all sites
would be in $l_{ij}$ (all colinear). 
Assume w.l.o.g.\  that $\tilde{E}_{v_i,v_j}\cap H^{+}_{ij}$ is unbounded. 

By Lem.~\ref{lem:halfspace}, $\tilde{E}_{v_i,v_j}\cap H^{+}_{ij}$ unbounded implies 
$H^{+}_{ij}\cap V=\phi$, and therefore
%the segment $\overline{v_i v_j}$ is in the
%boundary of the convex hull, and thus 
$v_i,v_j\in W$ and $\overline{v_i v_j}\subseteq\partial\mathcal{CH}(V)$  
($v_i,v_j$ are vertices in the boundary of $\mathcal{CH}(V)$). 

It only remains to show that $v_i,v_j$ are consecutive in the sequence $(w_i :
i=1,\dots,m)$. 
We prove this by contradiction. 
If they are not consecutive, since $\overline{v_i
v_j}\subseteq\partial\mathcal{CH}(V)$, 
there must be a site $w\in \overline{v_i v_j}$, $w\neq v_i,v_j$. 
%Assume they are not consecutive. Since $\overline{v_i
%v_j}\subset\partial\mathcal{CH}(V)$, 
%and $v_i,v_j$ are not consequtive vertices in the boundary of
%$\mathcal{CH}(V)$, then, 
%by the definition of $\mathcal{B}$, 
%We can split $\partial \mathcal{CH}(V)$ in two pieces,
%$\overline{v_i v_j}$ and $\partial \mathcal{CH}(V) \setminus \overline{v_i v_j}$. 
%Since $v_i,v_j$ are not consecutive vertices of $\partial \mathcal{CH}(V)$
%there must be a site $w\in \overline{v_i v_j}$. 
Pick some
%Since the set $\tilde{E}_{v_i,v_j}$ of points closest to $v_i,v_j$ isn't empty, pick one such point
$p\in \tilde{E}_{v_i,v_j}$, by definition closest to $v_i,v_j$. 
By the convexity of $D(\cdot,p)$, it is $D(w,p) < D(v_i,p)=D(v_j,p)$, a contradiction. 

Since $v_i,v_j$ are consecutive vertices in $(w_i : i=1,\dots,m)$, then 
%$\partial\mathcal{CH}(V)$, then
$(v_i,v_j)\in\mathcal{B}$. 
\end{proof}

\noindent{\bf Lemma~\ref{lem:mij}}
\emph{
	If $\pi(p)=(w_i+w_j)/2$, with $p\in\mathbb{R}^2$, $w_i,w_j\in W$, then
$D(w_i,p) = D(w_j,p)$. }
\begin{proof}
	Let $m_{ij}=(w_i+w_j)/2$. If $x(\lambda) = (1-\lambda)w_i + \lambda w_j$, 
then the fact that $\pi(p)=m_{ij}$ means that $m_{ij}=x(1/2)$ is closest to $p$
in the support line of $w_i,w_j$. Therefore, $\frac{d D(x(\lambda),p)^2}{d\lambda}(1/2) =
0$, or equivalently:
\begin{eqnarray*}
 2\frac{d D(x(\lambda),p)^2}{d\lambda}(1/2) &=&
		(M_p w_i + M_p w_j - M_p p)^t  (M_p w_j - M_p w_i) \\
	&=& (M_p w_j - M_p p)^t (M_p w_j - M_p p) - (M_p w_i - M_p p)^t (M_p w_i - M_p p) \\
	&=& D(w_j,p)^2 - D(w_i,p)^2 = 0
\end{eqnarray*}
and thus $D(w_i,p) = D(w_j,p)$. 
\end{proof}

\noindent{\bf Lemma~\ref{lem:contrad}}
\emph{
	There is $\rho$ such that, %for all $p\in\mathbb{R}^2$ with $\|p\| > \rho$, 
	for any segment $(w_i,w_j)\in\mathcal{B}$, %in the boundary of $\mathcal{CH}(V)$, 
	every $p\in H^{-}_{ij}$ with $\|p\| > \rho$
    whose closest point in $l_{ij}$ is $m_p\in\overline{w_i w_j}$ is 
	closer to $V\setminus\{w_i,w_j\}$ than to $l_{ij}$. }
%	there is no $p\in H^{-}_{ij}$ with $\|p\| > \rho$
%	whose closest point in $l_{ij}$ is $m\in\overline{w_i w_j}$ and that is 
%	closer to $m$ than to $V\setminus\{w_i,w_j\}$. 
\begin{proof}
%Assume otherwise: for any choice of $\rho$, 
%there is some $(w_i,w_j)\in\mathcal{B}$ such that 
%there is $p\in H^{-}_{ij}$ with $\|p\|>\rho$
%closer to $m_p\in\overline{w_i w_j}$ than to $V\setminus\{w_i,w_j\}$. 

%This means that the set $S$ of points in $H^{-}_{ij}$ closer to
%$m_p\in\overline{w_i w_j}$ than to $V\setminus\{w_i,w_j\}$ is unbounded. 

%Given this, 
For each choice of $(w_i,w_j)\in\mathcal{B}$, define $S_{ij}$ to be the set of
points $p\in H^{-}_{ij}$ whose closest point in $l_{ij}$ is
$m_p\in\overline{w_i w_j}$. 
Pick some $v\in H^{-}_{ij}\cap V\subset V\setminus\{w_i,w_j\}$, which always exists
since not all sites are colinear. 
We can use Lem.~\ref{lem:tech}, 
where $w_i,w_j,l_{ij},S_{ij}$ take the role of $v,w,l,S$, 
to conclude that 
there is a sufficiently large $\rho_{ij}$ such that all $p\in S_{ij}$ with
$\|p\|>\rho$ are closer to $v$ than to $m_p$. 
By defining $\rho$ as the maximum of $\rho_{ij}$ over all
$(w_i,w_j)\in\mathcal{B}$, the lemma follows. 
\end{proof}

\noindent{\bf Lemma~\ref{lem:Sn}}
\emph{
	Every continuous function $F:\mathbb{S}^n\rightarrow\mathbb{S}^n$ that is not onto has a fixed point. 
%\end{lemma}
}
\begin{proof}
	Assume $F$ misses $p\in\mathbb{S}^n$, and let
$\gamma:\mathbb{S}^n\setminus\{p\}\rightarrow D^n$ be a diffeomorphism
between the punctured sphere and the open unit disk. 
Since $\gamma\circ F$ is continuous and $\mathbb{S}^n$ is compact,
then the set $C = (\gamma\circ F) (\mathbb{S}^n)\subset D^n$ is compact.

The function $g:C\rightarrow C$ with $g = \gamma \circ F\circ\gamma^{-1}$ 
is continuous and therefore, by Brouwer's fixed point theorem~\cite{Milnor}, has a fixed point $x\in C$. 
The fact that $(\gamma \circ F\circ \gamma^{-1}) (x) = x$ implies $F(\gamma^{-1}(x)) = \gamma^{-1}(x)$ 
and thus $\gamma^{-1}(x)\in\mathbb{S}^n$ is a fixed point of $F$. 
\end{proof}

\noindent{\bf Lemma~\ref{lem:hard}}
\emph{
To every segment $(w_i,w_j)$ in the boundary of
$\mathcal{CH}(V)$ corresponds a boundary edge of $G$.  \emph{$[B\supseteq\mathcal{B}]$}
%Every boundary edge $(v_i,v_j)$ of $G$ is a segment in the boundary of
%$\mathcal{CH}(V)$               
%($B\subset\mathcal{B}$). 
%	If not all sites in $V$ are colinear, then $\mathcal{B}\subset B$. 
}
\begin{proof}  
%Assume w.l.o.g.\   that the origin in $\mathbb{R}^2$ 
Pick a sufficiently large $\rho > \max_{v\in V}\|v\|$ such that every $p$ with
$\|p\| > \rho$ is outside $\mathcal{CH}(V)$ and such that Lemmas~\ref{lem:VW} and~\ref{lem:contrad} hold. 
%, and recall that the origin in $\mathbb{R}^2\in\mathbf{int\ }{\mathcal{CH}(V)}$. 
For any $\sigma > \rho$, if $A:C(\sigma)\rightarrow C(\sigma)$ is the antipodal map $A(p) = -p$, 
then, by continuity of $\pi$ and %Lem.~\ref{lem:piC0}, and 
by the continuity of $\nu_\sigma$, the function $A\circ\nu_\sigma\circ\pi:C(\sigma)\rightarrow C(\sigma)$ is
continuous. 
By Lemmas~\ref{lem:mij} and~\ref{lem:VW}, if for some $p_{ij}\in C(\sigma)$ it is 
$\pi(p_{ij})= (w_i+w_j)/2$ with
$(w_i,w_j)\in\mathcal{B}$, then $p_{ij}$ is (strictly) closest to
$w_i,w_j$, and therefore belongs to the primal edge $\tilde{E}_{w_i,w_j}$, which implies
that $(w_i,w_j)\in B$. 

Showing that $\mathcal{B}\subseteq B$ now reduces to showing that for all 
$(w_i,w_j)\in\mathcal{B}$, for all $\sigma > \rho$, there is $p_{ij}\in
C(\sigma)$ such that
$\pi(p_{ij})= (w_i+w_j)/2$. 

Assume otherwise: that there is $(w_i,w_j)\in\mathcal{B}$ such that no $p\in
C(\sigma)$ satisfies $\pi(p) = (w_i+w_j)/2$. Then the function 
$A\circ\nu_\sigma\circ\pi:C(\sigma)\rightarrow C(\sigma)$ is not onto and therefore, 
by Lem.~\ref{lem:Sn} (and using the fact that $C(\sigma)$ is isomorphic to $\mathcal{S}^1$), 
it must have a fixed point $q$. 

Since $(A\circ\nu_\sigma\circ\pi)(q) = q$ then $(\nu_\sigma\circ\pi)(q) = -q$. 
Since $x=\pi(q)$ is the closest point to $q$ in $\partial\mathcal{CH}(V)$, 
	there is a segment $(w_k,w_l)\in\mathcal{B}$ such that
$x\in\overline{w_k w_l}$. Consider two open half spaces
$H^{+}_{kl}$ and $H^{-}_{kl}$ on either side of the supporting line of
$w_k,w_l$. Since not all sites are colinear, we can choose these half spaces so that 
$H^{+}_{kl}\cap V=\phi$ and $H^{-}_{kl}\cap V \neq \phi$. 
By the definition of $\nu_\sigma$, and recalling that the chosen origin of
$\mathbb{R}^2$ is in the interior $\mathbf{int\ }{\mathcal{CH}(V)}$ of the convex hull, 
it is $\nu_\sigma(x)\in H^{+}_{kl}$, and $q=-\nu_\sigma(x)\in H^{-}_{kl}$. 
To see this note that the outward-facing normal $n(x)$ is defined so that
$x+n(x)\in H^{+}_{kl}$ %which, combined with the origin being in $\mathbf{int\ }{\mathcal{CH}(V)}$, 
and so $\nu_\sigma(x) = \sigma\cdot n(x) / \|n(x)\| \in  H^{+}_{kl}$.
On the other hand, since the origin is in
$\mathbf{int\ }{\mathcal{CH}(V)}$, the fact that $\nu_\sigma(x)\in H^{+}_{kl}$ 
implies $q=-\nu_\sigma(x)\in  H^{-}_{kl}$. 

Since $\rho$ was chosen sufficiently large for Lem.~\ref{lem:contrad} to
hold, and $q\in  H^{-}_{kl}$, $q$ is closer to some site $v\in
V\setminus\{w_k,w_l\}$ than to $\overline{w_k w_l}$. 
Since $v\in\mathcal{CH}(V)$, this contradicts the fact
that $x=\pi(q)\in\overline{w_k w_l}$ is the closest point to $q$ in
$\mathcal{CH}(V)$, and thus $\mathcal{B}\subseteq B$.
\end{proof}

% ------------------------------------------------------------------------------------------ Appendix Interior

\section*{Appendix F}\label{app:interior}%: Extension to polygonizations}\label{app:ext} %embedded polygonizations}

\noindent{\bf Lemma~\ref{lem:non-negative}}
\emph{
	Given a non-vanishing one-form $\xi^n$, the sum of indices of interior vertices of $\bar{G}$ is non-negative. 
%\end{lemma}
}
\begin{proof}
Given non-vanishing $\xi^n$, 
the index of a face $f$ is $\mathbf{ind}_{_{\xi^n}}(f) \le 0$. 
To see this, assume otherwise: a face with vertices $v_1,\dots,v_m$ around it, and index one satisfies, by the
definition of index and of $\xi^n$, 
$n^t v_1 < \dots < n^t v_m < n^t v_1$ 
(or $n^t v_1 > \dots > n^t v_m > n^t v_1$), a
contradiction. 

%always non-postive since, by the
%definition of $\xi^n$, the half-edges around a face undergo exactly two sign
%changes. 
%Additionally, 
Because, by Lem.~\ref{col:simple-boundary}, the boundary edges of $\bar{G}$ form a (not necessarily
strictly) convex, simple polygonal chain then, given any non-vanishing $\xi^n$, all the boundary vertices have
index zero, except for the ``topmost" ($\underset{v\in V}{\mathbf{argmax\ }} \xi^n(v)$) 
and ``bottommost" ($\underset{v\in V}{\mathbf{argmin\ }} \xi^n(v)$) vertices, which have
index one. 

Since face indices are non-positive, and the sum of indices of
boundary vertices is two then, by Lem.~\ref{lem:ph},
the sum of indices of {interior} vertices must be non-negative. 
\end{proof}

%\begin{figure}
%\centering
%\includegraphics[height=5cm]{index-1.jpg}
%\label{fig:index-1}
%\caption{}
%\end{figure}
%
%\begin{figure}
%\centering
%\includegraphics[height=5cm]{index1.jpg}
%\label{fig:index1}
%\caption{}
%\end{figure}

\noindent{\bf Lemma~\ref{lem:index-1}}
\emph{
If $\bar{G}$ has an edge foldover then there is a non-vanishing one-form $\xi^n$ such
that $\mathbf{ind}_{_{\xi^n}}(v) < 0$ for some interior vertex $v\in V\setminus W$. % has index  $\mathbf{ind}_{_{\xi^n}}(v) < 0$. 
%with respect to $\xi^n$. 
%\end{lemma}
}
\begin{proof}
If edge foldover $e=(v,w)$ is a non-boundary edge, then at least one
of its incident vertices, say $v$ is an interior vertex $v\in
V\setminus W$. % (Fig.~\ref{fig:index-1}). 

Consider the two faces $f_1,f_2$ incident to $e$, which, by definition of
edge foldover, are on the same side of its
supporting line, and the two edges $e_1,e_2$ in $f_1,f_2$ respectively,
incident to $v$. 
Taking the half-line $h$ from $v$ towards $w$ as reference, consider 
the (open) set $L_i\subset\mathbb{S}^1$ of directions ranging from $h$ to
$e_i$. 
The set $L=L_1\cap L_2$ is not empty since, by Lem.~\ref{lem:degen},
$f_1,f_2$ are not degenerate, and therefore neither $e_1,e_2$ are parallel
to $h$. $L$ is also uncountable, since it is a range of the form
\[ L = \{n\in\mathbb{S}^1 : 
			{n_{\perp}}^t {h} < 0 \wedge
			{n_{\perp}}^t {e}_1 > 0 \wedge
			{n_{\perp}}^t {e}_2 > 0\}\]
where ${h},{e}_i$ are the direction vectors of $h,e_i$, and
${n_\perp}$ is one of the two orthogonal directions to $n$, 
chosen to fit the definition. 

Because $L$ is not empty, and it is uncountable, and because the set of edges $E$ is finite,
then there is always some direction $n\in L$ that is not orthogonal to any edge in
$E$. We prove that the non-vanishing one-form $\xi^n$ is such that
$\mathbf{ind}_{_{\xi^n}}(v)<0$. 

The (cyclic) sequence of oriented half-edges {around} $v$ is, without loss of generality,
$\mathcal{S}=\left[(v,v_1);(v,w);(v,v_2);\dots\right]$, and therefore the values of the
one-form around $v$ are $[\xi^n(v_1)-\xi^n(v)$, $\xi^n(w)-\xi^n(v)$,
$\xi^n(v_2)-\xi^n(v)$, $\dots]$. 
By the definition of $n$, it is $\xi^n(v_1)<\xi^n(v)$, $\xi^n(w)>\xi^n(v)$,
and $\xi^n(v_2)<\xi^n(v)$, and therefore 
the number of sign changes in 
the subsequence
$\mathcal{S}'=[(v,v_1);(v,w);(v,v_2)]$ is four. 
Since the number of sign changes in the full sequence $\mathcal{S}$ cannot
be less
than that of its subsequence $\mathcal{S}'$, 
it is $\mathbf{sc}_{_{\xi^n}}(v)>4$ and therefore $\mathbf{ind}_{_{\xi^n}}(v)=1 - \mathbf{sc}_{_{\xi^n}}(v)/2 <0$. 
\end{proof}

\noindent{\bf Lemma~\ref{lem:index1}}
\emph{
Given $n\in\mathbb{S}^1$ and non-vanishing one-form $\xi^n$, if $\bar{G}$
has an interior vertex $v\in V\setminus W$ with index
$\mathbf{ind}_{_{\xi^n}}(v)=1$, then there is a face $f$ of
$G$ that does not satisfy the empty circum-ellipse %(ECE) 
property.  
%
%Given $n\in\mathbb{S}^1$, if $\bar{G}$ has an interior vertex $v\in V\setminus W$ with index
%$\mathbf{ind}_{_{\xi^n}}(v)=1$, then there is a face $f$ of
%$G$ that does not satisfy the empty circum-ellipse %(ECE) 
%property. 
%\end{lemma}
}
\begin{proof}
We must prove that there is a face $f$ all of whose circumscribing ellipses
% that circumscribe the vertices of $f$ 
contain some vertex in its interior. 

Consider the vertex $v\in V\setminus W$ with
$\mathbf{ind}_{_{\xi^n}}(v)=1-\mathbf{sc}_{_{\xi^n}}(v)=1$, and thus
with $\mathbf{sc}_{_{\xi^n}}(v)=0$. % (Fig.~\ref{fig:index1}). 
If $\left[u_1,u_2,\dots,u_m\right]$ is the cyclic sequence of vertices neighboring
$v$, then 
$\mathbf{sc}_{_{\xi^n}}(v)=0$ implies either $\xi^n(u_i)>\xi^n(v)$, $i=1,\dots,m$, or $\xi^n(u_i)<\xi^n(v)$, $i=1,\dots,m$. 
Assume the former w.l.o.g. % that the former is true. 
The line $l=\{x\in\mathbb{R}^2 : n^t x = n^t v\}$, passing through
$v$, strictly separates $v$ from the convex hull of its neighbors. %$\mathcal{CH}(\{u_i : i=1,dots,m\})$. 

Consider the mesh $\bar{G'}$, with the same structure as $\bar{G}$ but in which all the incident faces to $v$
are eliminated. 
We show that, in $\bar{G'}$, the face count of $v$ (the number of faces in which $v$ lies) is at
least one. 
Since $l$ separates $v$ from its neighbors, it also separates all the faces
incident to $v$ from $v$ (except for $v$ itself, which lies on $l$). 
Pick any direction $d\in\mathcal{S}^1$ with $n^t{d} < 0$. 
%Therefore 
The half-line $h$ starting at $v$ with direction $d$ % with $n^t d < 0$
does not intersect any face in $G$ that is incident to $v$. 
Since there is only a finite number of edges and vertices, it is always
possible to choose $h$ not to contain any vertex other than $v$, %so that it does not pass through any vertex (besides $v$), 
and not to be parallel to any edge. 
Since $\mathcal{CH}(V)$ is bounded and $h$ isn't, there is
some point $x\in h$ outside $\mathcal{CH}(V)$, whose face count must be zero. 
Moving from $x$ toward $v$, $h$ crosses $\partial\mathcal{CH}(V)$ only once 
(since $\mathcal{CH}(V)$ is convex), incrementing the face count to one. 
Because every interior edge is incident to exactly two faces, 
every subsequent edge cross (which is transversal because $h$ is not
parallel to any edge) modifies the face count by either zero, two, or
minus two. Since the face count cannot be negative, and it is one at 
$h\cap \partial\mathcal{CH}(V)$, then it must be at least one at $v$. 
Since $\bar{G'}$ does not contain any face incident to $v$, 
%we had removed all incident faces of $v$, 
this implies that there is
some face $f$ not incident to $v$ such that $v\in f$. % and $f$ is not incident to $v$. 

We prove that the face $f$ above cannot satisfy the ECE property. 
Since $v$ is in $f$ but is not incident to it, and $f$ is convex
then, by Carath\'eodory's theorem, $v$ can be written as a 
convex combination $v=\lambda_1 u_1+\lambda_2 u_2+\lambda_3 u_3$,
$\sum_{i=1}^3\lambda_i=1$, $\lambda_i\in(0,1)$ of vertices $u_1,u_2,u_3$
incident to $f$. 
Given an ellipse $\theta$ circumscribing the vertices incident to $f$, 
because $\theta$ is convex, and $u_1,u_2,u_3$ lie in its boundary, 
then any convex combination of them with $\lambda_i\in(0,1)$ must be in the
interior of $\theta$, and therefore $f$ does not satisfy the ECE property. 
\end{proof}

\noindent{\bf Lemma~\ref{lem:ef}}
\emph{
	%The straight-edge triangulated dual 
$\bar{G}$ %of an orphan-free anisotropic Voronoi diagram in
%$\mathbb{R}^2$ 
has no edge foldovers. }
%$\bar{G}$ has no edge foldovers.
\begin{proof}
%By Thm.~\ref{th:ece}, every face in the dual  of an orphan-free %anisotropic Voronoi
%diagram satisfies the ECE property, and every triangle in its triangulation
%does as well. 
Assume $\bar{G}$ has an edge foldover. 
%In that case, 
By Lem.~\ref{lem:index-1}, there is a non-vanishing one-form
$\xi^n$ such that some interior vertex $v\in V\setminus W$ has $\mathbf{ind}_{_{\xi^n}}(v)<0$. 
Since, by Lem.~\ref{lem:non-negative}, the sum of indices of interior vertices is
non-negative,
%and the index of a vertex is always less or equal to one,
then there must be
at least one interior vertex $u\in V\setminus W$ with positive index
$\mathbf{ind}_{_{\xi^n}}(u)=1$. %(since $\mathbf{ind}_{_{\xi^n}}(u)\le 1$). 
In that case, by Lem.~\ref{lem:index1}, there is a face of $\bar{G}$ that does not satisfy the
ECE property, raising a contradiction. 
Therefore $\bar{G}$ has no edge foldovers. 
\end{proof}

\noindent{\bf Lemma~\ref{lem:main-weak}}
\emph{
If its (topological) boundary is simple and closed, then the straight-line dual $\bar{G}$ of an orphan-free diagram, with vertices incident to at most three sites, is an embedded triangulation. 
%The dual of an orphan-free anisotropic Voronoi diagram in $\mathbb{R}^2$, with vertices incident to no more than three regions, 
%%$\mathbb{R}^2$ %, with vertices at the sites, 
%is embedded with vertices at the sites and straight edges. %, and covers the convex hull $\mathcal{CH}(V)$ of the sites. %its vertices. 
}
\begin{proof}
%If all sites are colinear, then the theorem follows trivially from the
%definition of the boundary $\mathcal{B}$ of $\mathcal{CH}(V)$, and 
%from Lem.~\ref{lem:colinear-boundary}, 
%we show that the same claim holds if not all sites are colinear. 

%then Lem.~\ref{lem:ef} guarantees that $G$
%has no edge foldovers. 
Given a point $p\in\mathbf{int\ }{\mathcal{CH}(V)}$ in
the interior of the convex hull of $V$, we show that its \emph{face count}
(the number of faces with straight edges that contain it) is one. 
Consider a line $l$ passing through $x$ that does not pass through any vertex of
$\bar{G}$, and is not parallel to any (straight) edge. 
It is always possible to find such a line since the set of vertices and edges is finite. 
Because the line is unbounded and $\mathcal{CH}(V)$ is bounded, there is a
point $x\in l$ that is outside $\mathcal{CH}(V)$. At this point clearly the
face count is zero. 
Moving from $x$ toward $p$, $l$ crosses the boundary of $\mathcal{CH}(V)$
(and therefore, by Thm.~\ref{thm:boundary}, the boundary of $\bar{G}$) 
only once, since it is a simple convex polygonal chain, incrementing the face
count by one. At every edge crossing (which is transversal by the choice of
line), the face count remains one since, by Lem.~\ref{lem:ef} there are no
edge foldovers, and thus every non-boundary edge is incident to two faces
that lie on either side of its supporting plane. Therefore the face count at
$p$ must be one. 
%WHAT IF $x$ lies on an edge/vertex?!

Since every point inside $\mathcal{CH}(V)$ is covered once by faces in
$\bar{G}$,
and the boundaries of $\bar{G}$ and $\mathcal{CH}(V)$ coincide, then
$\bar{G}$ is a single-cover
of $\mathcal{CH}(V)$. 
Because two straight edges that cross at a non-vertex always generate points with
face count higher than one, then the edges of $\bar{G}$ can only meet at vertices, and
therefore $\bar{G}$ is embedded. 
\end{proof}

% ------------------------------------------------------------------------------------------ Appendix Generic

\section*{Appendix G}\label{app:generic}%: Extension to polygonizations}\label{app:ext} %embedded polygonizations}

	So far we have assumed that Voronoi vertices are incident (equidistant) to no more than three sites. 
	In general, however, Voronoi vertices may be incident to three or more sites, and the straight-edge dual 
	$\bar{G}$ will be a polygonization, instead of a triangulation (simplicial complex). 
	We show here that, even in this, more general case, the polygonization $\bar{G}$ is embedded, and can 
	be easily triangulated into  an embedded simplicial complex. 
	The argument is quite simple and is given in summary. \\

\noindent{\bf Theorem~\ref{th:main}}
\emph{
%The dual of an orphan-free anisotropic Voronoi diagram in $\mathbb{R}^2$, with vertices incident to no more than three regions, 
%%$\mathbb{R}^2$ %, with vertices at the sites, 
%is embedded with vertices at the sites and straight edges. %, and covers the convex hull $\mathcal{CH}(V)$ of the sites. %its vertices. 
If its (topological) boundary is simple and closed, then the straight-line dual $\bar{G}$ of an orphan-free diagram is an  embedded polygonal mesh with convex faces. 
%The dual of an orphan-free anisotropic Voronoi diagram in
%$\mathbb{R}^2$ %, with vertices at the sites, 
%is embedded with vertices at the sites and straight edges, is composed of convex polygons. %, and covers the convex hull $\mathcal{CH}(V)$ of the sites.
}
\begin{proof}
	First assume that only Voronoi vertex $c$ is incident to more than three sites $\{v_i : i=1\dots m\}$, $m>3$, %where the sites are 
	ordered in, say, clockwise order around $c$ (which is possible since they are equidistant to $c$ and therefore 
	lie at the boundary of the ellipse $\theta_c(v_1)$). 
	Note that, since the $\{v_i : i=1\dots m\}$ are equidistant to $c$, then any polygon (or triangle) connecting any of the $\{v_i\}$ 
		satisfies the empty circum-ellipse property (with empty circum-ellipse $\theta_c(v_1)$).  
	The  Voronoi regions $\{R_i : i=1,\dots,m\}$, corresponding to those sites, are the only ones incident to $c$. 
	We can order these regions in clockwise order around $c$: $\{R_{i_k} : k\in K\}$ 
	(for some index sequence $K$ with $k\in K$ satisfying $k\in\{1,\dots,m\}$), 
	and therefore the polygon $\Pi$ dual to $c$ will connect vertices $\{v_{i_k} : k\in K\}$, in that order. 
	Now choose any $v_j$ in $\Pi$ and triangulate $\Pi$ in a fan arrangement centered at $v_j$. 
	As pointed out above, both $\Pi$ and every triangle in the triangulation of $\Pi$ satisfies the empty circum-ellipse property. 
	Therefore the new, triangulated $\bar{G}$ (which replaces $\Pi$ by its triangulation), 
	satisfies all the conditions needed for Theorems~\ref{th:main} and~\ref{thm:boundary} (as well as Lem.~\ref{lem:degen}) to hold, 
	and in particular, it is embedded. 
%	In particular, the triangulated $\bar{G}$ is embedded. 

	The claim now easily follows: any closed polygon with vertices $\{v_{i_k} : k\in K\}$, 
		such that all of its fan-arranged triangulations centered at every one of its vertices is embedded, must be convex 
		(given a polygon with non-convex vertex $w$, triangulate it in a fan centered at a vertex incident to $w$: this triangulation is not embedded). 
	Lastly, since vertices $\{v_i : i=1\dots m\}$ are equidistant from $c$, the only polygon connecting them 
		that is also convex is the one where vertices are arranged in clockwise (counter-clockwise) order around $c$. 

%	Note that this proves that the polygonization dual to the Voronoi diagram is composed of convex polygons, and is embedded, 
%		and therefore that polygons can be trivially triangulated to form an embedded simplicial complex. 
	The argument has been made for a single Voronoi vertex incident to more than three sites, but the same argument applies to diagrams in which there are more such vertices 
	(the key being that the triangulations of each dual polygon are independent of one another). 
\end{proof}

\section*{Appendix H}\label{app:noanisotropy}

We show here that we may drop the bounded anisotropy condition from Thm.~\ref{gamma}, at the expense of loosing the property of covering the convex hull of the sites; 
thus concluding that the dual of an orphan-free anisotropic diagram is an embedded polygonal mesh with convex faces. The proof proceeds by reducing this case to that of Thm.~\ref{gamma}. Because of its simplicity, it is given here in summary. 

Assume given a continuous metric $Q$ defined over $\mathbb{R}^2$, and a set $V$ of sites forming an orphan-free anisotropic Voronoi diagram $V_Q$.
For now, assume that the set of Voronoi vertices is finite, and therefore bounded. 
Since there is $\rho > 0$ such that all Voronoi vertices are inside the origin-centered ball $B(0;\rho)$ of radius $\rho$, we can construct a new metric 
$Q'(p) = Q(p)\cdot(1-\lambda(p)) + I\cdot\lambda(p)$, where 
\[ \lambda(p) = \begin{cases}0,&\text{ if } \|p\| < \rho\\ \|p\|-\rho ,&\text{ if }\rho\le\|p\|\le 2\rho\\ 1 ,&\text{ if }\|p\| > 2\rho\end{cases} \]

Clearly, $Q'$ is continuous, since both $Q$ and $\lambda$ are. Since it is $Q'=I$ outside of $B(0;2\rho)$, and $B(0;2\rho)$ is compact, then 
$Q'$ has bounded anisotropy (ratio of eigenvalues). 
We may then apply Thm.~\ref{gamma} to conclude that the dual $\bar{G}_{Q'}$ of $V_{Q'}$ is an embedded polygonal mesh with convex faces. 
Since $B(0;\rho)$ includes all Voronoi vertices of $V_Q$, and it is $Q'=Q$ inside $B(0;\rho)$, then all Voronoi vertices in $V_Q$ are also in $V_{Q'}$. 
Therefore, by duality, all faces of $\bar{G}_{Q}$ are also in $\bar{G}_{Q'}$. Finally, since $\bar{G}_{Q'}$ is embedded with convex faces, 
and removing faces from a polygonal mesh preserves both properties, then $\bar{G}_{Q}$ is also embedded with convex faces. 

The only assumption we have made is that the set of Voronoi vertices of $V_Q$ is finite,  %We finish by justifying this claim. 
which can be justified as follows. 
%Consider the sets $S_{u,v}$ equidistant to $u,v$, for all $u,v\in V$. 
%Since every vertex equidistant to sites $u,v,w$ is in the intersection of three such sets $S_{u,v}$, $S_{v,w}$, $S_{w,u}$, 
%then it suffices to consider all possible intersections between sets $S_{u,v}$ with $u,v\in V$. 
%Since $V$ is finite, the number of possible sets $S_{u,v}$ is finite. 
%In principle, however, the number of possible intersections between them may be infinite. 
%Because, by Corollary~\ref{col:Pij}, each edge $\tilde{E}_{u,v}$ (a connected piece of $S_{u,v}$) is unique, 
%and by the induction argument of Sec.~\ref{sec:setup}, each edge connects two Voronoi vertices, then 

First, we show that Voronoi vertices cannot be isolated, and are always incident to some Voronoi edge. Consider a Voronoi vertex $c_{ijk}$ closest and equidistant to sites $v_i,v_j,v_k$ (the reasoning for vertices equidistant to more sites is analogous). 
%For simplicity, we can transform 
Since $Q$, and therefore $D$, is continuous, there is a (possibly small) closed ball $B(c_{ijk},\epsilon)$ centered at $c_{ijk}$ where all points are \emph{strictly} closer to $v_i,v_j,v_k$ than to any other site, and therefore in $B(c_{ijk},\epsilon)$, the Voronoi diagram of $V$ is the same as that of $\{v_i,v_j,v_k\}$. 
We consider the Voronoi diagrams $\{v_i,v_j\}_Q$ and $\{v_j,v_k\}_Q$ of $\{v_i,v_j\}$ and $\{v_j,v_k\}$ (restricted to  $B(c_{ijk},\epsilon)$), respectively. 
By Corollary~\ref{cor:Pij}, the only Voronoi edge of $\{v_i,v_j\}_Q$ is connected, and likewise for $\{v_j,v_k\}_Q$. Since $c_{ijk}$ is in their intersection, then it must be incident to both, and therefore it is not isolated.

By Corollary~\ref{cor:Pij}, the edges (connected sets of points equidistant to two given sites) of $V_Q$ are unique (thus finite since $V$ is finite), and by the induction argument of Sec.~\ref{sec:setup}, each edge connects two Voronoi vertices. If there are $n_e$ Voronoi edges then, since Voronoi vertices are always incident to some Voronoi edge, and each edge connects two vertices, there cannot be more than $2 n_e$ Voronoi vertices. In particular, the number of Voronoi vertices is finite. This concludes the proof. 

Finally, note that, from the way we have defined the dual $\bar{G}_Q$, and the fact that it is embedded, we can conclude that Voronoi vertices are unique (otherwise, multiple Voronoi vertices would result in coincident dual polygons, which would contradict the fact that $\bar{G}_Q$ is embedded). 
This, together with Corollary~\ref{cor:Pij}, and the  orphan-freedom assumption means that, as stated in Corollary~\ref{uniqueVD}, orphan-freedom is sufficient to ensure that all the elements (vertices, edges, faces) of the primal diagram are unique.

%
%
%\noindent{\bf Theorem~\ref{uniqueVD}}
%\emph{An orphan-free anisotropic Voronoi diagram is composed of unique (connected) vertices and edges. }
%%The dual of an orphan-free Voronoi diagram is an embedded polygonal mesh with convex faces. }
%

\end{document}